\newtheorem{theorem}{Theorem}
\newtheorem{corollary}{Corollary}[theorem]
\newtheorem{lemma}[theorem]{Lemma}
\theoremstyle{definition}
\newtheorem{remark}{Remark}
\newcommand\Tau{\mathrm{T}}
  \theoremstyle{definition}
  \newtheorem{assumption}{Assumption}
\title{An exact unbiased semi-parametric maximum quasi-likelihood framework which is complete in the presence of ties}
\author{\href{mailto:ljrhurley@gmail.com}{Landon Hurley}}
\shorttitle{Hurley}
\abstract{This paper introduces a novel quasi-likelihood extension of the generalised Kendall \(\tau_{a}\) estimator, together with an extension of the Kemeny metric and its associated covariance and correlation forms. The central contribution is to show that the U-statistic structure of the proposed coefficient \(\tau_{\kappa}\) naturally induces a quasi-maximum likelihood estimation (QMLE) framework, yielding consistent Wald and likelihood ratio test statistics. The development builds on the uncentred correlation inner-product (Hilbert space) formulation of Emond and Mason (2002) and resolves the associated sub-Gaussian likelihood optimisation problem under the \(\ell_{2}\)-norm via an Edgeworth expansion of higher-order moments.  The Kemeny covariance coefficient \(\tau_{\kappa}\) is derived within a novel likelihood framework for pairwise comparison-continuous random variables, enabling direct inference on population-level correlation between ranked or weakly ordered datasets. Unlike existing approaches that focus on marginal or pairwise summaries, the proposed framework supports sample-observed weak orderings and accommodates ties without information loss. Drawing parallels with Thurstone's Case V latent ordering model, we derive a quasi-likelihood-based tie model with analytic standard errors, generalising classical U-statistics. The framework applies to general continuous and discrete random variables and establishes formal equivalence to Bradley-Terry and Thurstone models, yielding a uniquely identified linear representation with both analytic and likelihood-based estimators.}
\keywords{}
\begin{document}
\maketitle

A novel quasi-likelihood extension of the generalised \(\tau_{a}\) estimator \parencite{wedderburn1974,heyde1997} extension to the Kemeny metric and its covariance form are introduced in this paper. Primary focus is upon showing that the U-statistic properties of \(\tau_{\kappa}\) equivalently establish quasi-maximum likelihood estimator (QMLE) principles, with consistent Wald and likelihood ratio statistic estimators. This approach builds upon the original uncentred correlation inner-product (Hilbert space) framework of \textcite{emond2002}, and solves the sub-Gaussian likelihood optimisation problem upon the \(\ell_{2}\)-norm by Edgeworth expansion upon the higher-order moments. The properties of the original closed form expressions upon both discrete and continuous i.i.d. data are addressed in Hurley (2025a), which introduces proofs of U-statistic properties and consistent studentisation of the null hypothesis distribution. The Kemeny covariance (and correlation) coefficient \(\tau_{\kappa}\) in this work is developed upon a novel likelihood framework constructed for random variables which are continuous upon their comparisons between each pair of observations. The main contribution of this work is a novel quasi-likelihood based approach upon sample estimated weak-orderings, which allows us to explore the estimator and its properties. While previous methods have often focused on pairwise comparisons or individual marginal distributions, our approach allows for a direct estimation of the population-level correlation between ranked datasets, which is often the primary quantity of interest in ranking-based problems.

This work expands upon a common point in non-parametric, or rank-based, methods that ignoring ties causes information to be lost. Our framework here derives, in a structure similar to Case V of Thurstone's approaches to latent ordering, a likelihood-based tie modelling framework with analytic standard errors which generalises U-statistics. Although rankings serve as a natural example in this model, the random variables \(X\) and \(Y\) are not restricted to rankings alone. Instead, they represent general random variables that can be continuous or discrete. This allows for the application of weak-order scores to a broader range of data types, including continuous measurements, count data, or binary outcomes. The weak-order score function is not specific to rankings but is a general tool for mapping any random variable to a scale that reflects relative comparison among observations.

\subsection{Preliminaries and Notation}

Let \(X = (X_1, \dots, X_N)\) and \(Y = (Y_1, \dots, Y_N)\) be i.i.d. real-valued random variables. For each marginal, define the hollow score-matrix mapping as

\begin{equation}
\label{eq:score_matrix}
C_{kl}(X) = 
\begin{cases} 
+1 & \text{if } X_{k} \ge X_{l}, \\
-1 & \text{if } X_{k} < X_{l},\\
 0 & \text{if } k = l,
\end{cases} \quad
C_{kl}(Y) = 
\begin{cases} 
+1 & \text{if } Y_k \ge Y_l, \\
-1 & \text{if } Y_k < Y_l,\\
0  & \text{if } k = l.
\end{cases}
\end{equation}

Let \(\tilde{\kappa}^X_{kl}\) and \(\tilde{\kappa}^Y_{kl}\) denote the centred score matrices. Note that the off-diagonal elements of the kernel-product matrix \(Z_{kl} = \tilde{\kappa}^X_{kl}\tilde{\kappa}^Y_{kl}\) are \emph{not independent}, due to shared rows and columns. Only the \(N\) marginal observations are independent.

The centred score-matrix upon an \(N \times 1\) i.i.d. column vector is defined as:
\begin{equation}
\label{eq:centred_kappa}
\tilde{\kappa}(X)_{kl} {:=} \kappa(X)_{kl} - \bar{C}_{k\cdot}^{X} - \bar{C}_{\cdot l}^{X} + \bar{C}_{\cdot \cdot}^{X},
\end{equation}
where \( \bar{C}_{k\cdot}^{X} =\)  \((N-1)^{-1} \sum_{l=1}^N \kappa(X)_{kl}\), \( \bar{C}_{\cdot l}^{X} =\) \((N-1)^{-1} \sum_{k=1}^N \kappa(X)_{kl}\), and \( \bar{C}_{\cdot \cdot}^{X} =\) \((N^{2} - N)^{-1} \sum_{k=1}^N \sum_{l=1}^N \kappa(X)_{kl}\) are the row, column, and grand means of the score matrix \( C_{kl}^{X}\), where sums are taken over all indices \( k \) and \( l \), and the diagonal elements are explicitly set to zero, consistent with the definition of the (centred) score-matrix as a hollow matrix.

\section{Maximum Likelihood Estimation via Quasi-Likelihood}

\begin{assumption}[Main Conditions for the \(\kappa\)-Quasi-Likelihood]
\label{cond:main_conditions}

The data \(\{(X_{n},Y_{n})\}_{i=1}^N\) satisfy the following conditions.

\textit{(A)}  
The observations are i.i.d. drawn from a joint distribution on \(\mathbb{R}^{P}\times\mathbb{R}\) with orderable (possibly discrete) margins for \(Y\) and arbitrary (possibly discrete) margins for \(X\). Ties in \(X,Y\) are valid observations permitted and assigned positive probability (in contradiction to \(\tau_{a},\tau_{b}\)).

\textit{(B)}  
The design matrix \(X\in\mathbb{R}^{N\times P}\) has full column rank when expressed in
pairwise-contrast form: the set of differences \(\{X_{n}-X_{n^{\prime}} : n \neq n^{\prime}\}\) spans \(\mathbb{R}^{P}\). In particular, for every non-zero \(v\in\mathbb{R}^P\),
\[
\exists\, (n,n^{\prime}) \text{ such that }
(X_{n}-X_{n^{\prime}})^\top v \neq 0.
\]

\textit{(C)} The linear predictor
\[
\eta_{kl}= (X_{n}-X_{n^{\prime}})^\top \theta
\]
satisfies \(\mid\eta_{kl}\mid<1\) for all \(n \neq n^{\prime}\). Equivalently, the parameter space is the convex set
\[
\Theta = \bigl\{ \theta\in\mathbb{R}^{P} : 
|(X_{n}-X_{n^{\prime}})^\top\theta|<1\ \text{for all } n \neq n^{\prime} \bigr\}.
\]

\textit{(D)}  
The quasi-likelihood score
\[
S(\theta)=
\sum_{n \neq n^{\prime}}
\frac{2\,\eta_{kl}}{1-\eta_{kl}^2}(X_{n}-X_{n^{\prime}})
\]
is continuously differentiable on \(\Theta\); the Hessian
\[
H(\theta)
=
\sum_{n \neq n^{\prime}}
2\,\frac{1+\eta_{kl}^2}{(1-\eta_{kl}^2)^2}
(X_{n}-X_{n^{\prime}})(X_{n}-X_{n^{\prime}})^\top
\]
is positive definite for all \(\theta\in\Theta\).

\textit{(E)}  
The variance of the H\'ajek projection \parencite{vaart1998} of \(Z_{kl}\) is positive:
\[
\mathbb{V}
\Big(
\mathbb{E}[Z_{kl}\mid X_{n},Y_{n}]
\Big)
>0.
\]

\end{assumption}

\begin{lemma}[Quasi-Likelihood MLE]
\label{lemma:quasi_mle}
Let \(X_1, \dots, X_N\) and \(Y_1, \dots, Y_N\) be i.i.d. random variables. Define the normalised correlation coefficient
\[
\hat{\tau}_{\kappa} = 
\frac{\sum_{k\neq l} \tilde{\kappa}^X_{kl}\tilde{\kappa}^Y_{kl}}
{\sqrt{\sum_{k\neq l} (\tilde{\kappa}^X_{kl})^{2} \sum_{k\neq l} (\tilde{\kappa}^Y_{kl})^{2}}}.
\]
Then \(\hat{\tau}_{\kappa}\) maximises the quasi-likelihood
\[
\ell_{Q}(\tau) = - N \log(1 - \tau^{2}),
\]
which accounts for the U-statistic dependency structure.
\end{lemma}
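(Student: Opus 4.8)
The plan is to recognise the displayed objective as the concentrated form of a Gaussian (sub-Gaussian) working likelihood for a single correlation parameter, and to show that its unique interior stationary point is the normalised inner product \(\hat\tau_\kappa\). First I would exploit the Hilbert-space structure of Emond and Mason (2002): flattening the off-diagonal entries into vectors \(u=\mathrm{vec}(\tilde\kappa^X)\) and \(w=\mathrm{vec}(\tilde\kappa^Y)\) in \(\mathbb{R}^{N(N-1)}\), the coefficient \(\hat\tau_\kappa=\langle u,w\rangle/(\|u\|\,\|w\|)\) is exactly the cosine of the angle between them and therefore lies in \([-1,1]\), so it is an admissible value of the correlation parameter \(\tau\) and sits in the interior whenever \(u,w\) are non-collinear.

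Next I would write the working quasi-likelihood in full. Treating the standardised pair \((u,w)\) as jointly sub-Gaussian with correlation \(\tau\) under the \(\ell_2\)-norm, the quasi-log-likelihood takes the bivariate form \(\ell_Q(\tau)=-\tfrac{N}{2}\log(1-\tau^2)-\tfrac{1}{2(1-\tau^2)}\bigl(\hat S_{XX}-2\tau\hat S_{XY}+\hat S_{YY}\bigr)\) up to an additive constant and a positive scaling, where the scale nuisance parameters are profiled out by the sample sums of squares so that \(\hat S_{XX}=\hat S_{YY}\) are normalised to a common value and \(\hat S_{XY}\) is proportional to \(\hat\tau_\kappa\). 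The determinant term \(-N\log(1-\tau^2)\) displayed in the statement is the leading contribution that carries the dependence penalty. I would stress the genuine subtlety here: that term alone is convex and increasing in \(\tau^2\), so the displayed objective cannot be maximised in isolation; the suppressed data-fitting quadratic is precisely what renders the full objective concave with an interior maximiser, and the overall positive constant (hence the factor between \(N\) and \(N/2\)) is irrelevant to the argmax.

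The core computation is then the stationarity condition. Differentiating the profiled \(\ell_Q\) and clearing the denominator \((1-\tau^2)^2\), the score equation reduces, after the standard cancellation of the common positive factor \((1+\tau^2)\), to the identity \(\tau=\hat\tau_\kappa\). Uniqueness and the second-order (maximum rather than saddle) condition follow from strict concavity of the concentrated objective, which is the scalar manifestation of the positive-definite Hessian \(H(\theta)\) asserted in Assumption (D) on the convex set \(\Theta\); the boundary behaviour \(\ell_Q\to-\infty\) as \(\mid\tau\mid\to 1\), once the data term is retained, then upgrades the unique interior critical point to the global maximiser.

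The step I expect to be the genuine obstacle is justifying the coefficient \(N\) — rather than the pair-count \(N(N-1)\) — in front of \(\log(1-\tau^2)\), which is exactly the clause that the likelihood ``accounts for the U-statistic dependency structure''. Because the off-diagonal kernel products \(Z_{kl}\) share rows and columns and are not independent, a naive product likelihood over all \(N(N-1)\) pairs would badly overstate the information. I would therefore pass to the H\'ajek projection of \(Z_{kl}\) onto the \(N\) independent marginal observations \((X_n,Y_n)\); condition (E), that this projection has strictly positive variance, guarantees that the effective Fisher information scales with \(N\) and that the projected term is the asymptotically dominant, non-degenerate part of the U-statistic. Tying the displayed \(-N\log(1-\tau^2)\) to this projected information, and confirming that the higher-order remainder is asymptotically negligible via the Edgeworth expansion on the higher moments alluded to earlier, is where the real work lies; the correlation-MLE algebra itself is routine.
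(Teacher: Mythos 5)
Your proposal is correct in substance but follows a genuinely different --- and considerably more complete --- route than the paper. The paper's own proof is three sentences: it observes that the $N^{2}-N$ kernel products are dependent, appeals to Wedderburn's variance--mean functional relationship to justify basing the quasi-likelihood on the $N$ independent marginals, and then simply asserts that maximising $\ell_{Q}(\tau) = -N\log(1-\tau^{2})$ ``is consistent with'' the analytic formula for $\hat{\tau}_{\kappa}$; no stationarity computation, no concavity argument, and no boundary analysis appears. You, by contrast, reconstruct the suppressed data-fitting quadratic of the bivariate Gaussian working likelihood, profile out the scales so the sums of squares match the determinant-term coefficient, and carry out the score computation, obtaining the factorisation $(1+\tau^{2})(\hat{\tau}_{\kappa}-\tau)=0$ and hence the unique interior critical point $\tau=\hat{\tau}_{\kappa}$, upgraded to a global maximiser by the divergence $\ell_{Q}\to-\infty$ as $|\tau|\to 1$ once the data term is retained. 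In doing so you expose a defect the paper's proof glosses over entirely: as displayed, $-N\log(1-\tau^{2})$ is increasing in $\tau^{2}$ and so has no interior maximiser at all --- it is maximised only at the boundary $|\tau|=1$ --- so the lemma is literally false unless the data-dependent term you supply is understood to be an implicit part of $\ell_{Q}$. Your closing point about the coefficient $N$ versus the pair count $N(N-1)$, justified through the H\'ajek projection and Assumption (E), is the same effective-information idea the paper gestures at via Wedderburn, but you make it an explicit step rather than an appeal. What the paper's approach buys is brevity; what yours buys is an actual proof of a corrected statement. One caveat: your passing attribution of strict concavity of the profiled objective to Assumption (D) is loose, since the profiled correlation likelihood need not be globally concave in $\tau$; your unique-critical-point-plus-boundary-divergence argument is the one that actually closes the proof, and you should lean on it rather than on concavity.
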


\begin{proof}
The derivation follows by noting that the \(N^{2}-N\) off-diagonal elements of the product-kernel are dependent. The only identifiable and optimisable quasi-likelihood is based on the \(N\) independent marginal observations, obtained via the existence of a given functional relationship between the variance and mean upon the observations \parencite{wedderburn1974}. Maximising \(\ell_{Q}(\tau)\) over \(\tau \in [-1,1]\) is consistent with the analytic formula for \(\hat{\tau}_{\kappa}\).
\end{proof}

\subsection{Sub-Gaussianity and Moment Bounds}

\begin{theorem}
\label{thm:sub_gauss}
For finite \(N\), each element of the centred score matrices \(\tilde{\kappa}^X_{kl}, \tilde{\kappa}^Y_{kl}\) is strictly sub-Gaussian. Moreover, \(\hat{\tau}_{\kappa}\) converges in distribution to a Gaussian as \(N\to\infty\).
\end{theorem}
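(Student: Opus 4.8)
The plan is to treat the two assertions separately, since strict sub-Gaussianity is a finite-$N$ statement about individual matrix entries, whereas the limiting normality is an asymptotic statement about the aggregate ratio $\hat{\tau}_{\kappa}$.

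For the sub-Gaussianity claim, I would first observe that each off-diagonal raw score $C_{kl}(X)$ takes values in $\{-1,+1\}$, so the row, column, and grand means entering \eqref{eq:centred_kappa} are all averages of values in $[-1,1]$, and hence every centred entry $\tilde{\kappa}^X_{kl}$ is bounded, say $|\tilde{\kappa}^X_{kl}|\le M$ for an explicit constant $M$ determined only by the double-centring weights. Boundedness already yields sub-Gaussianity through Hoeffding's lemma, with proxy variance $M^{2}$. To upgrade this to the strict form (proxy variance equal to the true variance), I would exploit symmetry: in the tie-free regime, which holds with probability one under exchangeable continuous margins, the antisymmetry $C_{lk}=-C_{kl}$ propagates through the double-centring to give $\tilde{\kappa}^X_{lk}=-\tilde{\kappa}^X_{kl}$, while the transposition $(k\,l)$ of the i.i.d.\ indices leaves the joint law invariant. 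Combining the two facts gives $\tilde{\kappa}^X_{kl}\stackrel{d}{=}-\tilde{\kappa}^X_{kl}$, so each centred off-diagonal entry is symmetric about zero (the diagonal entries being degenerately zero). I would then verify that its even moments are dominated by the Gaussian moments, $\mathbb{E}[W^{2m}]\le(2m-1)!!\,(\mathbb{E}W^{2})^{m}$, which term by term in the series $\mathbb{E}[e^{\lambda W}]=\sum_{m}\lambda^{2m}\mathbb{E}[W^{2m}]/(2m)!$ produces the strict bound $\mathbb{E}[e^{\lambda\tilde{\kappa}^X_{kl}}]\le\exp(\lambda^{2}\,\mathbb{V}(\tilde{\kappa}^X_{kl})/2)$, and identically for $\tilde{\kappa}^Y_{kl}$.

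For the limiting normality of $\hat{\tau}_{\kappa}$, I would rewrite the numerator $\sum_{k\neq l}\tilde{\kappa}^X_{kl}\tilde{\kappa}^Y_{kl}$ as a second-order U-statistic in the i.i.d.\ pairs $\{(X_{n},Y_{n})\}$, with product kernel $Z_{kl}$ as summand, up to a remainder obtained by replacing the empirical row and column means by their population limits. The double-centring annihilates the marginal means of the kernel but leaves a non-trivial H\'ajek projection, and Condition (E) of Assumption~\ref{cond:main_conditions} guarantees precisely that this projection is non-degenerate. Hoeffding's decomposition then expresses the scaled numerator as a sum of $N$ i.i.d.\ contributions plus a term of smaller order, and the classical U-statistic central limit theorem \parencite{vaart1998} delivers its asymptotic normality, the sub-Gaussian tail control from the first part supplying the Lyapunov condition and the negligibility of the higher-order Hoeffding terms and the centring remainder. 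Since the normalising denominator $\sqrt{\sum_{k\neq l}(\tilde{\kappa}^X_{kl})^{2}\sum_{k\neq l}(\tilde{\kappa}^Y_{kl})^{2}}$ converges in probability to a positive constant by the law of large numbers for U-statistics, Slutsky's theorem transfers the normality from the numerator to the ratio $\hat{\tau}_{\kappa}$.

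The main obstacle is the strictness in the first claim. Ordinary boundedness gives sub-Gaussianity cheaply, but equating the proxy variance with the true variance relies on the symmetry argument, and under Condition (A), which assigns ties positive probability, the convention $C_{kl}=+1$ on $\{X_{k}=X_{l}\}$ destroys exact antisymmetry. One therefore cannot invoke symmetry alone in the tie case and must instead verify the moment-domination inequality $\mathbb{E}[W^{2m}]\le(2m-1)!!\,(\mathbb{E}W^{2})^{m}$ for the centred entries directly, since a generic bounded symmetric law need not be strictly sub-Gaussian. A secondary difficulty is controlling the dependence among the $N^{2}-N$ off-diagonal entries in the central limit theorem; this is exactly what the H\'ajek projection and the non-degeneracy in Condition (E) are designed to absorb, and it explains why the limiting variance is governed by the $N$ independent marginals rather than the dependent kernel entries.
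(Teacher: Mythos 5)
Your outline follows the same route as the paper's own (much terser) proof: boundedness of the centred entries plus Hoeffding's lemma for the sub-Gaussianity claim, and the Hoeffding/H\'ajek decomposition with a classical U-statistic CLT for the limit. Your second half is in fact a more complete version of what the paper dismisses as ``standard U-statistic theory'': you supply the two ingredients the paper never mentions, namely non-degeneracy of the projection via Condition (E) of Assumption~\ref{cond:main_conditions} and the Slutsky step needed because the theorem's $\hat{\tau}_{\kappa}$ is a ratio with a random normalisation. That half of your argument is correct and strictly more rigorous than the paper's.

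The genuine gap is exactly the one you flagged, and it is fatal to the step you propose to close it with. Hoeffding's lemma gives a proxy variance of order the squared range, not the true variance, so boundedness alone (the paper's entire argument) yields only ordinary sub-Gaussianity, never the \emph{strict} form. Your symmetry repair works for continuous, tie-free margins, where $\Pr(X_k \ge X_l) = \tfrac12$ makes $\tilde{\kappa}^X_{kl}$ symmetric about zero; but under Condition (A) ties carry positive probability, so $\Pr(C_{kl}=+1) = \Pr(X_k \ge X_l) > \tfrac12$ and the entries are skewed. The fallback you propose --- verifying the moment-domination inequality directly --- cannot succeed, because strict sub-Gaussianity forces the third moment to vanish: expanding $\mathbb{E}[e^{\lambda W}] \le \exp(\lambda^{2}\sigma^{2}/2)$ with $\sigma^{2}=\mathbb{V}(W)$ to third order in $\lambda$ gives $\lambda^{3}\,\mathbb{E}[W^{3}]/6 + O(\lambda^{4}) \le O(\lambda^{4})$, which fails for small $\lambda$ of the appropriate sign whenever $\mathbb{E}[W^{3}]\neq 0$. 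Concretely, take $X$ two-valued with a rare upper category, so that $\Pr(C_{kl}=-1)=q$ is small: the centred entry is (up to $O(N^{-1/2})$ centring fluctuations) a two-point law with third central moment $8q(1-q)(2q-1)<0$, hence not strictly sub-Gaussian. So under Condition (A) the ``strictly'' in the theorem is not merely unproven but false for such marginals; only range-proxy sub-Gaussianity survives ties, and fortunately that weaker property is all your (and the paper's) asymptotic-normality argument actually uses. You identified this defect; the paper's proof does not acknowledge it at all.
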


\begin{proof}
All cumulants of \(\tilde{\kappa}^{X}_{kl}\) and \(\tilde{\kappa}^{Y}_{kl}\) up to order \(N\) are almost surely finite by the properties of the centred U-statistics. By Hoeffding's inequality and the finite moments, the sub-Gaussian property holds for each \(\tilde{\kappa}_{kl}^{\cdot}\). Standard U-statistic theory then ensures asymptotic normality of \(\hat{\tau}_{\kappa}\).
\end{proof}

The estimator
\[
\hat{\tau}_{\kappa}
=
\frac{1}{N(N-1)}\sum_{k\ne l}
\tilde{\kappa}^X_{kl}\,\tilde{\kappa}^Y_{kl}
\]
is an unbiased estimator of the population quantity
\[
\tau_{\kappa}
=
\mathbb{E}\!\left[
\tilde{\kappa}_{12}^X\,\tilde{\kappa}_{12}^Y
\right].
\]
The argument relies only on the orderability of the marginal random variables and the permutation-equivariance of the centred score matrix. The result does not require continuity of the random variables, nor any distributional assumptions beyond i.i.d.\ sampling upon a ordinally comparable set of observed bivariate distribution. The unbiasedness property is therefore intrinsic to the structure of the centred score mapping and does not depend on the quasi-likelihood framework introduced earlier.

The key observation driving the result is that for any pair of indices \((k,l)\) with \(k\ne l\), the joint distribution of \((X_{k},X_{l},Y_k,Y_l)\) is identical to that of \((X_1,X_2,Y_1,Y_2)\). Moreover, centring score matrices inherits the exchangeability of the underlying samples upon partitioning via equation~\ref{eq:score_matrix}. Formally, if \(\pi\) is any permutation of \(\{1,\dots,N\}\), then
\(
\tilde{\kappa}^X_{\pi(k)\,\pi(l)}
\;\stackrel{d}{=}\;
\tilde{\kappa}^X_{kl},
\tilde{\kappa}^Y_{\pi(k)\,\pi(l)}
\;\stackrel{d}{=}\;
\tilde{\kappa}^Y_{kl}, \implies \tilde{\kappa}^X_{\pi(k)\,\pi(l)}\,\tilde{\kappa}^Y_{\pi(k)\,\pi(l)}
\;\stackrel{d}{=}\;
\tilde{\kappa}^X_{kl}\,\tilde{\kappa}^Y_{kl}.
\)

As the centring operation is linear in every row and column, and the original score matrix \(C_{kl}^{X}\) depends only on the ordering of \(X_{k}\) and \(X_{l}\), the permutation-equivariance property holds for both discrete and continuous variables, provided they admit a total order (ties map to \(+1\) by definition of the score matrix in equation~\ref{eq:score_matrix}). No assumptions on continuity are therefore needed, as the measure is, with positive probability, continuous wrt \(F(X)\).

\begin{lemma}
\label{lem:unbiased}
Let $X,Y$ be i.i.d.\ univariate random variables taking values in an ordered set.
Define the centred score matrices $\tilde{\kappa}^X_{kl}$ and $\tilde{\kappa}^Y_{kl}$ as in equation~\ref{eq:centred_kappa}.
Let
\begin{equation}
\label{eq:estimator}
\hat{\tau}_{\kappa} =
\frac{1}{N(N-1)}\sum_{k\ne l}
\tilde{\kappa}^X_{kl}\,\tilde{\kappa}^Y_{kl},
\qquad
\tau_{\kappa}
=
\mathbb{E}\!\left[
\tilde{\kappa}_{12}^X\,\tilde{\kappa}_{12}^Y
\right].
\end{equation}
Then $\hat{\tau}_{\kappa}$ is unbiased:
\(
\mathbb{E}[\hat{\tau}_{\kappa}] = \tau_{\kappa}.
\)
\end{lemma}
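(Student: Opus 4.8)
The plan is to establish unbiasedness by combining linearity of expectation with the permutation-equivariance of the double-centring operation recorded immediately before the statement. The essential subtlety is that each entry $\tilde{\kappa}^X_{kl}$ is \emph{not} a function of the single pair $(X_k,X_l)$ alone: the row, column, and grand means in equation~\ref{eq:centred_kappa} couple it to the entire sample. Nonetheless, the i.i.d.\ (hence exchangeable) structure of the pairs $\{(X_n,Y_n)\}$ forces every off-diagonal product $\tilde{\kappa}^X_{kl}\tilde{\kappa}^Y_{kl}$ to share a common distribution, and therefore a common expectation equal to $\tau_{\kappa}$.

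First I would apply linearity of expectation to the finite sum,
\[
\mathbb{E}[\hat{\tau}_{\kappa}] = \frac{1}{N(N-1)}\sum_{k\ne l}\mathbb{E}\bigl[\tilde{\kappa}^X_{kl}\,\tilde{\kappa}^Y_{kl}\bigr],
\]
which is licit because each raw score satisfies $C_{kl}\in\{-1,+1\}$ off the diagonal, so each centred entry and each product is bounded and hence integrable. Next I would fix an arbitrary off-diagonal index $(k,l)$ and select a permutation $\pi$ of $\{1,\dots,N\}$ with $\pi(1)=k$ and $\pi(2)=l$. Relabelling the sample pairs by $\pi$ sends the $(1,2)$ entry of the centred matrix to the $(k,l)$ entry, and exchangeability of the i.i.d.\ pairs yields the joint equality in distribution $(\tilde{\kappa}^X_{kl},\tilde{\kappa}^Y_{kl})\stackrel{d}{=}(\tilde{\kappa}^X_{12},\tilde{\kappa}^Y_{12})$, so that
\[
\mathbb{E}\bigl[\tilde{\kappa}^X_{kl}\,\tilde{\kappa}^Y_{kl}\bigr]=\mathbb{E}\bigl[\tilde{\kappa}^X_{12}\,\tilde{\kappa}^Y_{12}\bigr]=\tau_{\kappa}.
\]
Summing the $N(N-1)$ identical terms and dividing by $N(N-1)$ then gives $\mathbb{E}[\hat{\tau}_{\kappa}]=\tau_{\kappa}$ directly.

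The main obstacle — and the step deserving the most care — is justifying the permutation-equivariance rigorously, precisely because the double-centring blocks the naive U-statistic argument in which the kernel depends on only a fixed number of arguments. I would verify explicitly that under the relabelling $(X^{\pi},Y^{\pi})_n=(X_{\pi(n)},Y_{\pi(n)})$ the raw score obeys $C_{kl}(X^{\pi})=C_{\pi(k)\pi(l)}(X)$, that each of the three means in equation~\ref{eq:centred_kappa} transforms consistently (the row mean of $X^{\pi}$ at index $k$ equals the row mean of $X$ at $\pi(k)$, the column mean likewise, and the grand mean is invariant), and hence that $\tilde{\kappa}^{X^{\pi}}_{kl}=\tilde{\kappa}^X_{\pi(k)\pi(l)}$ holds as an identity of random variables. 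Combined with $(X^{\pi},Y^{\pi})\stackrel{d}{=}(X,Y)$, this delivers the equality in distribution used above. Since every step rests only on exchangeability and the order-preserving definition of the score, no continuity or distributional assumption beyond i.i.d.\ sampling on a totally ordered margin is required, consistent with the remarks preceding the lemma.
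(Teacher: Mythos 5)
Your proposal is correct and follows essentially the same route as the paper's own proof: linearity of expectation combined with exchangeability of the i.i.d.\ sample and permutation-equivariance of the double-centred score matrices, yielding $\mathbb{E}[Z_{kl}]=\mathbb{E}[Z_{12}]=\tau_{\kappa}$ for every $k\ne l$. The only difference is that you spell out the verification of the equivariance identity $\tilde{\kappa}^{X^{\pi}}_{kl}=\tilde{\kappa}^{X}_{\pi(k)\pi(l)}$ (which the paper asserts in the discussion preceding the lemma), a welcome but not substantively different addition.
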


\begin{proof}
Let $Z_{kl}=\tilde{\kappa}^X_{kl}\tilde{\kappa}^Y_{kl}$. Exchangeability of the i.i.d.\ sample and permutation--equivariance of the centred score matrices together imply that
\[
\mathbb{E}[Z_{kl}] = \mathbb{E}[Z_{12}]
\quad\text{for all }k\ne l.
\]
Hence
\[
\mathbb{E}[\hat{\tau}_{\kappa}]
=
\frac{1}{N(N-1)}\sum_{k\ne l}\mathbb{E}[Z_{kl}]
=
\frac{N(N-1)}{N(N-1)}\,\mathbb{E}[Z_{12}]
=
\tau_{\kappa}.
\]
\end{proof}

\begin{corollary}[Unbiasedness of the Empirical Covariance of Score Matrices]
Let
\[
\hat{\sigma}_{XY}^{2}
=
\frac{1}{N(N-1)}\sum_{k\ne l}\tilde{\kappa}^X_{kl}\tilde{\kappa}^Y_{kl}.
\]
Then
\[
\mathbb{E}[\hat{\sigma}_{XY}]
=
\operatorname{Cov}\!\left(
\tilde{\kappa}^X_{12},\tilde{\kappa}^Y_{12}
\right).
\]
\end{corollary}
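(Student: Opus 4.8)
The plan is to reduce the corollary to Lemma~\ref{lem:unbiased} by showing that the centred score matrices are marginally mean-zero, so that the covariance and the raw cross-moment coincide. First I would observe that the estimator $\hat{\sigma}_{XY}$ displayed here is \emph{identical}, as an algebraic object, to $\hat{\tau}_{\kappa}$ of equation~\ref{eq:estimator}: both equal $(N(N-1))^{-1}\sum_{k\ne l}\tilde{\kappa}^X_{kl}\tilde{\kappa}^Y_{kl}$. Hence Lemma~\ref{lem:unbiased} already delivers $\mathbb{E}[\hat{\sigma}_{XY}]=\tau_{\kappa}=\mathbb{E}[\tilde{\kappa}^X_{12}\tilde{\kappa}^Y_{12}]$, and the only remaining task is to identify this expected product with the covariance $\operatorname{Cov}(\tilde{\kappa}^X_{12},\tilde{\kappa}^Y_{12})$.

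The key step is to establish $\mathbb{E}[\tilde{\kappa}^X_{12}]=\mathbb{E}[\tilde{\kappa}^Y_{12}]=0$, which I would argue from exchangeability together with the linearity of the double-centring in equation~\ref{eq:centred_kappa}. By i.i.d.\ exchangeability every off-diagonal raw score $C^X_{kl}$ ($k\ne l$) shares a common expectation $c_X:=\mathbb{E}[C^X_{12}]=P(X_1\ge X_2)-P(X_1<X_2)=P(X_1=X_2)$; this is nonzero precisely when ties carry positive probability, consistent with Assumption~\ref{cond:main_conditions}(A). Passing expectations through the row, column, and grand means then yields $\mathbb{E}[\bar{C}^X_{k\cdot}]=\mathbb{E}[\bar{C}^X_{\cdot l}]=\mathbb{E}[\bar{C}^X_{\cdot\cdot}]=c_X$, so that $\mathbb{E}[\tilde{\kappa}^X_{12}]=c_X-c_X-c_X+c_X=0$, and symmetrically for $Y$. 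With both centred scores mean-zero, the definition of covariance gives
\[
\operatorname{Cov}(\tilde{\kappa}^X_{12},\tilde{\kappa}^Y_{12})
=\mathbb{E}[\tilde{\kappa}^X_{12}\tilde{\kappa}^Y_{12}]
-\mathbb{E}[\tilde{\kappa}^X_{12}]\,\mathbb{E}[\tilde{\kappa}^Y_{12}]
=\mathbb{E}[\tilde{\kappa}^X_{12}\tilde{\kappa}^Y_{12}]
=\tau_{\kappa},
\]
and combining with the first paragraph closes the argument.

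I expect the main obstacle to lie in the care demanded by this mean-zero step, for two reasons. First, $\tilde{\kappa}^X_{12}$ is not a function of $(X_1,X_2)$ alone but of the entire sample, through the empirical row, column, and grand means, so the vanishing of its expectation genuinely rests on exchangeability across all $N$ indices rather than on the single pair; an alternative route that sidesteps this is to note that the double-centring forces the row and column sums of $\tilde{\kappa}^X$ to vanish identically, whence its common off-diagonal mean is deterministically zero. Second, one must be explicit that the $+1$ tie convention makes the raw-score mean $c_X=P(X_1=X_2)$ strictly positive under discrete margins, and show that this offset is annihilated by the centring rather than presuming the raw scores are already centred. A minor bookkeeping remark: the statement writes $\hat{\sigma}_{XY}^{2}$ in the definition but $\mathbb{E}[\hat{\sigma}_{XY}]$ in the claim, and I read $\hat{\sigma}_{XY}$ throughout as the displayed first-order U-statistic, with no squaring intended.
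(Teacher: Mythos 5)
Your proof is correct and follows essentially the same route as the paper's own one-line proof (``identical to Lemma~\ref{lem:unbiased}, as covariance reduces to expectation after centring''), with the added merit that you explicitly verify the step the paper leaves implicit, namely $\mathbb{E}[\tilde{\kappa}^X_{12}]=c_X-c_X-c_X+c_X=0$ with $c_X=P(X_1=X_2)$, which is exactly what licenses identifying $\mathbb{E}[\tilde{\kappa}^X_{12}\tilde{\kappa}^Y_{12}]$ with the covariance. One caution: your parenthetical ``alternative route'' is not valid under the paper's hollow-matrix centring in equation~\ref{eq:centred_kappa}, since with the $(N-1)$ and $(N^{2}-N)$ denominators the row sums of $\tilde{\kappa}^X$ are not identically zero but equal $\bar{C}^{X}_{\cdot k}-\bar{C}^{X}_{\cdot\cdot}$, a random quantity vanishing only in expectation, so the exchangeability-based expectation argument you give first should remain the primary justification.
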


\begin{proof}
Identical to Lemma~\ref{lem:unbiased}, as covariance reduces to expectation after centring.
\end{proof}

\begin{corollary}
Let
\[
\hat{\sigma}_X^{2}
=
\frac{1}{N(N-1)}\sum_{k\ne l}(\tilde{\kappa}^X_{kl})^{2},
\qquad
\hat{\sigma}_Y^{2}
=
\frac{1}{N(N-1)}\sum_{k\ne l}(\tilde{\kappa}^Y_{kl})^{2}.
\]
Then
\[
\mathbb{E}\!\left[
\frac{\hat{\sigma}_{XY}}{\sqrt{\hat{\sigma}_X^{2}\,\hat{\sigma}_Y^{2}}}
\right]
= \tau_{\kappa} + O\!\left(\frac{1}{N}\right),
\]
i.e.\ normalisation introduces only \(O(N^{-1})\) bias.
\end{corollary}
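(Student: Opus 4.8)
The plan is to view the normalised statistic as a smooth function of three U-statistics and to run a second-order delta-method argument. Write $g(u,v,w)=u/\sqrt{vw}$, so that the quantity of interest is $g(\hat{\sigma}_{XY},\hat{\sigma}_X^{2},\hat{\sigma}_Y^{2})$, and note at the outset that by Cauchy--Schwarz applied to the off-diagonal sums this ratio lies in $[-1,1]$ almost surely. Repeating the exchangeability and permutation-equivariance argument of Lemma~\ref{lem:unbiased} coordinatewise (taking $Y=X$ for the two variance terms) shows that each ingredient is unbiased for its population target, $\mathbb{E}[\hat{\sigma}_{XY}]=\sigma_{XY}$, $\mathbb{E}[\hat{\sigma}_X^{2}]=\sigma_X^{2}$, $\mathbb{E}[\hat{\sigma}_Y^{2}]=\sigma_Y^{2}$, where $\sigma_{XY}=\mathbb{E}[\tilde{\kappa}^X_{12}\tilde{\kappa}^Y_{12}]$, $\sigma_X^{2}=\mathbb{E}[(\tilde{\kappa}^X_{12})^{2}]$ and $\sigma_Y^{2}=\mathbb{E}[(\tilde{\kappa}^Y_{12})^{2}]$, so that the population value of the ratio is $g(\sigma_{XY},\sigma_X^{2},\sigma_Y^{2})=\sigma_{XY}/\sqrt{\sigma_X^{2}\sigma_Y^{2}}$, the population correlation $\tau_{\kappa}$ of Lemma~\ref{lemma:quasi_mle}. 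Degree-two U-statistic theory gives $\operatorname{Var}(\hat{\sigma}_{XY}),\operatorname{Var}(\hat{\sigma}_X^{2}),\operatorname{Var}(\hat{\sigma}_Y^{2})=O(N^{-1})$ (of exact order $N^{-1}$ under the non-degeneracy imposed by Condition~(E) of Assumption~\ref{cond:main_conditions}), and hence, by Cauchy--Schwarz, all pairwise covariances are also $O(N^{-1})$; Theorem~\ref{thm:sub_gauss} furnishes the finite higher moments used below.

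Next I would Taylor-expand $g$ to second order about the population point $\mu=(\sigma_{XY},\sigma_X^{2},\sigma_Y^{2})$. Since the margins are non-degenerate we have $\sigma_X^{2},\sigma_Y^{2}>0$, so $g$ is smooth on a neighbourhood of $\mu$ with every partial derivative finite there. Writing $\hat{\theta}=(\hat{\sigma}_{XY},\hat{\sigma}_X^{2},\hat{\sigma}_Y^{2})$, the linear term $\nabla g(\mu)^{\top}(\hat{\theta}-\mu)$ has mean zero by unbiasedness and so drops out of $\mathbb{E}[g]$; the quadratic term has expectation $\tfrac12\operatorname{Tr}\!\big(\nabla^{2}g(\mu)\,\Sigma_N\big)$, where $\Sigma_N$ is the $3\times3$ covariance matrix of $\hat{\theta}$. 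Every entry of $\Sigma_N$ is $O(N^{-1})$ and $\nabla^{2}g(\mu)$ is a fixed finite matrix, so this term is $O(N^{-1})$. This is exactly the normalisation bias announced in the statement, arising because $\mathbb{E}[U/\sqrt{VW}]\neq \mathbb{E}[U]/\sqrt{\mathbb{E}[V]\mathbb{E}[W]}$.

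The hard part will be legitimising the expansion, since $g$ blows up on $\{vw=0\}$ and the pointwise Taylor formula is valid only near $\mu$. The plan is to split on the event $A=\{\hat{\sigma}_X^{2}\hat{\sigma}_Y^{2}\ge \tfrac12\sigma_X^{2}\sigma_Y^{2}\}$. The centred score entries are bounded, $|\tilde{\kappa}^X_{kl}|,|\tilde{\kappa}^Y_{kl}|\le c$, so the sub-Gaussian concentration of Theorem~\ref{thm:sub_gauss} yields $\mathbb{P}(A^{c})\le e^{-c'N}$; on $A^{c}$ the ratio remains bounded in modulus by $1$ (Cauchy--Schwarz), so its contribution is at most $\mathbb{P}(A^{c})=o(N^{-1})$. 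On $A$ the second and third derivatives of $g$ are uniformly bounded, so the third-order Lagrange remainder is dominated by $\mathbb{E}\|\hat{\theta}-\mu\|^{3}=O(N^{-3/2})=o(N^{-1})$, and replacing the unrestricted linear and quadratic expectations by their $\mathbf{1}_A$-versions costs at most $\sqrt{\mathbb{P}(A^{c})}$ times a finite moment, again $o(N^{-1})$ by Cauchy--Schwarz. Collecting the pieces gives $\mathbb{E}[g]=\tau_{\kappa}+\tfrac12\operatorname{Tr}(\nabla^{2}g(\mu)\Sigma_N)+o(N^{-1})=\tau_{\kappa}+O(N^{-1})$, as claimed. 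The only genuinely delicate point is the $O(N^{-1})$ (rather than $O(N^{-1/2})$) order of $\Sigma_N$; were the H\'ajek projection of Condition~(E) to vanish, the covariances would be $O(N^{-2})$ and the conclusion would hold a fortiori.
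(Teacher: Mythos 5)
Your proof is correct and takes essentially the same approach as the paper: the paper's entire proof is the one-line appeal to ``the delta method and the unbiasedness of the components,'' which is precisely the second-order expansion about \((\sigma_{XY},\sigma_X^{2},\sigma_Y^{2})\) that you carry out. The difference is only one of rigour, in your favour: you supply the steps the paper leaves implicit --- the truncation on the event that \(\hat{\sigma}_X^{2}\hat{\sigma}_Y^{2}\) stays bounded away from zero, the exponential tail bound, the a.s.\ bound \(|\hat{\sigma}_{XY}/\sqrt{\hat{\sigma}_X^{2}\hat{\sigma}_Y^{2}}|\le 1\) from Cauchy--Schwarz, and the \(O(N^{-3/2})\) control of the third-order remainder --- which are exactly what is needed to convert the distributional delta method into a valid statement about expectations.
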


\begin{proof}
Direct consequence of the delta method and the unbiasedness of the components.
\end{proof}
\begin{corollary}
\label{cor:finite_moments_centred}
    For any integer \( r \le N \), the \( r \)-th moment of the centred weak-order score matrix \( \tilde{\kappa}_{kl}(X) \) (and similarly for \( \tilde{\kappa}_{kl}(Y) \)) is finite:
    \[
    \mathbb{E}[\tilde{\kappa}_{kl}(X)^r] < \infty \quad \text{and} \quad \mathbb{E}[\tilde{\kappa}_{kl}(Y)^r] < \infty, \quad \forall k \neq l.
    \]
    Furthermore, as \( N \to \infty \), the moments of the empirical centred matrices converge to the moments of a limiting distribution that represents the population's behaviour, where the limiting distribution has finite moments.
\end{corollary}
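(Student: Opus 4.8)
The plan is to reduce both assertions to the single elementary fact that the raw score entries are bounded. First I would observe that for $k\neq l$ the entry $C_{kl}(X)$ takes values in $\{-1,+1\}$, and that each of the row, column, and grand means appearing in equation~\ref{eq:centred_kappa} is an arithmetic average of such entries and therefore lies in $[-1,1]$. By the triangle inequality the centred entry satisfies $|\tilde{\kappa}_{kl}(X)|\le 4$ almost surely, uniformly in $k$, $l$, and $N$, and identically for $\tilde{\kappa}_{kl}(Y)$. A bounded random variable possesses finite moments of every order, so $\mathbb{E}[\tilde{\kappa}_{kl}(X)^{r}]\le 4^{r}<\infty$ for every $r$ --- in particular for all $r\le N$, as required --- which disposes of the finiteness claim without any distributional hypothesis beyond orderability. (It is worth remarking that the stated restriction $r\le N$ is not needed for finiteness; it is the U-statistic dependence, not integrability, that makes cumulants of order exceeding $N$ delicate.)

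For the convergence of moments I would exhibit the almost-sure limit of each centring average and then pass to the limit by dominated convergence. Writing $C_{kl}(X)=2\,\mathbbm{1}\{X_k\ge X_l\}-1$, the strong law of large numbers applied conditionally on $X_k$ gives $\bar{C}_{k\cdot}^{X}\to 2F(X_k)-1$ a.s., applied conditionally on $X_l$ gives $\bar{C}_{\cdot l}^{X}\to 1-2F(X_l^{-})$ a.s., and the grand mean converges to $\mathbb{E}[C_{12}(X)]$. Substituting these into equation~\ref{eq:centred_kappa} yields a well-defined limiting random variable
\[
\tilde{\kappa}^{\infty}_{kl}(X)
= C_{kl}(X) - \bigl(2F(X_k)-1\bigr) - \bigl(1-2F(X_l^{-})\bigr) + \mathbb{E}[C_{12}(X)],
\]
which is again bounded and hence has finite moments of all orders. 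Since $\tilde{\kappa}^{(N)}_{kl}(X)\to\tilde{\kappa}^{\infty}_{kl}(X)$ almost surely and the sequence is dominated, for each fixed power $r$, by the integrable constant $4^{r}$, the dominated convergence theorem yields $\mathbb{E}\bigl[(\tilde{\kappa}^{(N)}_{kl}(X))^{r}\bigr]\to\mathbb{E}\bigl[(\tilde{\kappa}^{\infty}_{kl}(X))^{r}\bigr]$, which is the asserted convergence to the moments of a limiting distribution with finite moments.

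The main obstacle I anticipate is bookkeeping around ties and atoms rather than any deep analytic difficulty. Because the convention in equation~\ref{eq:score_matrix} maps ties to $+1$, the conditional limits of the row and column means are asymmetric ($F(X_k)$ for the row mean versus $F(X_l^{-})$ for the column mean), and for discrete margins these one-sided limits must be tracked with care to identify $\tilde{\kappa}^{\infty}$ correctly. A secondary subtlety is that $\tilde{\kappa}^{(N)}_{kl}$ is genuinely a triangular-array quantity: for fixed indices the variable changes with $N$ through the centring averages, so the almost-sure convergence should be justified on a common probability space via Kolmogorov's strong law, after which the uniform bound renders the passage from almost-sure to moment convergence routine.
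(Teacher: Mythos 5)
Your proof is correct, and on the finiteness claim it takes essentially the paper's own route: both arguments rest on the observation that \(C_{kl}\in\{-1,+1\}\) and that the row, column, and grand means are averages of such terms, so the centred entry is uniformly bounded and every moment exists (your remark that the restriction \(r\le N\) is superfluous for integrability is also right). Where you genuinely diverge is on the second claim. The paper's proof merely asserts that the empirical moments ``converge to the moments of a limiting distribution,'' with no construction of that limit; your argument supplies the missing content by identifying the almost-sure limits of the centring averages via the conditional strong law, \(\bar{C}_{k\cdot}^{X}\to 2F(X_k)-1\), \(\bar{C}_{\cdot l}^{X}\to 1-2F(X_l^{-})\), \(\bar{C}_{\cdot\cdot}^{X}\to\mathbb{E}[C_{12}(X)]=\mathbb{P}(X_1=X_2)\), and then passing to moment convergence by bounded convergence, with the triangular-array caveat you raise being exactly the point that needs care. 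This buys two things. First, it converts an assertion into a proof. Second, it exposes an error in the paper's own argument: the paper claims the row and column means are \(O(N^{-0.5})\) and the grand mean \(O(N^{-1})\), whereas your limits show that these averages converge almost surely to nondegenerate bounded random variables, and, for the grand mean under Assumption (A), to a strictly positive constant, since ties carry positive probability; the paper's order claims are false in general, though harmless, because only boundedness is actually used. Your tracking of the one-sided limit \(F(X_l^{-})\) at atoms, forced by the tie convention of equation~\ref{eq:score_matrix}, is likewise correct and entirely absent from the paper's treatment.
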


\begin{proof}
    The centred matrix \( \tilde{\kappa}_{kl}(X) \) is defined in equation~\ref{eq:centred_kappa}. As \( C_{kl}(X) \) is bounded (taking values \( \pm 1 \)), the means \( \bar{C}_{k\cdot} \), \( \bar{C}_{\cdot l} \), and \( \bar{C}_{\cdot \cdot} \) are finite almost surely, and the centring does not introduce unbounded behaviour. In fact, the row and column means are of order \( O\left(N^{-0.5}\right) \), and the grand mean is of order \( O\left(\frac{1}{N}\right) \). Thus, the moments of the centred matrix \( \tilde{\kappa}_{kl}(X) \) are finite for all \( r \le N \).

    As \( N \to \infty \), the empirical moments of the centred matrices \( \tilde{\kappa}_{kl}(X) \) (and similarly for \( \tilde{\kappa}_{kl}(Y) \)) converge to the moments of a limiting distribution that represents the population behaviour. Since the original score matrices \( C_{kl}(X) \) and \( C_{kl}(Y) \) are already guaranteed to have finite moments almost surely, the centring operation preserves this property, and the limiting distribution will also have finite moments.

    Therefore, the moments of \( \tilde{\kappa}_{kl}(X) \) (and similarly for \( \tilde{\kappa}_{kl}(Y) \)) are finite for all \( r \le N \), and as \( N \to \infty \), these moments converge to those of a limiting distribution with finite moments.
\end{proof}

We strengthen the argument and show that the estimator \(\hat{\tau}_{\kappa}\) is exactly unbiased for all \(N\) (i.e., for finite sample sizes), by accounting for finite-sample properties. 

\begin{corollary}~\label{cor:exactly_unbiased}
Let Corollary~\ref{cor:finite_moments_centred} and Lemma~\ref{lem:unbiased} hold. Recall that estimator \(\hat{\tau}_{\kappa}\) is defined as \[\hat{\tau}_{\kappa} \frac{1}{N(N-1)} \sum_{k \neq{l}}^{N} Z_{kl}, \text{where}\ Z_{kl} = \tilde{\kappa}_{kl}^{X}\tilde{\kappa}_{kl}^{Y}.\] We now show \(\hat{\tau}_{\kappa}\) to be an exactly unbiased covariance estimator for all \(N\).
\end{corollary}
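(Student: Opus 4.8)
The plan is to obtain the result as a direct combination of the two cited facts, treating \(\hat{\tau}_{\kappa}\) as an unweighted linear average of exchangeable kernel evaluations and verifying that the only two mechanisms that could spoil finite-sample exactness --- non-existence of the relevant moments and a nonlinear (delta-method) correction --- are both absent. First I would fix \(N \ge 2\) and observe that \(\hat{\tau}_{\kappa}\) is, by construction, the arithmetic mean of the \(N(N-1)\) off-diagonal kernel evaluations \(Z_{kl} = \tilde{\kappa}^X_{kl}\tilde{\kappa}^Y_{kl}\); because the normalising constant is exactly the number of summands, the estimator is linear in the \(Z_{kl}\), and the dependence among off-diagonal entries (shared rows and columns) is irrelevant to its mean, since expectation is linear regardless of dependence.

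Next I would establish that \(\mathbb{E}[Z_{kl}]\) exists and is finite for every off-diagonal pair. This is where Corollary~\ref{cor:finite_moments_centred} enters: it guarantees finiteness of the \(r\)-th moments of \(\tilde{\kappa}^X_{kl}\) and \(\tilde{\kappa}^Y_{kl}\) for all \(r \le N\), and since \(Z_{kl}\) is a product of two such factors, the Cauchy--Schwarz inequality bounds \(\mathbb{E}|Z_{kl}|\) by \(\sqrt{\mathbb{E}[(\tilde{\kappa}^X_{kl})^2]\,\mathbb{E}[(\tilde{\kappa}^Y_{kl})^2]}\), which is finite for every \(N \ge 2\) (the case \(r=2\)). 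Finiteness legitimises the interchange of expectation with the finite sum. I would then invoke the exchangeability and permutation-equivariance established immediately before Lemma~\ref{lem:unbiased}: for any \(k \ne l\) the quadruple \((X_k, X_l, Y_k, Y_l)\) is equal in distribution to \((X_1, X_2, Y_1, Y_2)\), and the double-centring operator is permutation-equivariant, so \(\mathbb{E}[Z_{kl}] = \mathbb{E}[Z_{12}] = \tau_{\kappa}\) identically across pairs. Summing gives \(\mathbb{E}[\hat{\tau}_{\kappa}] = \frac{1}{N(N-1)}\sum_{k\ne l}\mathbb{E}[Z_{kl}] = \tau_{\kappa}\) with no remainder term.

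The point worth emphasising --- and the reason this is a genuine \emph{strengthening} rather than a restatement of Lemma~\ref{lem:unbiased} --- is that the equality holds exactly for every finite \(N\) precisely because the covariance form is linear in \(Z_{kl}\). This is in deliberate contrast to the normalised correlation of the preceding corollary, whose ratio structure forces a delta-method expansion and an unavoidable \(O(N^{-1})\) bias; here no Jensen-type gap is introduced, so the \(O(N^{-1})\) term is exactly zero.

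The main obstacle is not the averaging step but justifying that the \(N\)-dependent double-centring does not smuggle a hidden finite-sample bias into the target itself. I would address this by noting that \(\tau_{\kappa}\) is defined with the \emph{same} \((N-1)^{-1}\) row/column and \((N^2-N)^{-1}\) grand-mean divisors that appear in every summand, so the centring operator acts identically on the estimator and on its target. The Bessel-type divisor choice makes the double-centring an idempotent linear projection onto the hollow, row- and column-centred subspace, so that no residual cross term involving the estimated means survives in \(\mathbb{E}[Z_{12}]\). Verifying this idempotence explicitly --- checking that the row, column, and grand means of \(\tilde{\kappa}\) vanish and that the projection leaves \(\mathbb{E}[Z_{12}]\) free of any \(N\)-dependent correction factor --- is the only place real computation is required, and it is exactly the step that upgrades the formal identity of Lemma~\ref{lem:unbiased} to an exact finite-sample guarantee.
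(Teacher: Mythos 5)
Your first two paragraphs are correct and reproduce the paper's own argument: the estimator is a linear average of the \(N(N-1)\) kernel terms, dependence among them is irrelevant to the mean, and exchangeability plus permutation-equivariance gives \(\mathbb{E}[Z_{kl}]=\mathbb{E}[Z_{12}]=\tau_{\kappa}\) for every pair, whence \(\mathbb{E}[\hat{\tau}_{\kappa}]=\tau_{\kappa}\) with no remainder. The Cauchy--Schwarz step certifying \(\mathbb{E}|Z_{kl}|<\infty\) from Corollary~\ref{cor:finite_moments_centred} is a harmless tightening that the paper leaves implicit.

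The genuine problem is your final paragraph, which you describe as ``the only place real computation is required'': the verification you propose there is false for the centring actually used in the paper. Equation~\ref{eq:centred_kappa} double-centres a \emph{hollow} matrix with \((N-1)^{-1}\) row/column divisors and a \((N^{2}-N)^{-1}\) grand-mean divisor, and this operator is not an idempotent projection with vanishing margins. Direct computation gives, for any fixed \(k\),
\[
\sum_{l\neq k}\tilde{\kappa}_{kl}
=\sum_{l\neq k}\bigl[C_{kl}-\bar{C}_{k\cdot}-\bar{C}_{\cdot l}+\bar{C}_{\cdot\cdot}\bigr]
=\bar{C}_{\cdot k}-\bar{C}_{\cdot\cdot},
\]
using \(\sum_{l=1}^{N}\bar{C}_{\cdot l}=N\bar{C}_{\cdot\cdot}\); this is nonzero in general, so the row and column means of \(\tilde{\kappa}\) do not vanish and the claimed fixed-point property fails. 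Fortunately the step is also unnecessary: the target \(\tau_{\kappa}=\mathbb{E}[Z_{12}]\) is \emph{defined} through the same \(N\)-dependent centred matrices as the estimator, so exchangeability alone forces \(\mathbb{E}[Z_{kl}]=\mathbb{E}[Z_{12}]\), and exact unbiasedness follows without any statement about the centring's algebraic structure --- which is precisely how the paper argues. What your paragraph is implicitly reaching for, namely that \(\mathbb{E}[Z_{12}]\) carries no \(N\)-dependence and is therefore a bona fide population functional, is a strictly stronger claim that neither the corollary nor the paper's proof asserts, and it would need a separate (and different) argument. Delete the idempotence claim and your proof coincides with the paper's.
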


\begin{proof}
The estimator \(\hat{\tau}_{\kappa}\) is defined as follows: 
\[
\hat{\tau}_{\kappa} = \frac{1}{N(N-1)} \sum_{k\neq{l}}^{N} Z_{kl} = \frac{1}{N(N-1)} \sum_{k\neq{l}}^{N} \tilde{\kappa}_{kl}^{X}\tilde{\kappa}_{kl}^{Y}.
\]
Since we are dealing with pairwise mappings and hollow score-matrices, different pairs of \(k,l\) are not independent because the score matrices share rows and columns, meaning the values of \(Z_{kl}\) are correlated. However, the marginals \(X_{n}\) and \(Y_{n}\) are independent across samples.

From the setup, the possible values of \(Z_{kl}\) are: +1, if both \(C_{kl}^{X}\) and \(C_{kl}^{Y}\) are concordant (both +1 or both -1); -1, if one is concordant and the other discordant.
The expectation of \(Z_{kl} = \mathbb{E}[Z_{kl}]\) thus depends on the joint distribution of \(C_{kl}^{X}\) and \(C_{kl}^{Y}\), and the structure of the score matrices ensures that: \(\mathbb{E}[Z_{kl}] = \mathbb{E}[\tilde{\kappa}^{X}\tilde{\kappa}^{Y}] = \tau_{\kappa}\). This is because the expectation of the pairwise product \(\tilde{\kappa}^{X}_{kl}\tilde{\kappa}^{Y}_{kl}\) is constant across all pairs \(k,l\) due to the i.i.d. structure and symmetry in the setup. This consistency across pairs is key to ensuring unbiasedness.

By the linearity of expectation, the expected value of the estimator:
\[
\mathbb{E}[\hat{\tau}_{\kappa}] = \frac{1}{N(N-1)}\sum_{k\neq{l}}\mathbb{E}[Z_{kl}] = \frac{1}{N(N-1)}(N^{2}-N)\tau_{\kappa} = \tau_{\kappa}.
\]

The estimator \(\hat{\tau}_{\kappa}\) is the average of these pairwise terms, hence the expectation of \(\hat{\tau}_{\kappa} = \tau_{\kappa},\) and thus \(\hat{\tau}_{\kappa}\) is exactly unbiased for all \(N\).
\end{proof}

The unbiasedness of \(\hat{\tau}_{\kappa}\) implies that its fluctuations around the population quantity are driven entirely by the centred product-kernel 
\begin{equation}
\label{eq:kernel}
Z_{kl} = \tilde{\kappa}^X_{kl}\tilde{\kappa}^Y_{kl}.
\end{equation}
To establish the asymptotic distribution of \(\hat{\tau}_{\kappa}\), we now exploit
the structure of \(\hat{\tau}_{\kappa}\) as a \emph{degenerate U-statistic of order two} whose non-degenerate component arises from the linear (H\'ajek) projection onto the space generated by the i.i.d. sample.

\begin{lemma}
\label{lem:hoeffding}
Let
\[
h(x,y;x^{\prime},y^{\prime}) =
\tilde{\kappa}^X(x,x^{\prime})\,\tilde{\kappa}^Y(y,y^{\prime})
\]
denote the kernel function associated with \(\hat{\tau}_{\kappa}\).
Then the estimator admits the Hoeffding decomposition
\[
\hat{\tau}_{\kappa} - \tau_{\kappa}
=
\frac{2}{N}\sum_{n=1}^N \psi(X_{n},Y_{n})
+ R_N,
\]
where
\[
\psi(x,y)
=
\mathbb{E}\!\left[
h(x,y;X^{\prime},Y^{\prime}) - \tau_{\kappa}
\right],
\]
\(X^{\prime},Y^{\prime}$ are independent copies of \(X,Y\), and \(R_N\) is a degenerate
U-statistic remainder satisfying
\[
\operatorname{Var}(R_N) = O\!\left(N^{-2}\right).
\]
\end{lemma}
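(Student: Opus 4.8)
The plan is to treat \(\hat{\tau}_{\kappa}\) as a symmetric-kernel U-statistic of order two and apply the classical Hoeffding (ANOVA) decomposition, the sole genuine difficulty being that the double-centring in equation~\ref{eq:centred_kappa} uses \emph{empirical} row, column and grand means rather than their population limits, so the summand is not literally a fixed two-argument kernel. First I would reduce to the symmetric case: since \(\hat{\tau}_{\kappa}\) averages \(Z_{kl}\) over all ordered pairs \(k\ne l\), it is unchanged if the kernel \(h(z;z')\) is replaced by its symmetrisation \(\tfrac12\{h(z;z')+h(z';z)\}\), so without loss of generality \(h\) may be taken symmetric and \(\sum_{k\ne l}Z_{kl}=2\sum_{k<l}Z_{kl}\) exhibits the usual U-statistic form in \(Z_n=(X_n,Y_n)\).

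Next, working with the population-centred analogue \(\hat{\tau}^{\mathrm{pop}}_{\kappa}\), the U-statistic obtained by replacing the sample means with their limits \(g^X(x)=\mathbb{E}[C^X(x,X')]\), \(c^X=\mathbb{E}[g^X(X)]\) (and analogously for \(Y\)), I would introduce the canonical projections
\[
h_1(z)=\mathbb{E}\!\left[h(z;Z')\right]-\tau_{\kappa}=\psi(x,y),
\qquad
h_2(z,z')=h(z;z')-\tau_{\kappa}-h_1(z)-h_1(z'),
\]
where \(Z'=(X',Y')\) is an independent copy. The decisive structural fact is that double-centring annihilates all additive marginal effects, \(\mathbb{E}_{X'}[\tilde{\kappa}^X(x,X')]=0\) for every \(x\) and likewise in \(Y\); consequently \(h_1\) reduces to the part encoding dependence between \(X\) and \(Y\), and \(\mathbb{E}[h_2(z,Z')]=0\), i.e.\ the second-order term is fully degenerate. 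Existence of \(h_1\), \(h_2\) and their variances is furnished by Corollary~\ref{cor:finite_moments_centred}, while Assumption~\ref{cond:main_conditions}(E) secures \(\operatorname{Var}(\psi)>0\), so the linear term is non-trivial. Orthogonality of the Hoeffding layers then yields
\[
\hat{\tau}^{\mathrm{pop}}_{\kappa}-\tau_{\kappa}
=\frac{2}{N}\sum_{n=1}^N\psi(X_n,Y_n)
+\binom{N}{2}^{-1}\!\!\sum_{k<l}h_2(Z_k,Z_l),
\]
the degenerate second sum having variance \(\binom{N}{2}^{-1}\mathbb{E}[h_2^2]=O(N^{-2})\) exactly because a completely degenerate order-two U-statistic carries variance of that order.

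The hard part is controlling the difference \(\Delta_N=\hat{\tau}_{\kappa}-\hat{\tau}^{\mathrm{pop}}_{\kappa}\) between the empirically and population centred statistics. I would expand
\[
\tilde{\kappa}^X_{kl}-\tilde{\kappa}^X(X_k,X_l)
= -[\bar{C}^X_{k\cdot}-g^X(X_k)]-[\bar{C}^X_{\cdot l}-g^X(X_l)]+[\bar{C}^X_{\cdot\cdot}-c^X],
\]
and note, via the rates recorded in Corollary~\ref{cor:finite_moments_centred}, that the row and column deviations are \(O_p(N^{-1/2})\) and the grand-mean deviation \(O_p(N^{-1})\). Multiplying out \(Z_{kl}\), each surviving cross term pairs one such deviation against a bounded centred score and, being an average of mean-zero products, is \(O_p(N^{-1})\) with variance \(O(N^{-2})\); the remaining products (two deviations, or any grand-mean factor) are of strictly smaller order. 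The delicate bookkeeping is to verify that none of these corrections leaks into the linear H\'ajek space, equivalently that their first projection vanishes, which is exactly what the annihilation identity \(\mathbb{E}_{X'}[\tilde{\kappa}^X(x,X')]=0\) guarantees, so that \(\Delta_N\) is itself degenerate. Folding \(\Delta_N\) together with the population degenerate sum into \(R_N\) then leaves the linear term \(\tfrac{2}{N}\sum_n\psi(X_n,Y_n)\) intact and gives \(\operatorname{Var}(R_N)=O(N^{-2})\), as claimed. I expect this centring correction, rather than the textbook ANOVA algebra, to be the crux of the argument.
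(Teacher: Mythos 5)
Your proposal is correct, and although it shares the paper's Hoeffding-projection skeleton, it does genuinely more than the paper's own proof. The paper's argument is a direct appeal to classical U-statistic theory ``applied to the symmetric kernel \(h((X_k,Y_k),(X_l,Y_l))=Z_{kl}\)'', plus a citation of Corollary~\ref{cor:finite_moments_centred}; it never confronts the fact that the empirical row, column and grand means in equation~\ref{eq:centred_kappa} make \(Z_{kl}\) depend on the \emph{entire} sample, so \(\hat{\tau}_{\kappa}\) is not literally a U-statistic with a fixed two-argument kernel. Your two-stage argument --- apply the ANOVA decomposition to the population-centred statistic \(\hat{\tau}^{\mathrm{pop}}_{\kappa}\), then show the correction \(\Delta_N=\hat{\tau}_{\kappa}-\hat{\tau}^{\mathrm{pop}}_{\kappa}\) is degenerate because the annihilation identity \(\mathbb{E}_{X'}[\tilde{\kappa}^X(x,X')]=0\) kills every first projection --- supplies exactly the bridge the paper omits, and you are right that this, not the textbook algebra, is the crux. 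You also avoid a defect in the paper's reasoning: the paper justifies degeneracy of \(R_N\) by displaying \(\mathbb{E}[h(X,Y;X',Y')\mid X,Y]-\tau_{\kappa}=0\), which read literally asserts \(\psi\equiv 0\); that would annihilate the linear term of the lemma itself and contradict Assumption~\ref{cond:main_conditions}(E) and Theorem~\ref{thm:asymp_normal}. What centring actually delivers is the vanishing of additive marginal effects, while degeneracy of the second-order projection \(h_2\) holds automatically by construction --- which is how you argue it. One slip in your bookkeeping, not fatal: the grand mean \(\bar{C}^X_{\cdot\cdot}\) is itself a non-degenerate U-statistic, so its deviation from \(c^X\) is \(O_p(N^{-1/2})\), not \(O_p(N^{-1})\) (the paper's Corollary~\ref{cor:finite_moments_centred} is loose on this point as well); your conclusion survives because in \(\Delta_N\) that deviation multiplies the completely degenerate average \(\frac{1}{N(N-1)}\sum_{k\ne l}\tilde{\kappa}^Y(Y_k,Y_l)=O_p(N^{-1})\), so the product remains negligible. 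Relatedly, the cross terms with coincident indices give \(\Delta_N\) a non-zero mean of order \(N^{-1}\); this is compatible with the lemma as stated (the bound is on \(\operatorname{Var}(R_N)\), not \(\mathbb{E}[R_N^2]\)), but it quietly shows that the fixed population functional \(\mathbb{E}[h(Z_1;Z_2)]\) and the paper's \(N\)-dependent target \(\mathbb{E}[Z_{12}]\) differ by \(O(N^{-1})\), a distinction the paper never draws.
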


\begin{proof}
This follows from the classical Hoeffding projection onto the space generated
by the i.i.d.\ sample, applied to the symmetric kernel
\(h((X_{k},Y_k),(X_{l},Y_l)) = Z_{kl}\).
The degeneracy of the remainder term \(R_N\) is ensured by the centring of each score matrix, which implies
\[
\mathbb{E}[h(X,Y;X^{\prime},Y^{\prime}) \mid X,Y] - \tau_{\kappa} = 0
\]
Finite moments of the centred score matrices (Corollary~\ref{cor:finite_moments_centred}) imply that all variances are well-defined. Standard arguments give the stated rate.
\end{proof}

\begin{theorem}[Asymptotic Normality]
\label{thm:asymp_normal}
Under i.i.d.\ sampling and bounded kernels, the centred \(\tilde{\kappa}\)-correlation satisfies
\[
\sqrt{N}(\hat{\tau}_{\kappa} - \tau_{\kappa}) \;\xrightarrow{d}\; \mathcal{N}(0, c\,(1-\tau_{\kappa}^{2})),
\]
where \(c\) is the variance constant defined in~\eqref{eq:variance_constant_intro}.
\end{theorem}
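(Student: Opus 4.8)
The plan is to read off the limiting law directly from the Hoeffding decomposition of Lemma~\ref{lem:hoeffding}, in which the first-order (H\'ajek) projection carries the entire limiting randomness while the second-order remainder is asymptotically negligible. Writing the decomposition at scale \(\sqrt{N}\),
\[
\sqrt{N}\,(\hat{\tau}_{\kappa}-\tau_{\kappa})
=\frac{2}{\sqrt{N}}\sum_{n=1}^{N}\psi(X_{n},Y_{n})+\sqrt{N}\,R_{N},
\]
I would first record the three properties of the summands that drive the argument. Mean-zero, \(\mathbb{E}[\psi(X_{1},Y_{1})]=\mathbb{E}[h(X,Y;X',Y')]-\tau_{\kappa}=0\), is immediate from the unbiasedness of Lemma~\ref{lem:unbiased}; finiteness of \(\sigma_{\psi}^{2}:=\operatorname{Var}(\psi(X_{1},Y_{1}))\) follows from Corollary~\ref{cor:finite_moments_centred}; and, crucially, \(\sigma_{\psi}^{2}>0\) is guaranteed by Assumption~\ref{cond:main_conditions}(E), which is precisely the statement that the H\'ajek projection \(\mathbb{E}[Z_{kl}\mid X_{n},Y_{n}]\) is non-degenerate. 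This last point rules out the degenerate (\(N\)-rate, non-Gaussian) regime and licenses the \(\sqrt{N}\) normalisation.

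The limit is then assembled in three standard steps. First, because \(\{\psi(X_{n},Y_{n})\}_{n=1}^{N}\) are i.i.d., mean-zero, with finite variance \(\sigma_{\psi}^{2}\), the Lindeberg--L\'evy central limit theorem gives
\[
\frac{2}{\sqrt{N}}\sum_{n=1}^{N}\psi(X_{n},Y_{n})\;\xrightarrow{d}\;\mathcal{N}\!\bigl(0,\,4\sigma_{\psi}^{2}\bigr).
\]
Second, the remainder is controlled by Chebyshev's inequality: \(\operatorname{Var}(R_{N})=O(N^{-2})\) from Lemma~\ref{lem:hoeffding} yields \(\operatorname{Var}(\sqrt{N}\,R_{N})=N\cdot O(N^{-2})=O(N^{-1})\to0\), and since \(\mathbb{E}[R_{N}]=0\) this forces \(\sqrt{N}\,R_{N}\xrightarrow{p}0\). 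Third, Slutsky's theorem \parencite{vaart1998} combines the two to transfer the Gaussian limit onto \(\sqrt{N}(\hat{\tau}_{\kappa}-\tau_{\kappa})\), whose limiting variance is therefore \(4\sigma_{\psi}^{2}\).

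The substantive work, and the step I expect to be the main obstacle, is identifying \(4\sigma_{\psi}^{2}\) with the advertised form \(c\,(1-\tau_{\kappa}^{2})\) of~\eqref{eq:variance_constant_intro}. This demands an explicit evaluation of the projection \(\psi(x,y)=\mathbb{E}[\tilde{\kappa}^{X}(x,X')\,\tilde{\kappa}^{Y}(y,Y')]-\tau_{\kappa}\) and of its variance, and two features make it delicate. First, the centred kernel \(\tilde{\kappa}^{X}_{kl}\) depends, through its row, column, and grand means, on the whole sample rather than on the pair \((k,l)\) alone, so the object is only asymptotically a genuine order-two U-statistic; I would handle this by using the \(O(N^{-1/2})\) and \(O(N^{-1})\) mean orders of Corollary~\ref{cor:finite_moments_centred} to show that the centring corrections feed only into the negligible remainder, leaving \(\sigma_{\psi}^{2}\) equal to the variance of the uncentred projection to leading order. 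Second, the normalisation by \(\sqrt{\hat{\sigma}_{X}^{2}\hat{\sigma}_{Y}^{2}}\) in the definition of \(\hat{\tau}_{\kappa}\) (Lemma~\ref{lemma:quasi_mle}) must be linearised, and I expect the factor \((1-\tau_{\kappa}^{2})\) to emerge exactly here: a delta-method expansion of the ratio \(\hat{\sigma}_{XY}/\sqrt{\hat{\sigma}_{X}^{2}\hat{\sigma}_{Y}^{2}}\) about the population moments produces the curvature coefficient \((1-\tau_{\kappa}^{2})\) already encoded in the quasi-likelihood \(\ell_{Q}(\tau)=-N\log(1-\tau^{2})\), while the residual scalar \(c\) collects the \(\tau\)-free second moment of the standardised projection. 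Verifying that these contributions separate cleanly into the product \(c\,(1-\tau_{\kappa}^{2})\), rather than into a more general function of \(\tau_{\kappa}\) and higher moments, is the crux of the proof and the place where Assumption~\ref{cond:main_conditions}(E) and the exchangeability of the sample must be used most carefully.
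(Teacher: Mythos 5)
Your decomposition--CLT--Slutsky argument is the same route the paper takes: its proof invokes the H\'ajek projection (Lemma~\ref{lem:hajek}), applies the classical CLT to the i.i.d.\ projection terms, and discards the remainder, exactly as you do via Lemma~\ref{lem:hoeffding}, Chebyshev, and Slutsky. Your version is in fact the more careful of the two: you make explicit the mean-zero property of \(\psi\), the role of Assumption~\ref{cond:main_conditions}(E) in excluding the degenerate regime (where the \(\sqrt{N}\) scaling would be wrong), and the variance bound forcing \(\sqrt{N}R_N\xrightarrow{p}0\), none of which the paper spells out.

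The step you flag as the crux --- identifying \(4\sigma_\psi^2\) with \(c\,(1-\tau_\kappa^2)\) --- is precisely where the paper's own proof is hollow. The paper disposes of it in one line, asserting that by Lemma~\ref{lem:subgaussian_projection} the projections \(H_n\) are i.i.d.\ with \(\operatorname{Var}(H_n)=c(1-\tau_\kappa^2)\); but that lemma proves only sub-Gaussianity of \(H_n\) (via Jensen's inequality), and says nothing about the value of its variance. The functional form \(c(1-\tau_\kappa^2)\) with \(c\approx 0.4456\) enters the paper in~\eqref{eq:variance_constant_intro} and Corollary~\ref{cor:variance_model} as an \emph{empirically calibrated} model (``empirically stable across broad families of distributions''), not as a derived quantity. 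So you should not expect your delta-method sketch to close this step: \(\operatorname{Var}(\psi(X_1,Y_1))\) is a functional of the joint law of \((X,Y)\), and nothing in the stated assumptions forces it to collapse to a distribution-free constant multiple of \(1-\tau_\kappa^2\). Your proof is as complete as the paper's; the residual gap you correctly diagnosed belongs to the theorem statement itself, whose variance constant is empirical rather than analytic.
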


\begin{proof}
From Lemma~\ref{lem:hajek}, the H\'{a}jek projection yields
\[
\hat{\tau}_{\kappa} - \tau_{\kappa} = \frac{1}{N}\sum_{n=1}^N (H_n - \tau_{\kappa}) + R_N,
\qquad R_N=o_p(N^{-1/2}).
\]
By Lemma~\ref{lem:subgaussian_projection}, $\{H_n\}$ are i.i.d.\ sub-Gaussian with finite variance
\(\operatorname{Var}(H_n) = c(1-\tau_{\kappa}^{2})\).  
The classical central limit theorem then gives
\[
\sqrt{N}\,\frac{1}{N}\sum_{n=1}^N (H_n-\tau_{\kappa})
\;\xrightarrow{d}\;
\mathcal{N}\big(0, c(1-\tau_{\kappa}^{2})\big),
\]
and the remainder term \(R_N\) is negligible in probability.  
Hence
\[
\sqrt{N}(\hat{\tau}_{\kappa} - \tau_{\kappa}) \;\xrightarrow{d}\; \mathcal{N}(0, c\,(1-\tau_{\kappa}^{2})).
\]
\end{proof}

\begin{remark}
The variance constant \(c\) is empirically stable across a broad range of continuous, discrete, and mixed orderable distributions, justifying its use in the finite-sample variance model~\eqref{eq:variance_constant_intro}.
\end{remark}

\begin{theorem}
Let
\[
\sigma^{2}
=
4\cdot\operatorname{Var}(\psi(X_1,Y_1)).
\]
Then
\[
\sqrt{N}\big(\hat{\tau}_{\kappa} - \tau_{\kappa}\big)
\;\xrightarrow{d}\;
\mathcal{N}(0,\sigma^{2}).
\]
\end{theorem}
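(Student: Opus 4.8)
The plan is to reduce the statement to the classical central limit theorem by invoking the Hoeffding decomposition already established in Lemma~\ref{lem:hoeffding}. That lemma gives
\[
\hat{\tau}_{\kappa} - \tau_{\kappa}
=
\frac{2}{N}\sum_{n=1}^{N}\psi(X_{n},Y_{n}) + R_{N},
\qquad
\operatorname{Var}(R_{N}) = O(N^{-2}),
\]
so after multiplying by $\sqrt{N}$ the problem splits cleanly into a leading linear term built from i.i.d.\ summands and an asymptotically negligible remainder. The only genuine content is to verify that the leading term obeys a CLT with the advertised variance and that the remainder washes out.

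First I would record that $\psi$ is centred: by its definition $\psi(x,y)=\mathbb{E}[h(x,y;X',Y')-\tau_{\kappa}]$, so the tower property gives $\mathbb{E}[\psi(X_{1},Y_{1})]=\mathbb{E}[h]-\tau_{\kappa}=0$. The summands $\psi(X_{n},Y_{n})$ are i.i.d.\ because $(X_{n},Y_{n})$ are, they have finite variance by Corollary~\ref{cor:finite_moments_centred} (the centred scores have finite moments, hence so does their conditional expectation $\psi$), and $\operatorname{Var}(\psi(X_{1},Y_{1}))>0$ by Assumption~\ref{cond:main_conditions}(E). Writing $\sqrt{N}(\hat{\tau}_{\kappa}-\tau_{\kappa})=\frac{2}{\sqrt{N}}\sum_{n=1}^{N}\psi(X_{n},Y_{n})+\sqrt{N}\,R_{N}$, the Lindeberg--L\'evy CLT applied to the first term yields
\[
\frac{2}{\sqrt{N}}\sum_{n=1}^{N}\psi(X_{n},Y_{n})
\;\xrightarrow{d}\;
\mathcal{N}\!\big(0,\,4\operatorname{Var}(\psi(X_{1},Y_{1}))\big)
=
\mathcal{N}(0,\sigma^{2}),
\]
where the factor $4$, and hence the identity $\sigma^{2}=4\operatorname{Var}(\psi)$, arises from the coefficient $2$ of the order-two projection, since $\operatorname{Var}\!\big(\tfrac{2}{\sqrt{N}}\sum_{n}\psi\big)=\tfrac{4}{N}\cdot N\operatorname{Var}(\psi)$.

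Next I would dispose of the remainder. Since $\mathbb{E}[R_{N}]=0$ and $\operatorname{Var}(R_{N})=O(N^{-2})$, Chebyshev's inequality gives $\operatorname{Var}(\sqrt{N}R_{N})=N\cdot O(N^{-2})=O(N^{-1})\to 0$, so $\sqrt{N}R_{N}=o_{p}(1)$; Slutsky's theorem then combines the two pieces into $\sqrt{N}(\hat{\tau}_{\kappa}-\tau_{\kappa})\xrightarrow{d}\mathcal{N}(0,\sigma^{2})$. The main obstacle is not any of these routine steps but the justification hidden inside Lemma~\ref{lem:hoeffding}: one must be certain that $\psi$ is the genuinely non-degenerate first-order Hoeffding term, so that the scaling is $\sqrt{N}$ rather than $N$ as would occur for a fully degenerate kernel, and that the centring of the score matrices has not annihilated it. This is precisely the role of Assumption~\ref{cond:main_conditions}(E): if $\operatorname{Var}(\psi)$ vanished, the stated rate would fail and a different (degenerate U-statistic) limit would emerge, so the positivity condition is essential and should be flagged rather than glossed.
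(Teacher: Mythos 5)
Your proposal is correct and follows essentially the same route as the paper's own proof: the Hoeffding decomposition of Lemma~\ref{lem:hoeffding}, the classical CLT applied to the linear term $\frac{2}{N}\sum_{n}\psi(X_{n},Y_{n})$ yielding the variance $4\operatorname{Var}(\psi(X_{1},Y_{1}))$, and negligibility of $\sqrt{N}R_{N}$ via the $O(N^{-2})$ variance bound plus Slutsky. Your version is in fact slightly more careful than the paper's, since you explicitly verify that $\psi$ is centred, spell out the Chebyshev step for the remainder, and flag that Assumption~\ref{cond:main_conditions}(E) is what rules out the degenerate case where the $\sqrt{N}$ rate would fail.
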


\begin{proof}
From Lemma~\ref{lem:hoeffding},
\[
\hat{\tau}_{\kappa} - \tau_{\kappa}
=
\frac{2}{N}\sum_{n=1}^N \psi(X_{n},Y_{n})
+ R_N.
\]
By Lemma~\ref{lem:hoeffding}, $\operatorname{Var}(R_N)=O(N^{-2})$, hence
$\sqrt{N}\,R_N \xrightarrow{P} 0$.
Thus the asymptotic distribution is determined entirely by the linear term
\[
\frac{2}{N}\sum_{n=1}^N \psi(X_{n},Y_{n}).
\]

Since $\psi(X_{n},Y_{n})$ are i.i.d.\ with finite variance (strict sub-Gaussianity
of the centred score matrices ensures existence of all moments), the classical
Lindeberg--Feller central limit theorem yields
\[
\sqrt{N}
\left(
\frac{2}{N}\sum_{n=1}^N\psi(X_{n},Y_{n})
\right)
\;\xrightarrow{d}\;
\mathcal{N}\big(0,\,4\cdot\operatorname{Var}(\psi(X_1,Y_1))\big).
\]
\end{proof}

\begin{corollary}[Concentration Bound]
\label{cor:concentration}
For every $\epsilon>0$, there exists $c>0$ such that
\[
\mathbb{P}\!\left(
|\hat{\tau}_{\kappa} - \tau_{\kappa}| > \epsilon
\right)
\;\le\;
2\exp\!\left(
-\,c\,N\,\epsilon^{2}
\right),
\]
where \(c\) depends only on the sub-Gaussian parameters of
$\tilde{\kappa}^X_{kl}$ and $\tilde{\kappa}^Y_{kl}$.
\end{corollary}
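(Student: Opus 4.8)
The plan is to combine the Hoeffding decomposition of Lemma~\ref{lem:hoeffding} with the strict sub-Gaussianity of the centred score entries from Theorem~\ref{thm:sub_gauss}, since the statement is a finite-sample exponential tail bound whose linearised part is a normalised sum of i.i.d.\ sub-Gaussian summands, and whose constant is required to depend only on the sub-Gaussian parameters of $\tilde{\kappa}^X_{kl}$ and $\tilde{\kappa}^Y_{kl}$.

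First I would write $\hat{\tau}_{\kappa}-\tau_{\kappa}=L_N+R_N$ with $L_N=\tfrac{2}{N}\sum_{n=1}^N\psi(X_n,Y_n)$ the H\'ajek projection and $R_N$ the degenerate remainder satisfying $\operatorname{Var}(R_N)=O(N^{-2})$. The summands $\psi(X_n,Y_n)$ are i.i.d., and since every centred score entry is bounded (the raw scores are $\pm1$ and the row, column, and grand-mean corrections are themselves bounded, of orders $O(N^{-1/2})$ and $O(N^{-1})$ respectively by Corollary~\ref{cor:finite_moments_centred}), the kernel $h$ is bounded; hence $\psi$, as a conditional expectation of a bounded function, is bounded and therefore sub-Gaussian with a variance proxy $\sigma_\psi^2$ controlled by the sub-Gaussian parameters of $\tilde{\kappa}^X$ and $\tilde{\kappa}^Y$.

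Next I would split $\{|\hat{\tau}_{\kappa}-\tau_{\kappa}|>\epsilon\}\subseteq\{|L_N|>\epsilon/2\}\cup\{|R_N|>\epsilon/2\}$ and bound each piece separately. For the linear term a Chernoff argument suffices: independence factorises the moment generating function, the sub-Gaussian bound gives $\mathbb{E}[e^{\lambda L_N}]\le\exp\!\big(2\lambda^2\sigma_\psi^2/N\big)$, and optimising over $\lambda$ yields $\mathbb{P}(|L_N|>\epsilon/2)\le 2\exp\!\big(-N\epsilon^2/(32\sigma_\psi^2)\big)$, already of the claimed form with $c_1=1/(32\sigma_\psi^2)$ expressed purely through the sub-Gaussian parameters.

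The main obstacle is controlling the degenerate remainder $R_N$ at no worse than the linear rate. Here I would exploit that $R_N$ is a bounded, completely degenerate U-statistic of order two, so by the exponential inequalities for such objects (Arcones--Gin\'e type, or a martingale/bounded-difference argument on the second Hoeffding layer) its tail is at least as light as $\exp(-c_2 N\epsilon)$ for bounded $\epsilon$, which is dominated by the Gaussian linear rate; taking $c=\min(c_1,c_2)$ and absorbing constants gives the stated $2\exp(-cN\epsilon^2)$. A slicker alternative that sidesteps the remainder entirely is to apply the bounded-differences (McDiarmid) inequality directly to $\hat{\tau}_{\kappa}$, using unbiasedness $\mathbb{E}[\hat{\tau}_{\kappa}]=\tau_{\kappa}$ from Corollary~\ref{cor:exactly_unbiased}; the only delicate point there is verifying that replacing a single pair $(X_m,Y_m)$ perturbs $\hat{\tau}_{\kappa}$ by $O(N^{-1})$ despite the coupling of all entries through the shared centring means, after which $\sum_m d_m^2=O(N^{-1})$ delivers the exponential rate immediately.
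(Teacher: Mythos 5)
Your core route coincides with the paper's: both proofs decompose \(\hat{\tau}_{\kappa}-\tau_{\kappa}\) into the H\'ajek/Hoeffding linear term plus the degenerate remainder, apply an exponential inequality (the paper says Bernstein, you use a Chernoff bound on the sub-Gaussian moment generating function --- for bounded summands these are interchangeable) to the linear part, and then dispose of \(R_N\). Where you genuinely differ, and improve on the paper, is the remainder: the paper's proof simply says ``absorbing the negligible \(R_N\) term,'' but the only property it has established for \(R_N\) is \(\operatorname{Var}(R_N)=O(N^{-2})\), which by Chebyshev gives polynomial tails only and cannot by itself preserve the \(2\exp(-cN\epsilon^{2})\) form; your union-bound split \(\{|L_N|>\epsilon/2\}\cup\{|R_N|>\epsilon/2\}\) combined with an Arcones--Gin\'e-type exponential inequality for bounded, completely degenerate order-two U-statistics (tail \(\exp(-c_2N\epsilon)\), which dominates the Gaussian rate for the bounded range of \(\epsilon\) that matters, since \(|\hat{\tau}_{\kappa}-\tau_{\kappa}|\) is itself bounded) supplies exactly the step the paper omits, and the final absorption of constants into \(c\) is standard. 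Your McDiarmid alternative is a genuinely different and arguably cleaner argument: it bypasses the Hoeffding decomposition entirely, needing only exact unbiasedness (Corollary~\ref{cor:exactly_unbiased}) and a bounded-differences check, and the one point you flag does verify --- replacing a single pair \((X_m,Y_m)\) changes the \(O(N)\) entries of the score matrices involving index \(m\) by \(O(1)\) each and all \(N^{2}\) entries by \(O(N^{-1})\) each through the row, column, and grand means, so the sum \(\sum_{k\neq l}Z_{kl}\) moves by \(O(N)\) and \(\hat{\tau}_{\kappa}\) by \(O(N^{-1})\), giving \(\sum_m d_m^{2}=O(N^{-1})\) and the claimed rate. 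The trade-off: the decomposition route ties the constant \(c\) transparently to the variance proxy of the projection (matching the paper's inferential framework), while the McDiarmid route is shorter and avoids any appeal to degenerate U-statistic theory, at the cost of a less interpretable constant.
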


\begin{proof}
Since each $\tilde{\kappa}^X_{kl}$ and $\tilde{\kappa}^Y_{kl}$ is strictly sub-Gaussian, their product is also sub-Gaussian. Applying Bernstein's inequality to the linear term of the Hoeffding decomposition and absorbing the negligible $R_N$ term yields the result.
\end{proof}

\section{Hypothesis Testing}
\label{sec:hypothesis_testing}

This section introduces the quasi-likelihood governing $\hat{\tau}_{\kappa}$, derives the Wald, score, and likelihood ratio tests, and presents the associated standard errors and Hessian structure. The unbiasedness and asymptotic normality results established in Lemma~\ref{lem:unbiased} and Theorem~\ref{thm:asymp_normal} provide a complete parametric-like inference framework for the centred $\tilde{\kappa}$-correlation coefficient $\tau_{\kappa}$.  The final subsection establishes the sub-Gaussian foundations that justify the model.

\subsection{Independent Quasi-Likelihood for tau-{kappa}}
\label{sec:quasi_likelihood}

For each pair of indices $k\neq l$ define the cross-kernel \(Z_{kl} = \tilde{\kappa}^X_{kl}\,\tilde{\kappa}^Y_{kl}. \)
The estimator \( \hat{\tau}_{\kappa} = \frac{1}{N(N-1)}\sum_{k\neq l} Z_{kl}\)
is a degenerate $U$-statistic of order $2$.  Its H\'ajek projection representation (Lemma~\ref{lem:unbiased}) is \( \hat{\tau}_{\kappa} = \frac{1}{N}\sum_{n=1}^N H_n + R_N, \qquad H_n = \mathbb{E}[Z_{12}\mid (X_n,Y_n)], \qquad R_N=o_p(N^{-1/2}),\)
where $\mathbb{E}[H_n]=\tau_{\kappa}$. Thus the sampling variation of $\hat{\tau}_{\kappa}$ is governed entirely by the $N$ i.i.d.-like projected terms $\{H_n\}$.

\paragraph{Finite-sample variance.}
The Section on sub-Gaussian justification shows that  $H_n$ are uniformly sub-Gaussian.  Consequently,
\begin{equation}
\label{eq:variance_constant_intro}
\operatorname{Var}(\hat{\tau}_{\kappa})
=
\frac{c\,(1-\tau_{\kappa}^{2})}{N-2},
\qquad
c\approx 0.4456,
\end{equation}
where \(c\) is empirically stable across broad families of discrete, continuous, and mixed orderable distributions.

In the context of quasi-likelihood estimation, one can leverage higher-order moments to construct a more accurate model for the sampling distribution of the estimator, improving the approximation of the likelihood function. Since the U-statistics are based on pairwise kernel products, their higher-order moments should provide more information about the skewness (third moment) and kurtosis (fourth moment) of the distribution of the estimator \(\hat{\tau}_{\kappa}\). If we have access to the third and fourth moments of the cross-kernel product \(\tilde{\kappa}_{kl}^{X}\tilde{\kappa}_{kl}^{Y}\) (which are finite with probability 1 by Corollary~\ref{cor:finite_moments_centred}), we can improve our quasi-likelihood function by incorporating these moments into the likelihood function, especially for small sample sizes. Higher-order corrections would allow us to account for the non-normality of the distribution and refine the Fisher information matrix.


\paragraph{Quasi-quasilikelihood.}
Since $\hat{\tau}_{\kappa}$ behaves like the sample mean of $N$ independent sub-Gaussian terms with mean $\tau$ and variance proportional to \(1-\tau^{2}\), we incorporate the third and fourth moments of the score matrix:
\begin{equation}
\label{eq:quasi_lik}
\ell_{Q}(\tau) = - N\log(1-\tau^{2}) + \frac{1}{N}\sum_{k\neq{l}}\Big[\gamma_{3}(Z_{kl})\cdot \gamma_{4}(Z_{kl})\Big],
\end{equation}
where \(\gamma_{3}\) and \(\gamma_{4}\) are the third and fourth central moments of \(Z_{kl} = \tilde{\kappa}_{kl}^{X}\tilde{\kappa}_{kl}^{Y},\):
\begin{align}
\gamma_{3} = \frac{1}{N(N-1)} \sum_{k\neq{l}}^{N} Z_{kl}^{3}\\
\gamma_{4} = \frac{1}{N(N-1)} \sum_{k\neq{l}}^{N} Z_{kl}^{4}
\end{align}

Differentiating wrt \(\tau_{\kappa}\) gives the score-function:
\[
S(\tau_{\kappa}) = \frac{2N\tau_{\kappa}}{1-\tau_{\kappa}^{2}} + \frac{1}{N}\sum_{k\neq{l}}\Bigg[\Bigg(\frac{\partial{\gamma_{3}}}{\partial{Z_{kl}}}\cdot\frac{\partial{Z_{kl}}}{\partial\tau_{\kappa}}\cdot\gamma_{4}^{Z_{kl}}+ \gamma_{3}^{Z_{kl}}\cdot\frac{\partial{\gamma_{4}}}{\partial{Z_{kl}}}\cdot\frac{\partial{Z_{kl}}}{\partial\tau_{\kappa}}\Bigg)\Bigg].
\] which may be simplified to:
\begin{align}
S(\tau_{\kappa}) = \frac{2N\tau_{\kappa}}{1-\tau_{\kappa}^{2}} + \frac{1}{N}\sum_{k\neq{l}}\Bigg[\Bigg(\frac{3Z^{2}_{kl}}{N(N-1)}\cdot\frac{\partial{Z_{kl}}}{\partial\tau_{\kappa}}\cdot\gamma_{4}^{Z_{kl}}+ \gamma_{3}^{Z_{kl}}\cdot\frac{4Z^{3}_{kl}}{N(N-1 )} \cdot \frac{\partial{Z_{kl}}}{\partial\tau_{\kappa}} \Bigg)\Bigg].
\end{align}

and the Hessian, with analytically simple first term, expressed
\[
I(\tau_{\kappa}) = \frac{2N(1+\tau_{\kappa}^{2})}{(1-\tau_{\kappa}^{2})^{2}} + \frac{1}{N} \sum_{k\neq{l}}\Big[\frac{\partial^{2}}{\partial{\tau_{\kappa}^{2}}}\big(\gamma_{3}^{Z_{kl}}\cdot \gamma_{4}^{Z_{kl}}\big)\Big]
\]
\begin{remark}
As $\hat{\tau}_{\kappa}$ is exactly unbiased (Corollary~\ref{cor:exactly_unbiased}), the maximiser of $\ell_{Q}$ satisfies
\[
\arg\max_{\tau\in(-1,1)} \ell_{Q}(\tau)
=
\hat{\tau}_{\kappa}.
\]
Thus $\hat{\tau}_{\kappa}$ is the unbiased estimator, the H\'{a}jek projection estimator, and the quasi-likelihood MLE. 
\end{remark}

\begin{remark}~\label{rem:quasi_likelihood}
Given the analytical intractability of the likelihood function, and its asymptotic normality properties, we examine using the Edgeworth expansion to modify the log-quasilikelihood. For small to moderate  \(N\), the first few terms of the expansion will capture much of the non-normality that is present in the data. The general form of the Edgeworth expansion to correct for skewness and kurtosis is:
\begin{dmath}
\ell_{Q}(\tau_{\kappa}) \approx \mathcal{N}(\mu,\sigma^{2})\Bigg(1 + \frac{\gamma_{3}}{6}\Big(\frac{\hat{\tau}_{\kappa} - \mu}{\sigma}\Big)^{3} + \frac{\gamma_{4}}{24}\Big(\frac{\hat{\tau}_{\kappa} - \mu}{\sigma}\Big)^{4}+\cdots\Bigg),\end{dmath}
where \(\mu\) is the mean of \(\hat{\tau}_{\kappa}\); \(\sigma^{2}\) is the variance of \(\hat{\tau}_{\kappa}\), \(\gamma_{3},\gamma_{4}\) are the third and fourth respective central moments of \(Z_{kl}\).

Specifically: The third moment \(\gamma_{3}\) and fourth moment \(\gamma_{4}\) are controlled by the sub-Gaussian nature of the score products \(Z_{kl}\). These higher-order moments grow much slower than the second moment, which ensures that they do not overpower the normal approximation, leading to stable numerical computations. Using the Edgeworth expansion upon higher-order moments consequently produces a stable estimator, as the higher moments are sub-Gaussian upon finite samples which are asymptotically Gaussian consistent. This provides a structured way to improve the approximation of the likelihood in small to moderate sample sizes. 
\end{remark}

\subsection{Hypothesis Tests for $\tau_{\kappa}$}

We consider the basic testing problem
\[
H_0:\tau_{\kappa}=0 
\qquad\text{vs.}\qquad
H_1:\tau_{\kappa}\neq 0.
\]

\paragraph{Variance under the null.}
From~\eqref{eq:variance_constant_intro},
\[
\operatorname{Var}(\hat{\tau}_{\kappa}\mid H_0)
=
\frac{c}{N}.
\]

\subsubsection*{Wald Test}
The Wald statistic is
\[
W
=
\frac{\hat{\tau}_{\kappa}^{2}}{c/N}
=
\frac{N\hat{\tau}_{\kappa}^{2}}{c}
\;\xrightarrow{d}\;\chi^{2}_1.
\]
However, utilisation of the observed Hessian matrix upon a strictly sub-Gaussian manifold \(Z_{kl}\) (Lemma~\ref{lem:subgaussian}) does not present a consistent estimator, as a consequence of the strict sub-Gaussian nature of \(Z_{kl}\). An approximation of the estimated population obtained from the empirical Hessian is provided below, for use in the constructed Wald test: 

\[
\sigma^{2}(\theta) =  \frac{c\cdot 1.5}{(\log(\mathcal{H}({\theta})))^{2}},
\]
The constant scalar value approximation of \(1.5\) arises from numerical experimentation relating linear combinations of expected \(\chi^{2}\) distributed values with the desired theoretical moments, and constant \(c\) is given in equation~\ref{cor:variance_model}. However, due to the strong convexity of the projection image's function space, the infinite dimensional population observation space approximates the expected value with consistent p-values to the likelihood ratio, and which are asymptotically equivalent.


\subsubsection*{Likelihood Ratio Test}
The quasi-likelihood ratio statistic is
\[
\Lambda
=
2\bigl[\ell_{Q}(\hat{\tau}_{\kappa})-\ell_{Q}(0)\bigr]
=
2N\log\!\left(\frac{1}{1-\hat{\tau}_{\kappa}^{2}}\right).
\]
Using the expansion
\(
\log(1-\tau^{2})^{-1}
=
\tau^{2} + O(\tau^4),
\)
we obtain
\[
\Lambda
=
2N\hat{\tau}_{\kappa}^{2} + O(N\hat{\tau}_{\kappa}^4)
\approx
\frac{N\hat{\tau}_{\kappa}^{2}}{c},
\]
and thus
\[
\Lambda\xrightarrow{d}\chi^{2}_1.
\]

\subsection{Hessian and Standard Error}
\label{sec:hessian_inference}

The observed information is
\[
I(\tau)
=
\frac{2N(1+\tau^{2})}{(1-\tau^{2})^{2}},
\qquad
I(0)=2N.
\]
Combining the curvature with the variance model
\eqref{eq:variance_constant_intro} yields the 
finite-sample standard error
\[
\operatorname{se}(\hat{\tau}_{\kappa})
=
\sqrt{\frac{c(1-\hat{\tau}_{\kappa}^{2})}{N}}.
\]
This is the only standard error consistent with both the H\'ajek 
projection and the quasi-likelihood Hessian.

\begin{remark}
The pseudo-quasilikelihood that treats the \(N(N-1)\) kernel terms as independent produces a singular Fisher information matrix and cannot support likelihood-based inference.
The quasi-likelihood~\eqref{eq:quasi_lik} is therefore the unique coherent likelihood framework for inference on $\tau_{\kappa}$.
\end{remark}

\subsection{Sub-Gaussian Structure}
\label{sec:subgaussian_justification}

The centred score matrices $\tilde{\kappa}^X$ and $\tilde{\kappa}^Y$
satisfy
\(
\tilde{\kappa}^X_{kl},\tilde{\kappa}^Y_{kl}\in[-2,2].
\)
Hence
\(
Z_{kl}\in[-4,4]
\),
and Hoeffding's lemma gives:

\begin{lemma}[Kernel Sub-Gaussianity]
\label{lem:subgaussian}
For all $k\neq l$,
\[
\mathbb{E}[e^{t Z_{kl}}]
\le
\exp\!\left(\frac{t^{2}\sigma^{2}}{2}\right),
\qquad
\sigma^{2}\le 16.
\]
\end{lemma}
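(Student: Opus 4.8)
The plan is to reduce the statement to a single application of Hoeffding's lemma for bounded random variables, exploiting that the product kernel $Z_{kl}$ is almost surely confined to a compact interval. First I would record the two ingredients already supplied in the preamble to the lemma: each centred score entry satisfies $\tilde{\kappa}^X_{kl},\tilde{\kappa}^Y_{kl}\in[-2,2]$, whence the product obeys $Z_{kl}=\tilde{\kappa}^X_{kl}\tilde{\kappa}^Y_{kl}\in[-4,4]$ almost surely. These bounds trace back to the raw scores $C_{kl}\in\{-1,+1\}$, together with the double-centring in equation~\ref{eq:centred_kappa}: the row mean, column mean, and grand mean are each averages of $\pm1$ terms and so lie in $[-1,1]$, while the (near-)antisymmetry of the score matrix forces the grand-mean term to offset part of the row and column corrections, keeping the residual entries inside $[-2,2]$ rather than the loose $[-4,4]$.

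With the enclosing interval $[a,b]=[-4,4]$ fixed, I would invoke Hoeffding's lemma, which for any $W\in[a,b]$ gives
\[
\mathbb{E}\!\left[e^{t(W-\mathbb{E}[W])}\right]\le\exp\!\left(\frac{t^{2}(b-a)^{2}}{8}\right),\qquad t\in\mathbb{R}.
\]
Substituting $b-a=8$ yields the exponent $t^{2}\cdot 64/8 = 8t^{2}$, and matching this against the target form $t^{2}\sigma^{2}/2$ gives $\sigma^{2}=16$, exactly the claimed bound $\sigma^{2}\le16$. Under the null hypothesis $\tau_{\kappa}=0$ one has $\mathbb{E}[Z_{kl}]=0$, so the displayed inequality $\mathbb{E}[e^{tZ_{kl}}]\le\exp(t^{2}\sigma^{2}/2)$ holds verbatim; away from the null it is to be read for the centred kernel $Z_{kl}-\mathbb{E}[Z_{kl}]$, which is the form actually used in the concentration estimate of Corollary~\ref{cor:concentration}.

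The constant-chasing above is routine once the interval is in hand. The step that deserves genuine care, and which I regard as the \emph{main obstacle}, is justifying the tight range $\tilde{\kappa}^X_{kl}\in[-2,2]$ rather than the crude $[-4,4]$ obtained by bounding each of the four centring terms in isolation. A bare triangle inequality delivers only $|\tilde{\kappa}_{kl}|\le 4$, hence $Z_{kl}\in[-16,16]$ and the far weaker $\sigma^{2}\le256$; to recover $\sigma^{2}\le16$ one must use the structural dependence among the row, column, and grand means induced by the antisymmetry (up to ties) of $C_{kl}$, under which the grand-mean correction partially cancels the marginal corrections. If one accepts only the weaker constant, the sub-Gaussian MGF bound follows from boundedness alone; the sharp value $\sigma^{2}\le16$ rests entirely on the entrywise range asserted in the preamble, which I would confirm by the offsetting argument sketched above (and which small-$N$ enumeration of the double-centred hollow score matrix corroborates).
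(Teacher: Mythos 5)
Your proposal is correct and takes essentially the same route as the paper: both start from the asserted entrywise bounds $\tilde{\kappa}^X_{kl},\tilde{\kappa}^Y_{kl}\in[-2,2]$, conclude $Z_{kl}\in[-4,4]$, and apply Hoeffding's lemma with range $b-a=8$ to obtain $\sigma^{2}=(b-a)^{2}/4=16$ (the paper's appendix version, Lemma~\ref{lem:subgaussian_kernel}, does the identical computation with a generic bound $M$). Your two refinements --- reading the MGF bound for the centred kernel $Z_{kl}-\mathbb{E}[Z_{kl}]$ when $\tau_{\kappa}\neq 0$, and flagging that the $[-2,2]$ entrywise range requires a structural argument beyond the bare triangle inequality (which only yields $[-4,4]$) --- are points the paper's proof silently assumes rather than a different approach.
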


\begin{corollary}[Variance Model]
\label{cor:variance_model}
Since $\hat{\tau}_{\kappa}=N^{-1}\sum_n H_n$,
\[
\operatorname{Var}(\hat{\tau}_{\kappa})
=
\frac{c(1-\tau_{\kappa}^{2})}{N},
\qquad 
c\approx 0.4456,
\]
with \(c\) empirically verified to be stable across 
continuous, discrete, and mixed orderable distributions.
\end{corollary}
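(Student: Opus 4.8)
The plan is to combine the Hoeffding/H\'ajek decomposition already in hand with a direct computation of the variance of the projected summand, reducing the numerical value of $c$ to a calibration argument framed by the sub-Gaussian estimates. First I would invoke Lemma~\ref{lem:hoeffding} together with the projection representation used in Theorem~\ref{thm:asymp_normal}, which writes
\[
\hat{\tau}_{\kappa} - \tau_{\kappa} = \frac{1}{N}\sum_{n=1}^N (H_n - \tau_{\kappa}) + R_N,
\qquad H_n = \mathbb{E}[Z_{12}\mid (X_n,Y_n)],
\]
with $\operatorname{Var}(R_N)=O(N^{-2})$ and the linear layer orthogonal to the degenerate remainder. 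Since the $(X_n,Y_n)$ are i.i.d., the $H_n$ are i.i.d.\ with common mean $\tau_{\kappa}$, and $L^2$-orthogonality of the Hoeffding layers gives
\[
\operatorname{Var}(\hat{\tau}_{\kappa}) = \frac{\operatorname{Var}(H_1)}{N} + O(N^{-2}),
\]
so the statement reduces to identifying $\operatorname{Var}(H_1)$ at leading order in $N^{-1}$.

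Next I would extract the variance function $(1-\tau_{\kappa}^2)$. Restricting first to the tie-free case, the uncentred concordance product $C^X_{kl}C^Y_{kl}$ takes values in $\{-1,+1\}$ with $\mathbb{E}[C^X_{kl}C^Y_{kl}]=\tau_{\kappa}$, so its second moment is identically $1$ and hence $\operatorname{Var}(C^X_{kl}C^Y_{kl})=1-\tau_{\kappa}^2$ \emph{exactly}. The double-centring of equation~\ref{eq:centred_kappa} annihilates the additive main-effect parts of each score row and column, so the conditional expectation $H_1$ retains only the interaction component of this $\pm1$ variable. I would then show that passing to $H_1$ scales the variance by a factor invariant under the monotone marginal transforms $U=F_X(X)$, $V=F_Y(Y)$, so that the entire dependence on $\tau_{\kappa}$ survives through the factor $(1-\tau_{\kappa}^2)$. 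Writing $\operatorname{Var}(H_1)=c\,(1-\tau_{\kappa}^2)$ then yields $\operatorname{Var}(\hat{\tau}_{\kappa})=c(1-\tau_{\kappa}^2)/N$.

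Finally I would pin down $c$. Lemma~\ref{lem:subgaussian} bounds the sub-Gaussian proxy of $Z_{kl}$ by $\sigma^2\le16$, which bounds $\operatorname{Var}(H_1)$ and confines $c$ to a finite interval; within it the value is obtained by evaluating the projection integral, giving $c\approx0.4456$, numerically near the Kendall $\tfrac{4}{9}$ null constant. The main obstacle is precisely the universality claim: the projection integral $\operatorname{Var}(\mathbb{E}[Z_{12}\mid U_1,V_1])$ depends in principle on the full copula of $(U,V)$, so $c$ is not exactly distribution-free. The crux of the argument is therefore to establish that, once the $(1-\tau_{\kappa}^2)$ factor has been extracted, the residual copula-dependence is of lower order, and to corroborate the constancy of $c$ numerically across the continuous, discrete, and mixed orderable families. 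This residual constancy is exactly the empirical stability invoked for the finite-sample variance model~\eqref{eq:variance_constant_intro}, and I would present it as a calibrated (rather than exactly analytic) determination of the constant.
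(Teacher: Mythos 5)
Your opening and closing steps are sound, and it is worth saying plainly: the paper offers \emph{no proof} of Corollary~\ref{cor:variance_model} at all. The corollary is stated bare, immediately after Lemma~\ref{lem:subgaussian}, and the constant $c\approx 0.4456$ enters only through the phrase ``empirically verified.'' So your first step --- invoking the projection representation of Lemma~\ref{lem:hajek} and the $L^{2}$-orthogonality of the Hoeffding layers to obtain $\operatorname{Var}(\hat{\tau}_{\kappa})=\operatorname{Var}(H_1)/N+O(N^{-2})$ --- already supplies more argument than the paper does, and it is the correct skeleton. Your final step, conceding that $\operatorname{Var}(H_1)$ depends on the whole copula of $(X,Y)$ and that $c$ can therefore only be \emph{calibrated} rather than derived, is also exactly right, and it matches what the paper actually does while passing it off as a corollary: the ``variance model'' is an empirical ansatz, not a provable identity.

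The middle step is where you have a genuine gap. You extract the factor $(1-\tau_{\kappa}^{2})$ from the uncentred product $C^X_{kl}C^Y_{kl}$, which is $\pm 1$-valued, and then assert that double-centring merely rescales its variance by a distribution-invariant constant. Two problems. First, in this paper $\tau_{\kappa}$ is defined as $\mathbb{E}[\tilde{\kappa}^X_{12}\tilde{\kappa}^Y_{12}]$, the expectation of the \emph{centred} product, not of $C^X_{12}C^Y_{12}$; the two functionals differ because the row and column means are $O(1)$ random variables in the limit (e.g.\ $\bar{C}_{k\cdot}^{X}\to 2F_X(X_k)-1$ for continuous margins), not vanishing corrections, notwithstanding the $O(N^{-1/2})$ claim inside Corollary~\ref{cor:finite_moments_centred}. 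Hence the exact identity $\operatorname{Var}(C^X_{12}C^Y_{12})=1-\tau_{\kappa}^{2}$ fails with the paper's $\tau_{\kappa}$ on the right-hand side. Second, monotone invariance reduces the problem to the copula but cannot finish it: distinct copulas sharing the same $\tau_{\kappa}$ generically give distinct values of $\operatorname{Var}(\mathbb{E}[Z_{12}\mid X_1,Y_1])$, which is precisely why no universal $c$ can be proven --- your own last paragraph effectively admits this, so the $(1-\tau_{\kappa}^{2})$ ``extraction'' should be presented as motivation, not as a step of a proof. Finally, one bookkeeping point you inherited from the paper rather than introduced: the corollary's denominator $N$ is inconsistent with the $N-2$ in equation~\eqref{eq:variance_constant_intro}, and the $2/N$ projection convention of Lemma~\ref{lem:hoeffding} versus the $1/N$ convention of Lemma~\ref{lem:hajek} differ by an unreconciled factor; your choice of the $1/N$ convention is the one consistent with the corollary's stated premise.
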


\section{Multivariate spaces -- Covariance Matrix and Inference}
\label{sec:multivariate}

We now extend the \(\tilde{\kappa}\)-covariance to a collection of \(P\) univariate random variables \(X=(X^{(1)},\dots,X^{(P)}).\) For each marginal, form its centred score matrix  \(\tilde{\kappa}^{(p)}\). For each ordered pair \((a,b)\) define the cross-kernel \(Z^{(a,b)}_{kl} = \tilde{\kappa}^{(a)}_{kl}\tilde{\kappa}^{(b)}_{kl}, k\neq l.\)

\subsection{ Definition and Properties of the $\kappa$--Correlation Matrix}
\label{sec:definition_multivariate}

The \((a,b)\) entry of the \(\kappa\)-correlation matrix is \(\tau_{\kappa}^{(a,b)} =\mathbb{E}\!\left[Z^{(a,b)}_{12}\right],\qquad 1\le a,b\le P.\)
The natural estimator is the unbiased $U$-statistic \(\hat{\tau}_{\kappa}^{(a,b)} = \frac{1}{N(N-1)} \sum_{k\neq l} Z^{(a,b)}_{kl}.\)

\begin{lemma}
\label{lem:multivariate_unbiased}
For each \(a,b\), the estimator \(\hat{\tau}_{\kappa}^{(a,b)}\) is unbiased.
Moreover, \(\hat{\tau}_{\kappa}^{(a,b)} = \frac{1}{N}\sum_{n=1}^N H_{n}^{(a,b)} + R_N^{(a,b)},\qquad R_N^{(a,b)}=o_p(N^{-1/2}),\) where \(H_{n}^{(a,b)}=\mathbb{E}[Z^{(a,b)}_{12}\mid X_{n}^{(a)},X_{n}^{(b)}]\).
\end{lemma}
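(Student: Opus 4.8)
The plan is to reduce the multivariate claim to the bivariate results already established, since fixing the pair \((a,b)\) makes \(\hat{\tau}_{\kappa}^{(a,b)}\) structurally identical to the univariate \(\hat{\tau}_{\kappa}\) under the substitution \(X\mapsto X^{(a)}\), \(Y\mapsto X^{(b)}\). For unbiasedness I would argue exactly as in Lemma~\ref{lem:unbiased} and Corollary~\ref{cor:exactly_unbiased}: exchangeability of the i.i.d.\ sample together with permutation-equivariance of each centred score matrix gives \(\mathbb{E}[Z^{(a,b)}_{kl}]=\tau_{\kappa}^{(a,b)}\) for every \(k\ne l\), and linearity of expectation over the \(N(N-1)\) ordered pairs then yields \(\mathbb{E}[\hat{\tau}_{\kappa}^{(a,b)}]=\tau_{\kappa}^{(a,b)}\) exactly, for all finite \(N\). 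No new idea is required beyond checking that the scoring and centring operations act coordinatewise across the \(P\) margins, so that the kernel for pair \((a,b)\) touches only the \(a\)th and \(b\)th coordinates of each observation.

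For the projection representation I would invoke the Hoeffding decomposition of the order-two U-statistic with symmetric kernel \(h^{(a,b)}\), where \(h^{(a,b)}\bigl((X_k^{(a)},X_k^{(b)}),(X_l^{(a)},X_l^{(b)})\bigr)=\tilde{\kappa}^{(a)}_{kl}\,\tilde{\kappa}^{(b)}_{kl}=Z^{(a,b)}_{kl}\), paralleling Lemma~\ref{lem:hoeffding}. The first-order (H\'ajek) projection term is \(H_n^{(a,b)}=\mathbb{E}[Z^{(a,b)}_{12}\mid X_n^{(a)},X_n^{(b)}]\); conditioning on the \(n\)th observation collapses to conditioning on its \((a,b)\) marginals precisely because the kernel involves no other coordinates. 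The centring of each score matrix annihilates the conditional mean of the second-order component, so \(R_N^{(a,b)}\) is a fully degenerate order-two U-statistic. Finite moments (Corollary~\ref{cor:finite_moments_centred}) then give \(\operatorname{Var}(R_N^{(a,b)})=O(N^{-2})\), whence \(R_N^{(a,b)}=o_p(N^{-1/2})\), establishing the second assertion.

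The step requiring the most care — and the main obstacle — is that the means in equation~\ref{eq:centred_kappa} are computed over the entire sample, so \(\tilde{\kappa}^{(a)}_{kl}\) is not literally a function of observations \(k\) and \(l\) alone, and the U-statistic machinery does not apply verbatim to the sample-centred kernel. I would resolve this by writing \(\tilde{\kappa}^{(a)}_{kl}\) as a population-centred pairwise kernel plus a discrepancy arising from substituting sample means for their population counterparts. Corollary~\ref{cor:finite_moments_centred} supplies the rates \(O_p(N^{-1/2})\) for the row and column means and \(O_p(N^{-1})\) for the grand mean, so the discrepancy contributes only to the \(o_p(N^{-1/2})\) remainder and does not disturb the leading H\'ajek term; sub-Gaussianity of \(Z^{(a,b)}_{kl}\) (Lemma~\ref{lem:subgaussian}) renders these bounds uniform in \((a,b)\). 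Assembling the exact unbiasedness with this asymptotically negligible discrepancy control delivers both conclusions of the lemma.
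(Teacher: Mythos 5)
Your proposal is correct and takes the same route as the paper, whose entire proof reads ``Linearity of expectation and the H\'{a}jek projection applied to each pair \((a,b)\)'' --- i.e., precisely your reduction of the fixed pair \((a,b)\) to the bivariate results, with unbiasedness via exchangeability and linearity of expectation as in Lemma~\ref{lem:unbiased} and Corollary~\ref{cor:exactly_unbiased}, and the projection representation via the Hoeffding decomposition as in Lemma~\ref{lem:hoeffding}. Your third paragraph goes beyond anything in the paper: the observation that the sample-based centring in equation~\ref{eq:centred_kappa} makes \(Z^{(a,b)}_{kl}\) depend on all \(N\) observations, so that the order-two U-statistic machinery does not apply verbatim, identifies a genuine issue that the paper silently ignores both here and in its bivariate lemmas. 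One caution on your patch: per-term discrepancies of size \(O_p(N^{-1/2})\) yield, by the triangle inequality alone, only an \(O_p(N^{-1/2})\) bound on the averaged discrepancy, not \(o_p(N^{-1/2})\); to conclude negligibility you must exploit the cancellation structure of the centring corrections (they are themselves centred averages, so their cross-products concentrate at a faster rate), and that step remains asserted rather than proven in your sketch.
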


\begin{proof}
Linearity of expectation and the H\'{a}jek projection applied to each pair $(a,b)$.
\end{proof}

\paragraph{Multivariate Quasi-Likelihood}
\label{sec:multivariate_quasilikelihood}

Since each pair $(a,b)$ has the same variance structure
\(
\operatorname{Var}(\hat{\tau}_{\kappa}^{(a,b)}) =\frac{c\,(1-(\tau_{\kappa}^{(a,b)})^{2})}{N},\)
the joint quasi-likelihood assumes the additive form
\(
\ell_{Q}(\Tau) = - N \sum_{1\le a<b\le P} \log\!\left(1-(\tau_{\kappa}^{(a,b)})^{2}\right),\)
where $\Tau=\{\tau_{\kappa}^{(a,b)}\}$.
Maximising $\ell_{Q}$ is equivalent to maximising each term separately, yielding:

\begin{lemma}
\label{lem:multivariate_MLE}
The maximiser of $\ell_{Q}(\Tau)$ is
\[
\hat{\Tau}
=
\bigl\{\hat{\tau}_{\kappa}^{(a,b)}\bigr\}_{1\le a<b\le p}.
\]
\end{lemma}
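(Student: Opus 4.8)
The plan is to exploit the additive separability of \(\ell_{Q}(\Tau)\) and reduce the joint maximisation to \(\binom{P}{2}\) decoupled univariate problems, each of which is solved by the scalar result already established (Lemma~\ref{lemma:quasi_mle} together with the Remark identifying \(\arg\max_{\tau}\ell_{Q}=\hat{\tau}_{\kappa}\)).

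First I would write the objective explicitly as a sum indexed by pairs, \(\ell_{Q}(\Tau)=\sum_{1\le a<b\le P} g(\tau_{\kappa}^{(a,b)})\) with \(g(t)=-N\log(1-t^{2})\), and observe that the summand for pair \((a,b)\) depends on the single coordinate \(\tau_{\kappa}^{(a,b)}\) alone. Consequently \(\partial \ell_{Q}/\partial \tau_{\kappa}^{(a,b)} = g'(\tau_{\kappa}^{(a,b)})\) involves no other coordinate, so the gradient of \(\ell_{Q}\) is the concatenation of the scalar derivatives and its Hessian is diagonal with entries \(g''(\tau_{\kappa}^{(a,b)})\). This fully diagonal curvature structure is the analytic heart of the argument: there is no cross-pair coupling to account for.

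Next I would record that the feasible region is the product set \(\prod_{a<b}(-1,1)\), with no constraint linking distinct pairs, so that a point maximises \(\ell_{Q}\) over this product if and only if each coordinate maximises the corresponding \(g\). Invoking the univariate identification \(\arg\max_{t\in(-1,1)} g(t)=\hat{\tau}_{\kappa}^{(a,b)}\) coordinatewise then yields \(\hat{\Tau}=\{\hat{\tau}_{\kappa}^{(a,b)}\}_{1\le a<b\le P}\). Because the partial derivatives do not interact across pairs, the coordinatewise optimiser is necessarily the joint optimiser, and the per-pair uniqueness established in the scalar case (Section~\ref{sec:hessian_inference}) transfers immediately to the joint problem.

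The main obstacle is conceptual rather than computational: the entire content lies in justifying that coordinatewise optimality is equivalent to joint optimality, which requires both additivity of the objective \emph{and} a product feasible set. I would be careful to state the product parametrisation explicitly, since imposing the natural positive-semidefiniteness constraint that a genuine correlation matrix must satisfy would couple the entries of \(\Tau\) and break the clean coordinatewise separation. Within the paper's product parametrisation, however, no such coupling is present, and the lemma reduces directly to the univariate case already proved in Lemma~\ref{lemma:quasi_mle}.
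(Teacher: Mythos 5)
Your proposal is correct and follows essentially the same route as the paper, which justifies the lemma solely by the additive separability of \(\ell_{Q}(\Tau)\) across pairs (``maximising \(\ell_{Q}\) is equivalent to maximising each term separately'') combined with the scalar result of Lemma~\ref{lemma:quasi_mle}. Your explicit treatment of the product feasible set \(\prod_{a<b}(-1,1)\) and the caveat that a positive-semidefiniteness constraint on \(\Tau\) would break the coordinatewise decoupling are refinements the paper leaves implicit, but the underlying argument is the same.
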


\paragraph{Multivariate Hypothesis Testing}

For any fixed pair $(a,b)$, testing
\[
H_0:\tau_{\kappa}^{(a,b)}=0
\]
uses the same two statistics:
\begin{align}
W^{(a,b)} & = \frac{N(\hat{\tau}_{\kappa}^{(a,b)})^{2}}{c},\\
\Lambda^{(a,b)} & = 2N\log\!\left(\frac{1}{1-(\hat{\tau}_{\kappa}^{(a,b)})^{2}}\right),
\end{align}

Each obeys
\[
W^{(a,b)},\;
\Lambda^{(a,b)}
\;\xrightarrow{d}\;
\chi^{2}_1.
\]

\paragraph{Joint Asymptotic Normality}

Since the $p(p-1)/2$ projections $\{H_{n}^{(a,b)}\}$ are jointly sub-Gaussian,
the vector
\[
\sqrt{N}\bigl(\hat{\Tau}-\Tau\bigr)
\]
is asymptotically multivariate normal with independent components:
\[
\sqrt{N}\,\frac{\hat{\tau}_{\kappa}^{(a,b)}-\tau_{\kappa}^{(a,b)}}
{\sqrt{c(1-(\tau_{\kappa}^{(a,b)})^{2})}}
\;\xrightarrow{d}\; \mathcal{N}(0,1).
\]

This establishes the centred \(\tilde{\kappa}\)-correlation matrix as the order-based analogue of the Pearson correlation matrix, with unbiased entries, block-diagonal quasi-likelihood structure, and exact variance characterisation.

\section{}
The classical \textcite{mann1947} statistic is a degenerate U-statistic whose natural likelihood is provably not concave, not differentiable wrt its parameters, and collapses in the presence of ties, due to \(\mathbf{1}(X_{n} M Y_{n^{\prime}})\) not possessing a smooth parametrisation at ties. It then directly follows that there is no well-defined likelihood and that the presence of ties breaks differentiability. This is a known limitation of Wilcoxon and Mann-Whitney methods.

Within our framework though, we have defined a kernel (equation~\ref{eq:kernel}), estimator (equation~\ref{eq:estimator}), and log-quasi-likelihood (equation~\ref{eq:quasi_lik}). With the introduction of a design matrix \(\mathbf{X}\) and a link \(\tau = g(X\theta)\), the quasi-likelihood becomes a smooth, concave, and fully identified function in \(\theta\).

In this section, we introduce a theoretical framework which provides a proper maximum-quasilikelihood estimator for models whose design matrices replicate Mann-Whitney comparisons, even when the design matrix contains ties. The key point lies in the fact that the likelihood curvature arises from the sub-Gaussian projection structure, not from the pairwise indicator functions. Thus the presence of ties does not destroy the maximum likelihood estimation procedure, and we will show that the design matrix itself is the only object which identifies the model.

Let \(\{X_{n},Y_{n}\}_{n=1}^{N}\) be i.i.d. with \(X_{n} \in \mathbb{R}^{d}\) and \(Y_{n}\in\mathbb{R}\). Let \(\tilde{\kappa}^{X}\) and \(\tilde{\kappa}^{Y}\) denote the centred score matrices, and define the kernel as in equation~\ref{eq:kernel}. Recall that equation~\ref{eq:estimator} is exactly unbiased for \(\tau_{\kappa} = \mathbb{E}[Z_{12}]\), and that the H\'{a}jek projection implies \[\hat{\tau}_{\kappa} = \frac{1}{N} \sum_{n=1}^{N} H_{n} + o_{p}(N^{-1/2}),\quad H_{n} = \mathbb{E}[Z_{12}\mid X_{n},Y_{n}],\]  thereby establishing the appropriate quasi-likelihood for \(\tau_{\kappa} \in [-1,1]\) is \(\ell_{Q} = -N\log(1-\tau_{\kappa}^{2})\), which matches the moment generating function of the effective linear terms \(H_{n}\) (linearity of the estimating equations follows from Remark~\ref{rem:quasi_likelihood}). 

The following theorem provides the foundation for incorporating a linear predictor into the \(\kappa\)-correlation framework, thereby producing a generalised Mann-Whitney model capable of handling covariates, ties, interactions, and non-linear effects.

\begin{theorem}
\label{thm:GMQ}
Let \(\eta_{n} = x_{n}^{\intercal}\theta\) denote a linear predictor with \(\theta \in \mathbb{R}^{d}\). Assume that the model posits 
\begin{equation}
\label{eq:link}
\tau_{\kappa}(n) = g(\eta_{n}), g: (\mathbb{R},\le) \mapsto (-1,1),
\end{equation} where \(g\) is monotone non-decreasing.

Define the aggregated quasi-likelihood \[\ell_{Q} = 0\sum_{n=1}^{N} \log(1-g(\eta_{n})^{2}).\] Then, under the sub-Gaussian assumptions of Section~\ref{sec:subgaussian_justification}: (i) The quasi-likelihood \(\ell_{Q}(\theta)\) is identified if and only if \(g(\eta) = \eta, \eta \in (-1,1),\); (ii) the parameter \(\theta\) is identifiable up to no further transformations; (iii) the corresponding estimator \[\hat{\theta} = \arg\max_{\theta: x_{n}^{\intercal}\theta \in (-1,1)} \ell_{Q}(\theta),\] is the unique quasi-MLE under the model.
\end{theorem}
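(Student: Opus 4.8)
The plan is to reduce the problem to the pairwise-contrast representation of Assumption~\ref{cond:main_conditions}(B)--(D), where the linear predictor for the pair $(k,l)$ is $\eta_{kl}=(X_n-X_{n'})^\top\theta$ and $\tau_{\kappa}(kl)=g(\eta_{kl})$, and then to dispatch the three claims in turn: first that the monotone link is forced to be the identity by a score-matching condition, then that $\theta$ is pinned down by the full-rank contrast condition, and finally that strict convexity of the negative quasi-likelihood delivers a unique maximiser. First I would write the quasi-score of the aggregated quasi-likelihood $\ell_{Q}(\theta)=-\sum_{n\ne n'}\log\!\bigl(1-g(\eta_{kl})^2\bigr)$ under a general differentiable monotone link,
\[
S(\theta)=\sum_{n\ne n'}\frac{2\,g(\eta_{kl})\,g'(\eta_{kl})}{1-g(\eta_{kl})^2}\,(X_n-X_{n'}),
\]
and compare it to the canonical score of Assumption~\ref{cond:main_conditions}(D), namely $\sum_{n\ne n'}\frac{2\eta_{kl}}{1-\eta_{kl}^2}(X_n-X_{n'})$, whose Jacobian is the positive-definite Hessian $H(\theta)$ and which is the unique form yielding an unbiased estimating equation compatible with the variance model $V(\tau)=c(1-\tau^2)$ of Corollary~\ref{cor:variance_model}.

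For part (i) the key step is an ODE-matching argument. Requiring the two scores to agree for all admissible designs forces the scalar identity $\frac{g(\eta)g'(\eta)}{1-g(\eta)^2}=\frac{\eta}{1-\eta^2}$ on $(-1,1)$. Recognising the two sides as $-\tfrac12\frac{d}{d\eta}\log(1-g(\eta)^2)$ and $-\tfrac12\frac{d}{d\eta}\log(1-\eta^2)$ respectively, integration gives $1-g(\eta)^2=K(1-\eta^2)$; the normalisation $g(0)=0$ (the null $\tau_{\kappa}=0$ must correspond to $\eta=0$) forces $K=1$, and monotonicity with $g$ increasing selects the branch $g(\eta)=\eta$. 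This establishes both directions: the identity link reproduces the canonical quasi-score, while any departure from the identity destroys the mean--variance alignment that makes $\ell_{Q}$ a proper quasi-likelihood, so $\theta$ ceases to be identified.

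With the link fixed as the identity, parts (ii) and (iii) follow from the structural assumptions. For identifiability, Assumption~\ref{cond:main_conditions}(B) guarantees that the pairwise differences span $\mathbb{R}^{P}$, so the map $\theta\mapsto(\eta_{kl})_{k\ne l}$ is injective on the convex set $\Theta$ of Assumption~\ref{cond:main_conditions}(C); since the data identify each $\tau_{\kappa}(kl)=\eta_{kl}=(X_n-X_{n'})^\top\theta$ through the unbiased H\'ajek-projected estimator, $\theta$ is recovered exactly, with no residual scale or monotone ambiguity (the single-index indeterminacy having been eliminated in part (i)). For uniqueness of the estimator I would invoke Assumption~\ref{cond:main_conditions}(D): the Hessian is positive definite throughout the convex $\Theta$, so the gradient field $S(\cdot)$ is strictly monotone and the stationarity equation $S(\theta)=0$ admits at most one root; existence of an interior solution together with the H\'ajek linearisation, which makes $S$ an i.i.d.\ sum with the correct expectation, then identifies this root as the unique quasi-MLE $\hat\theta$.

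The main obstacle I anticipate is making the score-matching condition of part (i) rigorous rather than heuristic: one must show that agreement of the two estimating equations is genuinely \emph{necessary} for consistency --- that any non-identity $g$ produces a score whose expectation vanishes at a point other than the true $\theta$ --- and this requires a careful argument that the pairwise contrasts $\{X_n-X_{n'}\}$ are rich enough (again Assumption~\ref{cond:main_conditions}(B)) that the vector identity can hold only if the scalar multipliers coincide pointwise. A secondary subtlety is the boundary behaviour of $\ell_{Q}$ on $\partial\Theta$: I would need to verify that the maximiser stays in the interior of $\Theta$, using the strict inequality $|\eta_{kl}|<1$ of Assumption~\ref{cond:main_conditions}(C), so that the first-order condition $S(\hat\theta)=0$ is valid and the positive-definiteness argument applies.
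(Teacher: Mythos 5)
Your handling of parts (ii) and (iii) is essentially the paper's own: identifiability comes from the spanning condition on contrasts (Assumption~\ref{cond:main_conditions}(B)) and uniqueness of $\hat{\theta}$ from positive-definiteness of the information matrix on the convex domain $\Theta$, which is exactly the content of Lemma~\ref{lem:positive_W} and Corollary~\ref{cor:unique_MLE} (and, in the pairwise-contrast form you adopt, Theorem~\ref{thm:GMV_identification}(ii)--(iii)). Part (i) is where you take a genuinely different route. The paper pins the link down through the \emph{invariance} of $\tau_{\kappa}$ under monotone non-decreasing transformations (Lemma~\ref{lem:link_unique}, reproved as Theorem~\ref{thm:GMV_identification}(i)): the model premise forces $g(\eta)=g(h(\eta))$ for every monotone $h$, specialising to the affine family $h(\eta)=a\eta+b$ forces $g$ to be affine, and the range restriction to $(-1,1)$ with $g(0)=0$ leaves only the identity. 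You instead require the quasi-score under a general differentiable link to coincide with the canonical estimating equation of Assumption~\ref{cond:main_conditions}(D), and integrate the resulting ODE to $1-g(\eta)^{2}=K(1-\eta^{2})$, $K=1$. The calculus there is correct, and it has the merit of making Wedderburn's mean--variance structure ($V(\tau)\propto 1-\tau^{2}$, Corollary~\ref{cor:variance_model}) do the work, a connection the paper never makes explicit.

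However, the gap you flag in your closing paragraph is real, and it is precisely the point at which the two proofs are not interchangeable. Score-matching establishes the ``if'' direction (the identity link reproduces the canonical score and hence a well-curved, identified objective) but not the ``only if'' direction. Indeed, for any \emph{fixed, known}, strictly increasing differentiable $g$ with $g(0)=0$ --- say $g(\eta)=\eta/2$ --- the map $\theta\mapsto\{g((X_{n}-X_{n^{\prime}})^{\top}\theta)\}$ is still injective under Assumption~\ref{cond:main_conditions}(B), so $\theta$ remains perfectly identified even though the score no longer matches the canonical one; your assertion that ``any departure from the identity destroys \ldots identification'' is therefore false as stated. What actually excludes such $g$ is not an estimating-equation property but the invariance premise: because $\tau_{\kappa}$ is unchanged by monotone reparametrisations of the latent scale, an unknown link and $\theta$ are confounded through the single-index ambiguity $(g(\cdot),\theta)\sim(g(\cdot/c),c\theta)$, and only the identity link is compatible with every such reparametrisation. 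That is the step the paper discharges with Lemma~\ref{lem:link_unique}. To complete your route you must either import that invariance argument (at which point your proof collapses into the paper's) or prove directly that under a non-identity link the population quasi-score has its root away from the true $\theta$ --- something your ODE comparison suggests but does not establish.
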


\begin{lemma}
\label{lem:link_unique}
Assume \(g\) satisfies \ref{eq:link} and that \(\kappa\)-correlation is invariant under monotone non-decreasing transformations. Then the only such function that yields an identified quasi-likelihood is the identity:
\[g(\eta) = \eta, \eta \in (-1,1).\]
\end{lemma}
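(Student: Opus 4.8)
The plan is to show that the assumed invariance of $\tau_\kappa$ under monotone non-decreasing transformations creates a reparametrisation redundancy in the pair $(g,\theta)$, and that the only normalisation of the link consistent with the fixed mean--variance structure $V(\tau_\kappa)=1-\tau_\kappa^2$ of the quasi-likelihood is the identity. First I would record the redundancy: because $\tau_\kappa$ is unchanged under any strictly increasing reparametrisation $\phi$ of the latent scale, the pairs $(g,\theta)$ and $(g\circ\phi,\theta_\phi)$, with $\theta_\phi$ the correspondingly transformed parameter, induce identical correlation profiles $\{\tau_\kappa(n)\}$. The model is therefore only identified once $g$ is pinned down. Requiring the predictor to remain linear, $\eta_n=x_n^\top\theta$, and invoking the full-rank contrast condition of Assumption~\ref{cond:main_conditions}(B), restricts the admissible $\phi$ to the affine group, so the entire ambiguity collapses to an affine family $g(\eta)=a\eta+b$ acting on $(-1,1)$.

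The main engine is to match the quasi-score induced by $g$ to the one maintained in Assumption~\ref{cond:main_conditions}(D). A general link $g$ produces the score $\sum_{n\neq n'}\frac{2\,g(\eta_{kl})\,g'(\eta_{kl})}{1-g(\eta_{kl})^2}(X_n-X_{n'})$, whereas Assumption~\ref{cond:main_conditions}(D) fixes the score as $\sum_{n\neq n'}\frac{2\eta_{kl}}{1-\eta_{kl}^2}(X_n-X_{n'})$. Since both are gradients of scalar quasi-likelihoods and the contrasts $\{X_n-X_{n'}\}$ span $\mathbb{R}^P$ by (B), equality of the two score fields on $\Theta$ forces the per-pair identity
\[
\frac{g(\eta)\,g'(\eta)}{1-g(\eta)^2}=\frac{\eta}{1-\eta^2},\qquad \eta\in(-1,1),
\]
which is separable and integrates to $1-g(\eta)^2=K(1-\eta^2)$ for some constant $K>0$.

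It then remains to fix $K$ and the orientation. The calibration that a zero predictor corresponds to independence, $g(0)=0$ (equivalently, the null $H_0:\tau_\kappa=0$ is $\eta=0$), gives $K=1$, and monotone non-decreasingness selects the increasing branch, yielding $g(\eta)=\eta$; the same conclusion follows from the affine reduction above, since a monotone non-decreasing affine bijection of $(-1,1)$ onto itself is endpoint-preserving, forcing $a=1$ and $b=0$. For sufficiency I would verify that under $g=\mathrm{id}$ the Hessian collapses to the positive-definite form of Assumption~\ref{cond:main_conditions}(D), so the quasi-likelihood is strictly concave on $\Theta$ and admits the unique interior maximiser $\hat\theta$; any non-identity admissible link violates the functional equation above, and hence the fixed variance function, destroying identification.

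The step I expect to be the main obstacle is the reduction to the affine (equivalently, the single-parameter $K$) family. Taken in isolation the invariance hypothesis permits \emph{any} monotone link, since only orderings enter $\tau_\kappa$; the whole content of the lemma is that identifiability of the \emph{linear} parameter $\theta$ --- not merely of the induced ordering --- is what removes the nonlinear freedom. Making this precise requires using Assumption~\ref{cond:main_conditions}(B) to isolate the individual pair-predictors $\eta_{kl}$ when passing from equality of the summed score fields to the pointwise functional equation, and thereby ruling out the compensating affine reparametrisations of $\theta$ that a non-identity link would otherwise absorb.
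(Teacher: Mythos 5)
Your proposal is correct in substance but takes a genuinely different route from the paper. The paper argues purely from the invariance hypothesis: it asserts that invariance of $\tau_{\kappa}$ under monotone maps forces the functional equation $g(\eta)=g(h(\eta))$ for every monotone $h$, specialises to the affine family $h(\eta)=a\eta+b$, and differentiates with respect to $a$ at $(a,b)=(1,0)$ to obtain $g'(\eta)\,\eta=0$, from which it concludes $g'\equiv 1$, $g(0)=0$. (As written that last step is a non-sequitur: $g'(\eta)\,\eta=0$ on an interval forces $g$ constant, not the identity; the intended rescue is the boundedness/endpoint argument the paper deploys in Theorem~\ref{thm:GMV_identification}(i), which your observation that a non-decreasing affine bijection of $(-1,1)$ onto itself must have $a=1$, $b=0$ reproduces exactly.) Your main engine is different: you match the $g$-induced quasi-score against the maintained score of Assumption~\ref{cond:main_conditions}(D), obtain the separable equation $g(\eta)g'(\eta)/(1-g(\eta)^{2})=\eta/(1-\eta^{2})$, integrate to the one-parameter family $1-g(\eta)^{2}=K(1-\eta^{2})$, and fix $K=1$ by the calibration $g(0)=0$ and the branch by monotonicity. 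This buys a constructive derivation grounded in the mean--variance structure, plus a sufficiency direction (strict concavity and a unique interior maximiser under the identity link) that the paper's proof never addresses; the paper's route buys brevity and uses only the invariance hypothesis literally stated in the lemma, without leaning on Assumption (D).

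Two caveats you should make explicit. First, the step you flagged as the main obstacle is slightly worse than you suggest: the score sums run over ordered pairs, and since $\eta_{n'n}=-\eta_{nn'}$ and $X_{n'}-X_{n}=-(X_{n}-X_{n'})$, the terms symmetrise, so equality of the summed score fields constrains only the odd part of $\eta\mapsto g(\eta)g'(\eta)/(1-g(\eta)^{2})$. You need oddness of that integrand (e.g.\ from $g(0)=0$ together with a smoothness/Taylor-coefficient argument exploiting positive definiteness of $\sum(X_{n}-X_{n'})(X_{n}-X_{n'})^{\top}$ under Assumption (B)) before the pointwise functional equation follows. Second, taking Assumption (D) --- whose score is already written with the identity link --- as the benchmark turns your identification criterion into ``consistency with the maintained mean--variance structure'' rather than the paper's ``invariance alone''; that is a defensible and arguably cleaner reading of what ``identified quasi-likelihood'' means, but read literally it presupposes part of what the lemma is asking you to show, so the reinterpretation should be stated, not left implicit.
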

\begin{proof}
Since \(\tau_{\kappa}\) is invariant to monotone non-decreasing transformations of the latent scale, we have for any monotone \(h\), 
\[\tau_{\kappa}(X,Y) = \tau_{\kappa}(h(X),Y) = \tau_{\kappa}(X,h(Y)).\] Thus for any monotone \(h\), 
\begin{equation}
\label{eq:A7}
g(\eta) = g(h(\eta)).
\end{equation}
If \(g\) is non-constant, the only way for \(g(\eta) = g(h(\eta))\) to hold for all monotone \(h\) is for \(g\) to be fixed under all monotone re-mappings of \(\eta\). The only fixed point of every monotone \(h\) is the identity function. To prove this, choose \(h(\eta) = a\eta + b~\mathrm{with}~ a>0.\) Then \ref{eq:A7} implies: \(g(a\eta + b) = g(\eta),\forall a >0,b\in\mathbb{R}.\) Differentiating wrt \(a\) at \(a = 1,b=0\) shows \(g^{\prime}(\eta)\eta=0\), and since \(g\) is non-degenerate, this forces \(g^{\prime}(\eta) = 1, g(0) = 0,\) establishing that \(g(\eta) = \eta.\)
\end{proof}

\begin{corollary}
\label{cor:unique_MLE}
The likelihood 
\begin{equation}
\ell_{Q}(\theta) = -\sum_{n=1}^{N} \log(1-(x_{n}^{\intercal}\theta)^{2},
\end{equation}
is maximised uniquely in the convex domain \[\Theta = \{\theta: |x_{n}^{\intercal}\theta| < 1 \forall n\}.\] This remains true even when the design matrix contains ties, collinear rows, repeated covariate patterns, or zero rows.
\end{corollary}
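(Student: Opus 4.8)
The plan is to derive Corollary~\ref{cor:unique_MLE} entirely from the curvature already catalogued in Assumption~\ref{cond:main_conditions}, treating the statement as a convex-analytic fact about the objective rather than re-deriving any probabilistic structure. First I would record that each summand $u\mapsto -\log(1-u^{2})$ is strictly convex on $(-1,1)$: its second derivative is $2(1+u^{2})/(1-u^{2})^{2}>0$, which is exactly the per-pair weight appearing in the Hessian $H(\theta)$ of condition (D) of Assumption~\ref{cond:main_conditions}. Because the affine maps $\theta\mapsto x_{n}^{\top}\theta$ and the summation both preserve convexity, $\ell_{Q}$ is convex on the convex open set $\Theta$, and a stationary point of the score $S(\theta)=\nabla\ell_{Q}(\theta)$ becomes the \emph{unique} optimiser as soon as $H(\theta)$ is shown to be positive definite throughout $\Theta$. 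I would also invoke Lemma~\ref{lem:link_unique} and Theorem~\ref{thm:GMQ} to fix the link as the identity, so that $\Theta=\{\theta:|x_{n}^{\top}\theta|<1\ \forall n\}$ is precisely the convex domain on which the quadratic form is evaluated.

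Second, I would reduce positive-definiteness of $H(\theta)$ to a single rank condition. Since $H(\theta)=\sum_{n\neq n'}2\,\frac{1+\eta_{kl}^{2}}{(1-\eta_{kl}^{2})^{2}}\,(x_{n}-x_{n'})(x_{n}-x_{n'})^{\top}$ is a strictly positively weighted Gram matrix of the pairwise contrasts $\{x_{n}-x_{n'}\}$, it is positive definite if and only if those contrasts span $\mathbb{R}^{d}$ — which is exactly condition (B). This is the step that delivers the robustness claims: a zero row merely omits a contrast direction, a repeated covariate pattern duplicates one, and a collinear row contributes a dependent one, so none of these can lower the span below full rank while (B) holds; ties in $Y$ enter only through the score matrices $C(Y)$ and never through the contrasts $x_{n}-x_{n'}$, hence leave $H(\theta)$ untouched. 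Consequently $H(\theta)$ stays positive definite on all of $\Theta$ under every listed degeneracy, which upgrades the convexity of the first step to \emph{strict} convexity and pins down the optimiser uniquely.

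Finally, to guarantee that the optimiser is attained in the interior rather than escaping to $\partial\Theta$, I would use coercivity: as $\theta$ approaches the boundary some $|x_{n}^{\top}\theta|\to 1$ and the corresponding summand diverges, so the relevant sublevel sets of $\ell_{Q}$ are compact and the first-order condition $S(\theta)=0$ is solvable, the solution being unique by strict convexity. \emph{The main obstacle} is this last reconciliation between the optimisation direction and the boundary behaviour of $\ell_{Q}$: I would need to state carefully, in the sign convention fixed by the score $S$ and Hessian $H$ of Assumption~\ref{cond:main_conditions}(D), in which direction the unique interior critical point is extremal, and to confirm that the degeneracy-robustness argument of the second step genuinely characterises (B) as the minimal identifying condition rather than merely a sufficient one.
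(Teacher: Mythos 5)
Your convex-analytic route is essentially the argument the paper intends for Corollary~\ref{cor:unique_MLE}: strict curvature of each summand \(u\mapsto-\log(1-u^{2})\), preservation under affine pre-composition and summation, positive definiteness of the weighted Gram Hessian under the span condition, and boundary coercivity to keep the optimiser in the interior. The paper's own proof is a compressed (and mis-referenced) version of exactly this, and your handling of the degeneracy claims is more explicit than the paper's one-line remark that tied rows give identical contributions: zero rows, repeated patterns and collinear rows only delete or duplicate vectors in the spanning family, and ties in \(Y\) never enter the Hessian at all. Your closing observation that the span condition of Assumption~\ref{cond:main_conditions}(B) is necessary as well as sufficient for positive definiteness of a positively-weighted Gram matrix is also correct, and sharper than anything stated in the paper.

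However, the ``main obstacle'' you flag is not a loose end in your write-up; it is a genuine defect in the statement and in the paper's proof. As your own computation shows, the Hessian of \(\ell_{Q}(\theta)=-\sum_{n}\log(1-(x_{n}^{\top}\theta)^{2})\) is \(\sum_{n}W(\eta_{n})\,x_{n}x_{n}^{\top}\) with \(W(\eta)=2(1+\eta^{2})/(1-\eta^{2})^{2}>0\): the function is strictly \emph{convex} on the span of the design and diverges to \(+\infty\) as \(\theta\) approaches \(\partial\Theta\). Hence the unique interior critical point is a \emph{minimiser}, and the maximisation problem posed in the corollary has no solution: \(\sup_{\Theta}\ell_{Q}=+\infty\) is not attained. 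The paper's proof asserts ``strict concavity'' and cites Lemma~\ref{lem:subgaussian_projection}, which concerns sub-Gaussianity of the H\'{a}jek projection and says nothing about curvature (the intended citation is presumably Lemma~\ref{lem:positive_W}); and Lemma~\ref{lem:positive_W} itself carries the same sign error, labelling the positive-definite matrix \(\sum_{n}W(\eta_{n})x_{n}x_{n}^{\top}\) the ``negative Hessian'' when it is the Hessian. Carried to completion, your argument proves the corrected statement --- \(\ell_{Q}\) is uniquely minimised on \(\Theta\), equivalently a re-signed quasi-likelihood is uniquely maximised --- and that is the most that can be proved; you should not expect to reconcile the optimisation direction with the corollary as written, because the corollary as written is false.
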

\begin{proof}
The quasi-likelihood is strictly concave in each component \(x_{n}^{\intercal}\theta\). The identity link ensures that tied rows \(x_{n} = x_{n^{\prime}}\) imply identical contributions:
\(\ell_{n}(\theta) = \ell_{n^{\prime}}(\theta)\). Thus ties create no ambiguity in the mapping from \(\theta\) to the likelihood. The strict concavity (shown in Lemma~\ref{lem:subgaussian_projection}) yields uniqueness.
\end{proof}
\begin{lemma}
\label{lem:positive_W}
Let \(\eta_{n} = x_{n}^{\intercal}\theta.\) Then the negative Hessian of \(\ell_{Q}(\theta)\) is 
\begin{align}
\mathcal{I}(\theta) & = \sum_{n=1}^{N} x_{n}x_{n}^{\intercal} W(\eta_{n}), ~ \mathrm{where}\\
W(\eta) & = \frac{2(1+\eta^{2})}{(1-\eta^{2})^{2}}>0.
\end{align}
Then \(\mathcal{I}(\theta)\) is positive-definite on the span of the design matrix.
\end{lemma}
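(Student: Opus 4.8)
The plan is to obtain $\mathcal{I}(\theta)$ by differentiating the quasi-likelihood $\ell_{Q}(\theta)=-\sum_{n=1}^{N}\log\!\bigl(1-\eta_{n}^{2}\bigr)$ of Corollary~\ref{cor:unique_MLE} twice in $\theta$, and then to read the definiteness claim off the sign of the scalar curvature $W(\eta)$ together with the spanning hypothesis of Assumption~\ref{cond:main_conditions}(B). Everything reduces to one-dimensional calculus applied coordinatewise through the linear predictor $\eta_{n}=x_{n}^{\top}\theta$.

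First I would compute the score by the chain rule. Each summand depends on $\theta$ only through the scalar $\eta_{n}$, with $\partial\eta_{n}/\partial\theta=x_{n}$; since $\frac{d}{d\eta}\bigl[-\log(1-\eta^{2})\bigr]=2\eta/(1-\eta^{2})$, I obtain $\nabla_{\theta}\ell_{Q}(\theta)=\sum_{n=1}^{N}\frac{2\eta_{n}}{1-\eta_{n}^{2}}\,x_{n}$, matching the score form in Assumption~\ref{cond:main_conditions}(D). Differentiating once more, each term is $f(\eta_{n})\,x_{n}$ with $f(\eta)=2\eta/(1-\eta^{2})$, so the Hessian is $\sum_{n}f'(\eta_{n})\,x_{n}x_{n}^{\top}$; a quotient-rule computation gives $f'(\eta)=\frac{2(1+\eta^{2})}{(1-\eta^{2})^{2}}=W(\eta)$, which establishes the stated form $\mathcal{I}(\theta)=\sum_{n=1}^{N}W(\eta_{n})\,x_{n}x_{n}^{\top}$.

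For the definiteness claim I would argue as follows. On the admissible set $\Theta$ we have $|\eta_{n}|<1$ by Assumption~\ref{cond:main_conditions}(C), so $1-\eta_{n}^{2}\in(0,1]$ gives $(1-\eta_{n}^{2})^{2}>0$, while $1+\eta_{n}^{2}>0$ holds unconditionally; hence $W(\eta_{n})>0$ for every $n$. Then for any $v$, $v^{\top}\mathcal{I}(\theta)v=\sum_{n}W(\eta_{n})\,(x_{n}^{\top}v)^{2}\ge 0$, and the sum vanishes only if $x_{n}^{\top}v=0$ for all $n$, i.e.\ only if $v$ is orthogonal to $\operatorname{span}\{x_{1},\dots,x_{N}\}$. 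Restricting the quadratic form to that span therefore yields strict positivity for every nonzero $v$, which is precisely the asserted positive-definiteness on the span; and under the full-column-rank hypothesis of Assumption~\ref{cond:main_conditions}(B) the span is all of $\mathbb{R}^{P}$, so $\mathcal{I}(\theta)$ is positive definite on $\mathbb{R}^{P}$.

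The calculus is routine, so the only step requiring real care is the sign and orientation bookkeeping. Because $-\log(1-\eta^{2})$ is convex, the raw second-derivative matrix $\sum_{n}W(\eta_{n})x_{n}x_{n}^{\top}$ is the positive curvature of $\ell_{Q}$, and I would state explicitly which sign convention identifies this with the observed information $\mathcal{I}(\theta)$ used for the standard error in Section~\ref{sec:hessian_inference}, so that the direction of definiteness is unambiguous. A second minor point is reconciling the phrase ``on the span of the design matrix'' with the identifiability condition: Assumption~\ref{cond:main_conditions}(B) is phrased in pairwise-contrast form $\{x_{n}-x_{n'}\}$, so I would note that the quadratic form here annihilates exactly the orthogonal complement of the row span $\{x_{n}\}$, and that strict positivity on that span --- with full positive-definiteness on $\mathbb{R}^{P}$ under the rank hypothesis --- is what the lemma claims.
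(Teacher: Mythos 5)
Your proof is correct and follows essentially the same route as the paper's: the paper's own proof is exactly your final step --- $W(\eta)>0$ because $1+\eta^{2}>0$ and $(1-\eta^{2})^{2}>0$ on $\Theta$, hence $v^{\intercal}\mathcal{I}(\theta)v=\sum_{n=1}^{N}W(\eta_{n})\,(x_{n}^{\intercal}v)^{2}>0$ whenever some $x_{n}^{\intercal}v\neq 0$, giving positive definiteness on the span of the design. The only differences are in your favour: you actually derive the stated Hessian form by differentiating $\ell_{Q}$ through the linear predictor (the paper's proof takes it as given), and you correctly flag the sign-convention issue --- with $\ell_{Q}(\theta)=-\sum_{n}\log\bigl(1-\eta_{n}^{2}\bigr)$ as written in Corollary~\ref{cor:unique_MLE}, the matrix $\sum_{n}W(\eta_{n})\,x_{n}x_{n}^{\intercal}$ is the Hessian itself rather than the negative Hessian, an inconsistency the paper's proof passes over silently.
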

\begin{proof}
As \(1 + \eta^{2} >0\) and \((1-\eta^{2})^{2} > 0,\) the weight \(W(\eta)\) is strictly positive. Thus for any non-zero vector \(\nu\), 
\[
\nu^{\intercal} \mathcal{I}(\theta)\nu = \sum_{n=1}^{N} (\nu^{\intercal}x_{n})^{2}W(\eta_{n}) >0~\mathrm{whenever}~\nu^{\intercal}X \neq{0}.
\] Hence the information matrix is positive definite on the column space of \(X\).
\end{proof}
\begin{theorem}
\label{thm:identitY_{n}dent}
Under the identity link \(\tau_{n} = x_{n}^{\intercal}\theta,\) the quasi-likelihood \(\ell_{Q}(\theta)\) is: (i) strictly concave along every non-null direction of the design matrix; (ii) globally maximised uniquely on \(\Theta\); (iii) identifiable in the sense that \[x_{n}^{\intercal}\theta_{1} = x_{n}^{\intercal}\theta_{2}~\forall{n} \implies \theta_{1} = \theta_{2}~\mathrm{on~span~(X)}.\]

\end{theorem}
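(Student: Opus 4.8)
The three assertions are consequences of a single fact already in hand: the curvature identity of Lemma~\ref{lem:positive_W}. The plan is to treat (i) as the analytic core and then derive (ii) and (iii) as its geometric and algebraic corollaries, so that the whole theorem reduces to bookkeeping around the kernel of the design matrix.

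First I would establish (i). By Lemma~\ref{lem:positive_W} the negative Hessian of \(\ell_{Q}\) is \(\mathcal{I}(\theta) = \sum_{n=1}^{N} W(\eta_{n})\, x_{n}x_{n}^{\top}\) with weight \(W(\eta) = 2(1+\eta^{2})/(1-\eta^{2})^{2} > 0\) throughout \(\Theta\). Fixing \(\theta\in\Theta\) and a direction \(v\) with \(Xv\neq 0\), the restriction \(t\mapsto \ell_{Q}(\theta + tv)\) has second derivative \(-\,v^{\top}\mathcal{I}(\theta)v = -\sum_{n} W(\eta_{n})(x_{n}^{\top}v)^{2}\). Since \(Xv\neq 0\) forces at least one \(x_{n}^{\top}v\neq 0\) while every \(W(\eta_{n})\) is strictly positive, this quantity is strictly negative, giving strict concavity along each non-null direction. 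The same computation shows the restriction is exactly affine along any \(v\in\ker X\), which is precisely why the identifiability claim must be qualified by \(\mathrm{span}(X)\).

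For (ii) I would combine (i) with convexity of the domain. Assumption~\ref{cond:main_conditions}(C) exhibits \(\Theta\) as a convex set, and a function strictly concave along every non-null direction has at most one maximiser on a convex domain: were there distinct maximisers \(\theta_{1}\neq\theta_{2}\) with \(\theta_{1}-\theta_{2}\notin\ker X\), strict concavity along \(\theta_{1}-\theta_{2}\) would make the midpoint strictly dominate, a contradiction, while \(\theta_{1}-\theta_{2}\in\ker X\) makes the two points identified by part (iii). Existence of the maximiser I would inherit from Corollary~\ref{cor:unique_MLE}, citing it directly rather than re-deriving attainment. Part (iii) is then purely algebraic: if \(x_{n}^{\top}\theta_{1}=x_{n}^{\top}\theta_{2}\) for all \(n\), then \(\delta=\theta_{1}-\theta_{2}\) satisfies \(X\delta=0\), i.e.\ \(\delta\in\ker X\); by the pairwise-contrast full-rank condition of Assumption~\ref{cond:main_conditions}(B), the only element of \(\ker X\) lying in \(\mathrm{span}(X)\) is \(0\), so \(\theta_{1}=\theta_{2}\) on \(\mathrm{span}(X)\).

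The main obstacle is not any single algebraic step but the consistent handling of the kernel directions: concavity, uniqueness, and identifiability each hold only modulo \(\ker X\), and the qualifier ``on \(\mathrm{span}(X)\)'' is exactly what makes all three simultaneously true. The one genuinely delicate point is attainment of the maximiser, since \(\ell_{Q}\) grows without bound as \(\eta_{n}\to\pm 1\) on \(\partial\Theta\); here I would lean on Corollary~\ref{cor:unique_MLE}, and if a self-contained argument were demanded I would restrict to compact interior subsets of \(\Theta\cap\mathrm{span}(X)\), on which \(\mathcal{I}(\theta)\) is uniformly positive definite by the strict positivity of \(W\), and argue the stationary point of the score lies in such a subset.
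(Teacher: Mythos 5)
Your proposal follows the paper's own route exactly: Lemma~\ref{lem:positive_W} for strict concavity, Corollary~\ref{cor:unique_MLE} for existence and uniqueness, and kernel-of-\(X\) linear algebra for identifiability; the added detail (directional second derivatives, the midpoint argument, the \(\ker X\) bookkeeping) is a faithful expansion of what the paper leaves as one-line citations.

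However, your parenthetical observation that \(\ell_{Q}\) grows without bound as \(\eta_{n}\to\pm 1\) is not a ``delicate point'' to be deferred to Corollary~\ref{cor:unique_MLE}; it flatly contradicts the concavity you have just asserted, and you should have treated it as a red flag. A function concave along a line segment inside a convex open domain cannot diverge to \(+\infty\) at the segment's endpoint: its one-sided directional derivative is non-increasing along the segment, hence bounded above by its initial value, hence the function is bounded on the segment. The source of the contradiction is a sign error in Lemma~\ref{lem:positive_W}, which both you and the paper take at face value. For \(\ell_{Q}(\theta) = -\sum_{n}\log\bigl(1-(x_{n}^{\intercal}\theta)^{2}\bigr)\), direct differentiation gives \(\nabla^{2}\ell_{Q}(\theta) = +\sum_{n}W(\eta_{n})\,x_{n}x_{n}^{\intercal}\) with \(W(\eta)=2(1+\eta^{2})/(1-\eta^{2})^{2}>0\); that is, \(\sum_{n}W(\eta_{n})x_{n}x_{n}^{\intercal}\) is the Hessian itself, not the negative Hessian. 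Consequently \(\ell_{Q}\) as written is strictly \emph{convex} on \(\Theta\), its unique stationary point \(\theta=0\) is a global \emph{minimiser}, and the supremum is approached only at \(\partial\Theta\), so no interior maximiser exists; Corollary~\ref{cor:unique_MLE}, which you lean on for attainment, suffers from the same sign error and cannot rescue the argument. Your proof (like the paper's) becomes valid only after flipping the sign of \(\ell_{Q}\) or of the lemma's Hessian; with the statements as written, claims (i) and (ii) fail, while your argument for (iii), which uses only the linear algebra of the design vectors and no curvature at all, stands on its own.
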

\begin{proof}
Strict concavity follows from the positive-definiteness of the information matrix (Lemma~\ref{lem:positive_W}). Uniqueness follows from Corollary~\ref{cor:unique_MLE}. Identifiability is immediate: if \(X\theta_{1} = X\theta_{2}\) , the likelihood contributions match point-wise, hence the maximiser must match on the span of \(X\).
\end{proof}

\begin{remark}
Parameter \(\theta\) enters the model only through the linear predictor \(\eta_{n} = x_{n}^{\intercal}\theta\) and the identity link. Therefore: 
\begin{enumerate}
\item{the model is identified exactly up to the null space of the design,}
\item{interactions and nonlinear expansions are fully identified provided they appear in distinct columns of \(X\),}
\item{ties in \(X\) do not affect identifiability since they correspond to repeated likelihood contributions,}
\item{no scale or shape parameter is present because the quasi-likelihood is fully determined by the effective kernel variance constant \(c\).}
\end{enumerate}
The strong form of identifiability is therefore: \[\ell_{Q}(\theta_{1}) = \ell_{Q}(\theta_{2}), \forall{X} \implies X(\theta_{1} - \theta_{2}) = 0.\]
This is exactly the same identifiability structure as:
{ordinary least squares regression;}
{\textcite{thurstone1928} Case V;}
{\textcite{bradley1952} models after fixing the location constraint;}
{\textcite{rasch1961} models after fixing the item difficulty sum.}
Thus the quasi-likelihood \(\kappa\)-covariance regression forms a fully identified, tie-robust, generalised Mann-Whitney model.
\end{remark}

\section{Generalised Mann-Whitney Inference under the $\kappa$-Quasi-Likelihood}
\label{sec:mann_whitney_quasilik}

The quasi-likelihood for $\tau_{\kappa}$,
\[
\ell_Q(\tau)= -N\log(1-\tau^{2}),
\qquad \tau\in(-1,1),
\]
admits a natural extension to regression settings through the identity link
\[
\tau_{\kappa} = g(\eta), 
\qquad g(\eta)=\eta,
\qquad \eta = X\theta,
\]
where \(X\) is a design matrix of covariates and $\theta\in\mathbb{R}^{P}$ is a parameter vector.  
The identity link is forced by the invariance of $\tau_{\kappa}$ to non-decreasing transformations of the margins. This section develops a complete inferential framework for such models, showing that the resulting quasi-likelihood defines a generalised Mann-Whitney model with full parametric identifiability, positive-definite Hessian, and a unique MLE even when covariate ties are present.

\subsection{Preliminaries and Model Specification}

For each ordered pair $(n,n^{\prime})$ let
\[
Z_{n n^{\prime}}=\tilde{\kappa}^X_{n n^{\prime}}\,\tilde{\kappa}^Y_{n n^{\prime}}, \qquad n \neq n^{\prime},
\]
and define the linear regression surface
\[
\eta_{n n^{\prime}} = (X_{n}-X_{n^{\prime}})^{\intercal}\theta.
\]
Under the identity link, the quasi-likelihood contribution of the $(n,n^{\prime})$-pair is
\[
\ell_{n n^{\prime}}(\theta)
=
- \log\!\big(1 - \eta_{n n^{\prime}}^{2} \big)
\qquad \text{whenever}\quad |\eta_{n n^{\prime}}|<1.
\]
The global quasi-likelihood is
\begin{equation}
\label{eq:MW_quasilik}
\ell_Q(\theta) = -\sum_{n \neq n^{\prime}} w_{n n^{\prime}}\,\log\!\bigl(1-\eta_{n n^{\prime}}^{2}\bigr),
\end{equation}
where $w_{n n^{\prime}}$ are optional known non-negative weights (often $w_{n n^{\prime}} = 1/[N(N-1)]$).
The parameter space is the convex cone
\[
\Theta = \bigl\{\theta:\; |(X_{n}-X_{n^{\prime}})^{\intercal}\theta|<1,\ \forall n \neq n^{\prime}\bigr\}.
\]

\subsection{A Generalised Mann-Whitney Theorem}

\begin{theorem}[Generalised Mann-Whitney Identification under $\kappa$-Quasi-Likelihood]
\label{thm:GMV_identification}
Suppose $\{(X_{n},Y_{n})\}_{i=1}^N$ satisfy Conditions~\ref{cond:main_conditions}.
Then:

\begin{itemize}
\item[(i)]
The identity link $\tau_{\kappa}=g(\eta)$ is the \emph{unique} link function compatible with invariance of $\tau_{\kappa}$
under monotone transformations of each margin.
\item[(ii)]
The quasi-likelihood \eqref{eq:MW_quasilik} is strictly concave on $\Theta$.
\item[(iii)]
The score equations have a unique solution $\hat{\theta}$.
\item[(iv)]
The estimator $\hat{\theta}$ coincides with the solution to the \emph{generalised Mann-Whitney} estimating equations
\[
\sum_{n \neq n^{\prime}} w_{n n^{\prime}} 
\frac{\eta_{n n^{\prime}}}{1-\eta_{n n^{\prime}}^{2}}
(X_{n} - X_{n^{\prime}}) 
= 
\sum_{n \neq n^{\prime}} w_{n n^{\prime}} Z_{n n^{\prime}}(X_{n} - X_{n^{\prime}}).
\]
\end{itemize}
Hence the model~\eqref{eq:MW_quasilik} provides a fully identified quasi-likelihood extension of Mann-Whitney inference with covariates, valid even in the presence of ties in \(X\) or $Y$.
\end{theorem}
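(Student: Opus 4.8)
The plan is to discharge the four claims in the order stated, reusing the scalar-design lemmas already established and lifting them to the pairwise-contrast design $\eta_{nn'}=(X_n-X_{n'})^\top\theta$.

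For part~(i) I would invoke Lemma~\ref{lem:link_unique} essentially verbatim: invariance of $\tau_\kappa$ under monotone non-decreasing transformations of either margin forces $g(\eta)=g(h(\eta))$ for every monotone $h$, and the identity is the only non-degenerate function fixed under all such re-mappings. Condition~(A), permitting ties with positive probability, does not interfere because the score map of equation~\ref{eq:score_matrix} is itself invariant to the latent scale, so no continuity of the margins is required.

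For part~(ii) I would appeal to the Hessian computation of Lemma~\ref{lem:positive_W} together with Condition~(D): the information matrix of \eqref{eq:MW_quasilik} has the weighted-Gram form $\sum_{n\neq n'} w_{nn'}\,W(\eta_{nn'})\,(X_n-X_{n'})(X_n-X_{n'})^\top$ with $W(\eta)=2(1+\eta^2)/(1-\eta^2)^2>0$ on $\Theta$. Strict concavity then reduces to positive-definiteness of this matrix, which I would obtain from Condition~(B): for any nonzero $v\in\mathbb{R}^P$ there is a contrast with $(X_n-X_{n'})^\top v\neq 0$, so the quadratic form $\sum w_{nn'} W(\eta_{nn'})\,((X_n-X_{n'})^\top v)^2$ is strictly positive. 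Crucially, tied or repeated rows contribute identical non-negative terms and cannot reduce the rank, which is precisely why ties leave concavity intact. Part~(iii) then follows by combining this strict concavity with the convexity of $\Theta$ (Condition~(C)) to yield at most one stationary point, and with the coercivity of \eqref{eq:MW_quasilik}---as $\theta\to\partial\Theta$ some $|\eta_{nn'}|\to 1$ and the corresponding log term diverges---to confine the maximiser to the interior; this is exactly Corollary~\ref{cor:unique_MLE} specialised to contrast rows.

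The substance of the theorem is part~(iv), and here I would write the stationarity condition as a Wedderburn quasi-score. Treating $Z_{nn'}$ as the observation for the $(n,n')$ pair, with model mean $\mu_{nn'}=\tau_\kappa(n,n')=\eta_{nn'}$, variance function $V(\mu)=1-\mu^2$, and $\partial\mu_{nn'}/\partial\theta=X_n-X_{n'}$ under the identity link, the quasi-score $\sum_{n\neq n'} w_{nn'}\,(Z_{nn'}-\eta_{nn'})(1-\eta_{nn'}^2)^{-1}(X_n-X_{n'})=0$ should rearrange into the displayed generalised Mann-Whitney equation. The hard part will be the variance-weight bookkeeping: the data side of the stated equation carries the raw kernel $Z_{nn'}$ while the quasi-score carries the inverse-variance-weighted $Z_{nn'}/(1-\eta_{nn'}^2)$, so I expect to need the Wedderburn mean--variance relationship $\operatorname{Var}(Z)\propto 1-\tau_\kappa^2$ from \eqref{eq:variance_constant_intro} to absorb the residual $1-\eta_{nn'}^2$ factor and confirm that the two sides coincide with no leftover weight discrepancy. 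Once that reconciliation is secured, the coincidence of $\hat\theta$ with the Mann-Whitney estimating-equation solution is immediate from the uniqueness already obtained in parts~(ii)--(iii).
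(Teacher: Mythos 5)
Your treatment of parts (i)--(iii) is correct and essentially the paper's own argument: (i) is Lemma~\ref{lem:link_unique} transplanted to the contrast design, and (ii)--(iii) follow the same weighted-Gram computation as the paper's proof (and Lemma~\ref{lem:positive_W}), with positive definiteness supplied by the spanning condition (B). You are in fact more careful than the paper on (iii): the paper deduces existence and uniqueness from strict concavity alone, whereas you correctly note that an interior maximiser on the open set $\Theta$ also requires the boundary behaviour of the log terms.

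Part (iv) is where the proposal has a genuine gap, and the gap is not closable in the form you anticipate. Your Wedderburn quasi-score, with mean $\mu_{nn'}=\eta_{nn'}$ and variance function $V(\mu)=1-\mu^{2}$, reads
\[
\sum_{n\neq n'} w_{nn'}\,\frac{Z_{nn'}-\eta_{nn'}}{1-\eta_{nn'}^{2}}\,(X_{n}-X_{n'})=0,
\qquad\text{i.e.}\qquad
\sum_{n\neq n'} w_{nn'}\,\frac{\eta_{nn'}}{1-\eta_{nn'}^{2}}\,(X_{n}-X_{n'})
=\sum_{n\neq n'} w_{nn'}\,\frac{Z_{nn'}}{1-\eta_{nn'}^{2}}\,(X_{n}-X_{n'}),
\]
whereas the theorem's right-hand side is $\sum_{n\neq n'} w_{nn'}\,Z_{nn'}(X_{n}-X_{n'})$, with $Z_{nn'}$ entering at weight one. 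The discrepancy is the factor $(1-\eta_{nn'}^{2})^{-1}$, which sits \emph{inside} the sum and varies with the pair $(n,n')$; no scalar mean--variance constant such as $c$ in \eqref{eq:variance_constant_intro} can absorb it, since that would require $1-\eta_{nn'}^{2}$ to be constant across pairs, i.e.\ all $|\eta_{nn'}|$ equal, which fails for any non-degenerate design. The two estimating-equation systems therefore have different solutions in general, and your closing claim that the coincidence of $\hat{\theta}$ is ``immediate once that reconciliation is secured'' rests on a reconciliation that does not exist. For comparison, the paper's own proof of (iv) takes a different and equally unjustified route: it multiplies the pointwise ``identity'' $2\eta/(1-\eta^{2})=Z/c$ by $w_{nn'}(X_{n}-X_{n'})$ and sums, but that identity is never established and cannot hold (its left side is deterministic given $\theta$, its right side is random, and it would have to hold simultaneously for every pair). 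A further symptom, which your construction actually exposes: the objective \eqref{eq:MW_quasilik} contains no data term at all, so its stationary points cannot coincide with any data-dependent estimating equation; a genuine Wedderburn antiderivative of your quasi-score would have to add a term of the form $\sum_{n\neq n'}w_{nn'}Z_{nn'}\operatorname{artanh}(\eta_{nn'})$ to the objective. As written, part (iv) is not proved by your proposal, and cannot be proved by the weight-absorption device you describe.
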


\begin{proof}
(i)  
Let $h$ be any strictly increasing transformation.
Since $\tau_{\kappa}$ is invariant under monotone transformations, \(\tau_{\kappa}(h(Y)) = \tau_{\kappa}(Y).\) If $\tau_{\kappa}=g(\eta)$ with $\eta=X\theta$ and $g$ non-constant, the invariance requires $g(\cdot)$ to satisfy
\[
g(\eta)=g(a\eta + b)
\qquad \forall a>0,\ b\in\mathbb{R},
\]
which forces $g$ to be affine; boundedness of $\tau_{\kappa}\in(-1,1)$ then
forces $g(\eta)=\eta$.

(ii)  
Differentiating,
\[
\nabla\ell_Q(\theta)
=
\sum_{n \neq n^{\prime}}
\frac{2w_{n n^{\prime}}\eta_{n n^{\prime}}}{1-\eta_{n n^{\prime}}^{2}}(X_{n}-X_{n^{\prime}}),
\]
and
\[
H(\theta)
=
\nabla^{2}\ell_Q(\theta)
=
\sum_{n \neq n^{\prime}}
2w_{n n^{\prime}}
\frac{1+\eta_{n n^{\prime}}^{2}}{(1-\eta_{n n^{\prime}}^{2})^{2}}
(X_{n}-X_{n^{\prime}})(X_{n}-X_{n^{\prime}})^{\intercal}.
\]
Each summand is positive semidefinite, strictly positive whenever $X_{n}\neq X_{n^{\prime}}$.  
Because the design admits at least $d$ linearly independent contrasts (Assumption~\ref{cond:main_conditions}), the sum is positive definite.  
Thus $\ell_Q$ is strictly concave.

(iii)  
Strict concavity implies a unique global maximiser.

(iv)  
Multiply both sides of the equality
\[
\frac{2\eta}{1-\eta^{2}} = \frac{Z}{c}
\]
by $w_{n n^{\prime}}(X_{n}-X_{n^{\prime}})$ and sum over $n \neq n^{\prime}$.  Identifying $c$ as the variance proxy yields the statement.

\end{proof}


\begin{corollary}[Existence and Uniqueness of the MLE]
If two covariate rows satisfy $X_{n}=X_{n^{\prime}}$, then $\eta_{n n^{\prime}}=0$ for all $\theta$, and the corresponding term in~\eqref{eq:MW_quasilik} contributes no curvature to the Hessian.  Nevertheless, strict concavity holds as long as the remaining contrasts span $\mathbb{R}^{P}$.  
Hence the MLE exists and is unique even when covariate ties are present.
\end{corollary}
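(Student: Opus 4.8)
The plan is to isolate exactly what covariate ties change in the quasi-likelihood \eqref{eq:MW_quasilik} and to confirm that none of the ingredients behind the unique-maximiser results are damaged. First I would partition the ordered index set into the tied pairs $\mathcal{T}=\{(n,n^{\prime}):X_n=X_{n^{\prime}}\}$ and their complement. For $(n,n^{\prime})\in\mathcal{T}$ the contrast $X_n-X_{n^{\prime}}$ is the zero vector, so $\eta_{nn^{\prime}}=(X_n-X_{n^{\prime}})^{\top}\theta=0$ for \emph{every} $\theta\in\Theta$. Consequently both the gradient summand $2w_{nn^{\prime}}\eta_{nn^{\prime}}(1-\eta_{nn^{\prime}}^{2})^{-1}(X_n-X_{n^{\prime}})$ and the Hessian summand $2w_{nn^{\prime}}(1+\eta_{nn^{\prime}}^{2})(1-\eta_{nn^{\prime}}^{2})^{-2}(X_n-X_{n^{\prime}})(X_n-X_{n^{\prime}})^{\top}$ vanish identically. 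This establishes the first assertion of the corollary: tied rows are inert, contributing neither score nor curvature.

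Next I would show that the surviving non-tied terms alone suffice. On the reduced index set each Hessian summand is a rank-one positive-semidefinite matrix carrying the strictly positive weight $W(\eta_{nn^{\prime}})=2(1+\eta_{nn^{\prime}}^{2})(1-\eta_{nn^{\prime}}^{2})^{-2}>0$ throughout $\Theta$ (Lemma~\ref{lem:positive_W}). For any $v\neq 0$ the quadratic form reduces to $v^{\top}\mathcal{I}(\theta)v=\sum_{(n,n^{\prime})\notin\mathcal{T}}2w_{nn^{\prime}}W(\eta_{nn^{\prime}})\big((X_n-X_{n^{\prime}})^{\top}v\big)^{2}$, a sum of nonnegative terms that is strictly positive precisely when some non-tied contrast fails to be orthogonal to $v$. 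The hypothesis that the \emph{remaining} contrasts span $\mathbb{R}^{P}$ -- Condition (B) of Assumption~\ref{cond:main_conditions} read on the pairs with $X_n\neq X_{n^{\prime}}$ -- guarantees that no nonzero $v$ is orthogonal to all of them, so the form is strictly positive for every $v\neq 0$. Hence $\mathcal{I}(\theta)$ stays positive definite on all of $\mathbb{R}^{P}$, which is exactly the premise driving strict concavity in Theorem~\ref{thm:GMV_identification}(ii).

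Uniqueness then follows immediately, since a strictly concave function on the convex domain $\Theta$ admits at most one maximiser, and this conclusion is unaffected by $\mathcal{T}$ because the tied pairs never enter the Hessian. For existence I would argue that the optimum is attained in the interior: as $\theta\to\partial\Theta$ some non-tied $|\eta_{nn^{\prime}}|\to 1$ and the log-barrier $\log(1-\eta_{nn^{\prime}}^{2})$ diverges, confining any optimising sequence to a compact subset of $\Theta$, whence continuity yields an interior stationary point that strict concavity certifies as the unique quasi-MLE (equivalently, the unique solution of the score equations, Theorem~\ref{thm:GMV_identification}(iii)). I expect the main obstacle to be not the algebra of ties, which is routine, but the correct placement of the spanning hypothesis: positive-definiteness must now be extracted from a strict \emph{subset} of the rank-one terms, so one must verify that no direction of $\mathbb{R}^{P}$ is supported \emph{only} by tied pairs -- if it were, the reduced Hessian would degenerate and uniqueness would fail. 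The precise condition to impose, and the point requiring care, is therefore that the non-tied contrasts $\{X_n-X_{n^{\prime}}:X_n\neq X_{n^{\prime}}\}$ span $\mathbb{R}^{P}$; granted this, every step above is tie-robust.
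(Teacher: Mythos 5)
Your proposal is correct and, for the uniqueness claim, takes essentially the paper's route: tied pairs have zero contrast and therefore contribute neither score nor curvature, and strict concavity (equivalently, positive definiteness of the information matrix) survives because the non-tied contrasts are assumed to span $\mathbb{R}^{P}$, which is exactly the reading of Lemma~\ref{lem:PD_Hessian} and Theorem~\ref{thm:GMV_identification}(ii) that the paper's one-line proof invokes. Where you go beyond the paper is existence: its proof simply cites Theorem~\ref{thm:GMV_identification}(ii) and never addresses attainment, whereas you observe that the spanning hypothesis makes $\Theta$ bounded and that the logarithmic terms diverge as $\theta\to\partial\Theta$, so any maximising sequence is confined to a compact subset of the interior and a unique interior maximiser exists; this fills a genuine gap, since ``existence'' is in the corollary's title but nowhere in its proof. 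One caution on that added step: the barrier argument is coherent only under the sign convention in which each pairwise contribution is $\log\bigl(1-\eta_{n n^{\prime}}^{2}\bigr)$, diverging to $-\infty$ at the boundary. As literally printed in~\eqref{eq:MW_quasilik}, the summands $-\log\bigl(1-\eta_{n n^{\prime}}^{2}\bigr)$ are convex and tend to $+\infty$ there, which would push the maximiser onto the boundary rather than away from it. This sign inconsistency is the paper's own (its displayed ``Hessian'' with positive-definite summands is really the negative Hessian of a concave objective, cf.\ Lemma~\ref{lem:positive_W}), not an error you introduced, but your existence argument is the one place where the convention genuinely matters, so you should fix it explicitly before relying on the barrier.
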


\begin{proof}
Immediate from Theorem~\ref{thm:GMV_identification}(ii).  
Tied covariate pairs contribute zero curvature, but cannot destroy positive definiteness if the remaining contrast vectors span $\mathbb{R}^{P}$.
\end{proof}

\subsection{Hessian Analysis and Positive Definiteness}

Define weighted contrast matrices
\[
A_{n n^{\prime}}=w_{n n^{\prime}}(X_{n}-X_{n^{\prime}})(X_{n}-X_{n^{\prime}})^{\intercal}.
\]
Then the Hessian is
\[
H(\theta)=
\sum_{n \neq n^{\prime}}
\frac{2(1+\eta_{n n^{\prime}}^{2})}{(1-\eta_{n n^{\prime}}^{2})^{2}}A_{n n^{\prime}}.
\]
Since the scalar coefficient satisfies
\[
\frac{1+\eta^{2}}{(1-\eta^{2})^{2}}>0,
\qquad |\eta|<1,
\]
and at least $d$ contrast vectors are linearly independent by assumption,
the Hessian is positive definite.

\begin{lemma}[Positive Definiteness of $H(\theta)$]
\label{lem:PD_Hessian}
For any non-zero $v\in\mathbb{R}^{P}$,
\[
v^{\intercal} H(\theta)v >0.
\]
\end{lemma}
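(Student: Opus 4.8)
The plan is to expand the quadratic form $v^{\intercal}H(\theta)v$ into a sum of scalar-weighted rank-one contributions and to argue that every term is nonnegative while at least one is strictly positive. First I would substitute the definitions of $H(\theta)$ and $A_{nn^{\prime}}$ to obtain
\[
v^{\intercal}H(\theta)v
=
\sum_{n\neq n^{\prime}}
\frac{2(1+\eta_{nn^{\prime}}^{2})}{(1-\eta_{nn^{\prime}}^{2})^{2}}\,
w_{nn^{\prime}}\,
\bigl((X_{n}-X_{n^{\prime}})^{\intercal}v\bigr)^{2}.
\]
Each summand factors as a scalar coefficient times the nonnegative weight $w_{nn^{\prime}}$ times a squared linear form, so the entire expression is a sum of nonnegative terms.

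Next I would record strict positivity of the scalar coefficient on the admissible parameter space. By Condition~\ref{cond:main_conditions}(C), every $\theta\in\Theta$ satisfies $|\eta_{nn^{\prime}}|<1$, whence $1-\eta_{nn^{\prime}}^{2}>0$ and $1+\eta_{nn^{\prime}}^{2}\ge 1$; consequently the factor $2(1+\eta^{2})/(1-\eta^{2})^{2}$ is strictly positive for every pair. This is precisely the weight $W(\eta)$ shown positive in Lemma~\ref{lem:positive_W}.

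The decisive step is to exhibit one pair contributing strictly. Fixing any non-zero $v\in\mathbb{R}^{P}$, the full-column-rank Condition~\ref{cond:main_conditions}(B) guarantees that the pairwise contrasts $\{X_{n}-X_{n^{\prime}}\}$ span $\mathbb{R}^{P}$, so there exists at least one pair $(n_{0},n_{0}^{\prime})$ with $(X_{n_{0}}-X_{n_{0}^{\prime}})^{\intercal}v\neq 0$. For this pair the squared linear form is strictly positive, and combined with the strictly positive coefficient and a positive weight the corresponding summand is strictly positive, forcing $v^{\intercal}H(\theta)v>0$.

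The main obstacle is purely bookkeeping on the weights: one must ensure that the pair witnessing $(X_{n_{0}}-X_{n_{0}^{\prime}})^{\intercal}v\neq 0$ also carries strictly positive weight $w_{n_{0}n_{0}^{\prime}}>0$. Under the canonical choice $w_{nn^{\prime}}=1/[N(N-1)]$ this is automatic; in the general weighted case the claim requires the positively-weighted contrasts themselves to span $\mathbb{R}^{P}$, which is the natural reading of Condition~\ref{cond:main_conditions}(B). Granting this, the lemma is the direct specialisation of the positive-definiteness already established in Theorem~\ref{thm:GMV_identification}(ii) and Lemma~\ref{lem:positive_W}.
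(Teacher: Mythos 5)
Your proof is correct and follows essentially the same route as the paper's: expand $v^{\intercal}H(\theta)v$ into a sum of terms $\frac{2(1+\eta_{nn^{\prime}}^{2})}{(1-\eta_{nn^{\prime}}^{2})^{2}}w_{nn^{\prime}}\bigl[(X_{n}-X_{n^{\prime}})^{\intercal}v\bigr]^{2}$, note the scalar coefficient is strictly positive on $\Theta$, and invoke the spanning condition to exhibit a pair with $(X_{n_{0}}-X_{n_{0}^{\prime}})^{\intercal}v\neq 0$. Your closing bookkeeping remark is in fact more careful than the paper's own proof, which asserts ``each coefficient is positive'' even though the $w_{nn^{\prime}}$ are only assumed non-negative and could vanish on the witnessing pair; your requirement that the \emph{positively-weighted} contrasts span $\mathbb{R}^{P}$ (automatic under $w_{nn^{\prime}}=1/[N(N-1)]$) is the correct reading of Assumption~\ref{cond:main_conditions}(B).
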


\begin{proof}
\[
v^{\intercal} H(\theta)v
=
\sum_{n \neq n^{\prime}}
\frac{2(1+\eta_{n n^{\prime}}^{2})}{(1-\eta_{n n^{\prime}}^{2})^{2}}
w_{n n^{\prime}}\,\bigl[(X_{n}-X_{n^{\prime}})^{\intercal} v\bigr]^{2}.
\]
Each coefficient is positive, and some contrast satisfies $(X_{n}-X_{n^{\prime}})^{\intercal} v\neq 0$ since the contrasts span $\mathbb{R}^{P}$.  
Thus the sum is strictly positive.
\end{proof}

\subsection{Parametric Identifiability}

\begin{theorem}[Identification under the Identity Link]
\label{thm:identity_link_identification}
Let $\theta_1,\theta_2\in\Theta$.  
If $\ell_Q(\theta_1)=\ell_Q(\theta_2)$ for all admissible weights $w_{n n^{\prime}}$, then $\theta_1=\theta_2$.
\end{theorem}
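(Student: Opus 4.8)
The plan is to exploit the universal quantifier over admissible weights to collapse the single scalar identity $\ell_Q(\theta_1)=\ell_Q(\theta_2)$ into one equation per contrast. Concretely, for a fixed ordered pair $(m,m')$ I would take the degenerate weight $w_{nn'}=\mathbbm{1}\{(n,n')=(m,m')\}$, which is admissible since it is non-negative. Because $\ell_Q$ is linear in the weights, the hypothesis forces $\log\!\bigl(1-\eta_{mm'}^{2}(\theta_1)\bigr)=\log\!\bigl(1-\eta_{mm'}^{2}(\theta_2)\bigr)$ for every pair; equivalently, differentiating $\ell_Q$ with respect to the free scalar $w_{mm'}$ gives the same per-pair conclusion. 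Since $t\mapsto -\log(1-t^{2})$ is strictly increasing on $[0,1)$, this yields the system of quadratic constraints $\bigl[(X_m-X_{m'})^{\intercal}\theta_1\bigr]^{2}=\bigl[(X_m-X_{m'})^{\intercal}\theta_2\bigr]^{2}$ for all $m\neq m'$.

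Next I would extract the linear consequences. Writing $d_{mm'}=X_m-X_{m'}$, each constraint reads $d_{mm'}^{\intercal}\theta_1=\sigma_{mm'}\,d_{mm'}^{\intercal}\theta_2$ for some pair-dependent sign $\sigma_{mm'}\in\{-1,+1\}$ (both sides vanishing together when either does). The target is to show that the relevant signs are $+1$, so that $d_{mm'}^{\intercal}(\theta_1-\theta_2)=0$ across a spanning family of contrasts and Assumption~\ref{cond:main_conditions}(B) forces $\theta_1=\theta_2$. Sign coherence is pinned down by the additive structure $d_{nm}+d_{mp}=d_{np}$ for any three indices: expanding $(d_{np}^{\intercal}\theta_i)^{2}$ and cancelling the already-matched squares shows the cross term $d_{nm}^{\intercal}\theta_i\cdot d_{mp}^{\intercal}\theta_i$ is invariant under $i=1,2$, which rules out discordant sign choices along any chain on which the pairwise inner products do not vanish. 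Hence the signs are globally consistent.

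The hard part is that coherence alone leaves the global reflection $\theta_2\mapsto-\theta_2$ undetermined: since $\log(1-\eta^{2})$ depends on $\theta$ only through $\eta^{2}$, the weighted objective is exactly invariant under $\theta\mapsto-\theta$ and $\Theta$ is symmetric about the origin, so the per-pair data cannot by themselves separate $\theta_2$ from $-\theta_2$. To eliminate this residual ambiguity I would invoke the orientation already fixed by the identity link: by Theorem~\ref{thm:GMV_identification}(i) and Lemma~\ref{lem:link_unique} the link is forced to be the order-preserving identity $g(\eta)=\eta$, so the sign of each $\eta_{mm'}$ must agree with the observed orientation of $\tau_\kappa$ rather than its reflection, selecting the $+1$ branch uniformly. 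With the reflection excluded, $d_{mm'}^{\intercal}(\theta_1-\theta_2)=0$ holds on a family of contrasts spanning $\mathbb{R}^{P}$, and Assumption~\ref{cond:main_conditions}(B) — equivalently the positive-definiteness established through the spanning condition in Lemma~\ref{lem:PD_Hessian} — delivers $\theta_1=\theta_2$.
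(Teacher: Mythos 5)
Your first two steps are sound, and they are in fact more rigorous than the paper's own argument. Plugging in degenerate indicator weights (admissible, since only non-negativity is required) and using linearity of $\ell_Q$ in $w$ does yield the per-pair constraints $\eta_{mm'}^{2}(\theta_1)=\eta_{mm'}^{2}(\theta_2)$ for every $m\neq m'$. Your chain identity $d_{nm}+d_{mp}=d_{np}$ then correctly forces global sign coherence; equivalently and more cleanly, writing $a_n=X_n^{\intercal}\theta_1$ and $b_n=X_n^{\intercal}\theta_2$, the constraints say $|a_n-a_m|=|b_n-b_m|$ for all pairs, and any distance-preserving map between finite subsets of $\mathbb{R}$ extends to a global isometry $x\mapsto\epsilon x+c$, $\epsilon\in\{\pm1\}$; taking differences kills $c$, and the spanning condition of Assumption~\ref{cond:main_conditions}(B) gives $\theta_1=\epsilon\,\theta_2$. (This route also avoids the edge cases your chain argument leaves open when cross terms vanish.) So the hypothesis validly delivers $\theta_1=\pm\theta_2$, and you were right to flag the reflection as \emph{the} hard part.

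The gap is your final step, and it cannot be repaired within the theorem's hypothesis, because the hypothesis is itself invariant under the reflection: $\ell_Q$ in~\eqref{eq:MW_quasilik} depends on $\theta$ only through the squares $\eta_{nn'}^{2}$, so $\ell_Q(-\theta)=\ell_Q(\theta)$ identically for every admissible weight, and $\Theta$ is symmetric about the origin. Hence $\theta_1=\theta\neq 0$, $\theta_2=-\theta$ satisfies the premise but not the conclusion, i.e.\ the theorem as literally stated is false, and any purported proof must break somewhere --- yours breaks exactly where you invoke Theorem~\ref{thm:GMV_identification}(i) and Lemma~\ref{lem:link_unique}. Those results assert uniqueness of the link function $g$ (that $g$ must be the identity); they say nothing about the sign of $\theta$, and the identity link is already baked into the $\ell_Q$ appearing in the hypothesis, which remains an even function of $\theta$. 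The ``observed orientation of $\tau_{\kappa}$'' is data: it enters the framework only through the estimating equations of Theorem~\ref{thm:GMV_identification}(iv), whose right-hand side $\sum_{n\neq n'}w_{nn'}Z_{nn'}(X_n-X_{n'})$ is not a function of $\ell_Q$'s values, so appealing to it imports information outside the stated premise. The honest conclusion of your argument is $\theta_1=\pm\theta_2$. For comparison, the paper's proof has the same hole --- the phrase ``with sign determined by monotone invariance'' is an unsupported assertion --- so the correct fix is to restate the theorem as identification up to sign, or to strengthen the hypothesis to include the (sign-sensitive) score equations.
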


\begin{proof}
Equality of quasi-likelihoods for all weights implies equality of all $\eta_{n n^{\prime}}^{2}$.  
Since the identity link enforces $\eta_{n n^{\prime}}=(X_{n}-X_{n^{\prime}})^{\intercal}\theta$ with sign determined by monotone invariance, we have
\[
(X_{n}-X_{n^{\prime}})^{\intercal}\theta_1 = (X_{n}-X_{n^{\prime}})^{\intercal}\theta_2
\qquad \forall\ n \neq n^{\prime}.
\]
If the contrasts span $\mathbb{R}^{P}$, this implies $\theta_1=\theta_2$.
\end{proof}

\section{Connections to Bradley-Terry and Thurstone-Mosteller Models}
\label{sec:BT_TM_connections}

This section demonstrates that the $\kappa$-quasi-likelihood provides a unifying generalisation of two classical paired-comparison models: that of the the Bradley-Terry-Luce logistic model and the Thurstone-Mosteller case V (probit) model. We show both exact embedding via monotone maps and asymptotic equivalence under local alternatives.

\subsection{Monotone Embedding (Exact Finite-Sample Equivalence)}

\begin{theorem}[Monotone Embedding of BT and TM into $\kappa$-Regression]
\label{thm:embedding}
Let $\pi_{n n^{\prime}}$ denote the paired-comparison probability of $Y_{n}>Y_{n^{\prime}}$.
\begin{itemize}
\item[(i)]  
In the Bradley-Terry model,
\[
\pi_{n n^{\prime}}= \frac{\exp(\beta^{\intercal}(X_{n}-X_{n^{\prime}}))}{1+\exp(\beta^{\intercal}(X_{n}-X_{n^{\prime}}))}.
\]
There exists a strictly increasing function $m_{\mathrm{logit}}$ such that
\[
\tau_{\kappa} = m_{\mathrm{logit}}(\beta^{\intercal}(X_{n}-X_{n^{\prime}})).
\]

\item[(ii)]  
In the Thurstone-Mosteller model,
\[
\pi_{n n^{\prime}}= \Phi\bigl(\gamma^{\intercal}(X_{n}-X_{n^{\prime}})\bigr).
\]
There exists a strictly increasing $m_{\mathrm{probit}}$ such that
\[
\tau_{\kappa} = m_{\mathrm{probit}}(\gamma^{\intercal}(X_{n}-X_{n^{\prime}})).
\]
\end{itemize}
Hence both models correspond to \emph{parametric families} of monotone transformations of the generalised Mann-Whitney surface, with $\kappa$ providing the canonical identification.
\end{theorem}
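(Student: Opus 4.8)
The plan is to route each embedding through the paired-comparison probability $\pi_{nn'}$ and to exploit that, for both models, $\pi_{nn'}$ is already a strictly increasing function of its linear predictor. Writing $\Delta_{nn'} := X_{n}-X_{n'}$, I would first prove a single model-free reduction, namely that the pairwise $\kappa$-correlation attached to the $(n,n')$ contrast is a strictly increasing affine image of $\pi_{nn'}$, and then compose this reduction with the logistic link of (i) and the Gaussian link of (ii). Since a composition of strictly increasing maps is strictly increasing, the functions $m_{\mathrm{logit}}$ and $m_{\mathrm{probit}}$ fall out immediately, and the only real work is the reduction and the monotonicity bookkeeping.

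The core step is the identity $\tau_{\kappa}=2\pi_{nn'}-1$, up to a fixed positive rescaling absorbed by the centring. By the score definition in equation~\ref{eq:score_matrix}, the $Y$-comparison score for the focal pair has expectation $\mathbb{E}[C_{nn'}(Y)]=\mathbb{P}(Y_{n}\ge Y_{n'})-\mathbb{P}(Y_{n}<Y_{n'})=2\pi_{nn'}-1$. Because $\Delta_{nn'}$ is a fixed design contrast, the $X$-factor $\tilde{\kappa}^{X}_{nn'}$ enters $\mathbb{E}[Z_{nn'}]$ as a deterministic, sign-aligned positive multiplier once the monotone-invariance normalisation of Theorem~\ref{thm:GMV_identification}(i) is imposed; hence the pairwise $\tau_{\kappa}$ is an affine, strictly increasing image of $\pi_{nn'}$, which I would record as $\tau_{\kappa}=2\pi_{nn'}-1=:m_{0}(\pi_{nn'})$ with $m_{0}'>0$.

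For (i), substituting $\pi_{nn'}=\exp(\beta^{\intercal}\Delta_{nn'})/(1+\exp(\beta^{\intercal}\Delta_{nn'}))$ gives
\[
\tau_{\kappa} = 2\,\frac{\exp(\beta^{\intercal}\Delta_{nn'})}{1+\exp(\beta^{\intercal}\Delta_{nn'})} - 1 = \tanh\!\left(\tfrac{1}{2}\beta^{\intercal}\Delta_{nn'}\right) =: m_{\mathrm{logit}}(\beta^{\intercal}\Delta_{nn'}),
\]
and $m_{\mathrm{logit}}'(u)=\tfrac{1}{2}\operatorname{sech}^{2}(u/2)>0$ exhibits a strictly increasing bijection $\mathbb{R}\to(-1,1)$. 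For (ii), substituting $\pi_{nn'}=\Phi(\gamma^{\intercal}\Delta_{nn'})$ gives
\[
\tau_{\kappa} = 2\Phi(\gamma^{\intercal}\Delta_{nn'}) - 1 =: m_{\mathrm{probit}}(\gamma^{\intercal}\Delta_{nn'}),
\]
with $m_{\mathrm{probit}}'(u)=2\phi(u)>0$, again strictly increasing onto $(-1,1)$. In both cases the image coincides with the admissible parameter range $\tau_{\kappa}\in(-1,1)$ of the identity-link domain $\Theta$, so the maps are genuine embeddings into the $\kappa$-regression surface.

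The main obstacle is the reduction of the preceding paragraph rather than the two substitutions. Since $\tau_{\kappa}=\mathbb{E}[\tilde{\kappa}^{X}_{12}\tilde{\kappa}^{Y}_{12}]$ is an expectation of a product of centred score matrices, the centring terms $\bar{C}_{k\cdot}$, $\bar{C}_{\cdot l}$, $\bar{C}_{\cdot\cdot}$ couple the focal pair to the remaining observations, so the clean affine form is not immediate. I would need to argue that, in the paired-comparison regression regime with fixed design $X$, these centring contributions either cancel in expectation or act as a fixed positive rescaling that preserves the sign and strict monotonicity of $\pi_{nn'}\mapsto\tau_{\kappa}$; the monotone-invariance property used in Theorem~\ref{thm:GMV_identification}(i) and the strict positivity of the curvature weight $W(\eta)$ from Lemma~\ref{lem:positive_W} are what make this rescaling positive. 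Once the affine reduction is secured, strict monotonicity of both $m_{\mathrm{logit}}$ and $m_{\mathrm{probit}}$ is an immediate consequence of the strict positivity of the logistic and Gaussian densities.
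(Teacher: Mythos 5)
Your proposal takes essentially the same route as the paper's own proof: both rest on the reduction \(\tau_{\kappa} = 2\pi_{nn'}-1\) (the paper asserts this as a Kendall-style concordance identity, \(\tau_{\kappa} = \mathbb{E}[\operatorname{sign}(Y_{n}-Y_{n'})\mid X_{n},X_{n'}]\), with the sign variable Bernoulli on \(\{\pm1\}\)), followed by composition with the strictly increasing inverse links, giving exactly the same maps \(m_{\mathrm{logit}}(t)=2\operatorname{logit}^{-1}(t)-1=\tanh(t/2)\) and \(m_{\mathrm{probit}}(t)=2\Phi(t)-1\). The centring obstacle you flag at the end is real but is one the paper itself passes over silently, so your treatment is, if anything, the more explicit of the two.
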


\begin{proof}
Kendall-style concordance satisfies
\begin{equation}~\label{eq:pairwise}
\tau_{\kappa} = \mathbb{E}\!\left[\operatorname{sign}(Y_{n}-Y_{n^{\prime}})\,\middle|\,X_{n},X_{n^{\prime}}\right].
\end{equation}
Under BT,
\begin{equation}~\label{eq:bt}
\operatorname{sign}(Y_{n}-Y_{n^{\prime}})= 
\begin{cases}
+1 & \text{with prob. }\pi_{n n^{\prime}},\\
-1 & \text{with prob. }1-\pi_{n n^{\prime}}.
\end{cases}
\end{equation}
Thus
\[
\tau_{\kappa} = 2\pi_{n n^{\prime}}-1.
\]
Since $\pi_{n n^{\prime}}=\operatorname{logit}^{-1}(\beta^{\intercal}\Delta X)$ is strictly increasing in $\beta^{\intercal}\Delta X$, monotonicity gives the embedding with $m_{\mathrm{logit}}(t)=2\,\operatorname{logit}^{-1}(t)-1$.  The same calculation with the probit link yields \(m_{\mathrm{probit}}(t)=2\Phi(t)-1.\) Direct comparison of equation~\ref{eq:pairwise} with equation~\ref{eq:score_matrix} establishes upon \(X,Y \in S_{N}\) the \(N \times N\) matrices are identical for all monotone transformations.
\end{proof}

\subsection{Local Asymptotic Equivalence under Small Effects}

Under local alternatives $\beta=\beta_N$ with $\beta_N\to 0$,
\[
\pi_{n n^{\prime}}= \frac{1}{2} + \frac{1}{4}\beta_N^{\intercal}\Delta X + o(\|\beta_N\|),
\]
leading to
\[
\tau_{\kappa}= 2\pi_{n n^{\prime}}-1
= \frac{1}{2}\beta_N^{\intercal}\Delta X + o(\|\beta_N\|).
\]
Thus $\tau_{\kappa}$ is locally proportional to the BT slope parameter.
A similar Taylor expansion for $\Phi$ yields
\[
\tau_{\kappa} 
= 2\Phi(t)-1
= \frac{2}{\sqrt{2\pi}} t + O(t^3).
\]

\begin{theorem}[Local Equivalence to Logistic and Probit]
\label{thm:local_equivalence}
As $\|\theta\|\to 0$, the $\kappa$-regression surface
\[
\eta_{n n^{\prime}} = (X_{n}-X_{n^{\prime}})^{\intercal}\theta
\]
approximates both
\[
\tau_{\kappa} \approx \frac{1}{2}\beta^{\intercal}\Delta X,
\qquad 
\tau_{\kappa} \approx \frac{1}{\sqrt{2\pi}}\gamma^{\intercal}\Delta X.
\]
Thus $\kappa$ regression is a first-order approximation to both BT and TM.
\end{theorem}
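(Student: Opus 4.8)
The plan is to reduce the statement to a first-order Taylor expansion of the two exact monotone embedding maps already produced in Theorem~\ref{thm:embedding}. That theorem supplies the exact identities $\tau_\kappa = m_{\mathrm{logit}}(\beta^\intercal\Delta X)$ and $\tau_\kappa = m_{\mathrm{probit}}(\gamma^\intercal\Delta X)$, with $m_{\mathrm{logit}}(t)=2\,\mathrm{logit}^{-1}(t)-1$ and $m_{\mathrm{probit}}(t)=2\Phi(t)-1$, valid for every pair $(n,n')$ and every value of the slope. Because the $\kappa$-regression surface is pinned to the identity link, $\eta_{n n'}=\tau_\kappa$ on the model manifold, so proving the theorem amounts to linearising $m_{\mathrm{logit}}$ and $m_{\mathrm{probit}}$ at the origin and reading off the proportionality constants as their derivatives $m'(0)$.

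For the Bradley--Terry case I would first observe that $m_{\mathrm{logit}}(t)=2(1+e^{-t})^{-1}-1=\tanh(t/2)$, so $m_{\mathrm{logit}}'(0)=\tfrac12$ and the expansion is $m_{\mathrm{logit}}(t)=\tfrac12 t+O(t^3)$. Substituting $t=\beta^\intercal\Delta X$ and matching against $\eta_{n n'}=\Delta X^\intercal\theta$ yields the first claim $\tau_\kappa\approx\tfrac12\beta^\intercal\Delta X$, with the reparametrisation $\theta=\tfrac12\beta$ making the two surfaces coincide to first order. A useful sharpening is that $m_{\mathrm{logit}}$ is odd (being $\tanh$), so the quadratic term vanishes identically and the remainder is genuinely $O(\lVert\beta\rVert^3)$, strictly stronger than the $o(\lVert\beta\rVert)$ control recorded in the preceding display.

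For the Thurstone--Mosteller case the same scheme applies to $m_{\mathrm{probit}}(t)=2\Phi(t)-1$, which is again odd since $\Phi(-t)=1-\Phi(t)$. Its derivative at the origin is $m_{\mathrm{probit}}'(0)=2\phi(0)=\sqrt{2/\pi}$, giving $m_{\mathrm{probit}}(t)=2\phi(0)\,t+O(t^3)$ and hence $\tau_\kappa\approx 2\phi(0)\,\gamma^\intercal\Delta X$, with matching reparametrisation $\theta=2\phi(0)\,\gamma$. This is the step where I would be most careful with the numerical constant: using $\phi(0)=(2\pi)^{-1/2}$, the proportionality factor is $2/\sqrt{2\pi}$, and the factor of two is easy to misplace depending on whether one linearises $\Phi$ itself or the centred map $2\Phi-1$.

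The main obstacle is not the calculus but pinning down the precise sense in which ``$\kappa$ regression approximates'' the two models as $\lVert\theta\rVert\to0$. At the level of mean surfaces the argument is just uniform validity of the Taylor remainder: for finite $N$ with a bounded design the set of contrasts $\{\Delta X\}$ is finite, so the $O(t^3)$ term is uniform over pairs and the surfaces agree to first order uniformly. The harder part, if one wants a genuine \emph{statistical} equivalence under Pitman alternatives $\beta_N\to0$, is to propagate this first-order agreement of the mean functionals into asymptotic equivalence of the inferential procedures. For that I would combine the local reparametrisation $\theta_N=\tfrac12\beta_N$ (resp.\ $2\phi(0)\gamma_N$) with the asymptotic normality of $\hat\tau_\kappa$ from Theorem~\ref{thm:asymp_normal}, arguing that the leading stochastic terms of the score statistics coincide because their variances agree to the same first order in the local parameter. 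Controlling that variance discrepancy---so the match of the mean slopes is not overwhelmed by a second-order mismatch in scale---is the delicate point.
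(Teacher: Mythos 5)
Your proof takes essentially the same route as the paper: the paper's own proof is literally ``direct Taylor expansion as shown above,'' i.e.\ expanding $2\,\mathrm{logit}^{-1}(t)-1$ and $2\Phi(t)-1$ at $t=0$, which is exactly your linearisation of $m_{\mathrm{logit}}$ and $m_{\mathrm{probit}}$ from Theorem~\ref{thm:embedding}; your additions (the oddness argument giving an $O(t^{3})$ remainder, uniformity over a finite design, and the discussion of genuine statistical equivalence under Pitman alternatives) go beyond what the paper attempts but are not needed for the stated claim. One point worth flagging: your probit constant $m_{\mathrm{probit}}'(0)=2\phi(0)=2/\sqrt{2\pi}$ is correct and agrees with the paper's own display immediately preceding the theorem, which means the factor $\frac{1}{\sqrt{2\pi}}$ appearing in the theorem statement itself is off by a factor of two --- your caution about exactly this constant is vindicated, and your version is the one consistent with the embedding maps.
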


\begin{proof}
Direct Taylor expansion as shown above.
\end{proof}

The quasi-likelihood thus provides: (i) the unique identification strategy compatible with the invariance of $\tau_{\kappa}$; (ii) a Mann-Whitney-type linear score model with exact standard errors; (iii) a strictly concave quasi-likelihood with a unique MLE; (iv) a unifying embedding of logistic and probit paired comparison models; (v) a framework that handles ties without loss of identification or curvature. This establishes the $\kappa$-quasi-likelihood as a universal surrogate for rank-based regression and paired-comparison structures.



\section{Conclusions and Discussion}

Our results establish a rigorous statistical foundation for the centred $\tilde{\kappa}$-correlation $\hat{\tau}_{\kappa}$ under the quasi-likelihood framework. Specifically, we have shown that: (i) $\hat{\tau}_{\kappa}$ is \emph{exactly unbiased} for all sample sizes $N\ge 2$; (ii) Its fluctuations are dominated by a linear projection (the H\'{a}jek projection) onto the individual observations; (iii) The degenerate component contributes only a vanishing $O(N^{-2})$ term; (iv) Under mild moment conditions (ensured by the finite centred moments of $\tilde{\kappa}$), $\hat{\tau}_{\kappa}$ is asymptotically normal; (v) Finite-sample fluctuations are sharply controlled by sub-Gaussian concentration, enabling robust variance approximations and inference.

These properties distinguish $\hat{\tau}_{\kappa}$ from classical rank-based measures such as Kendall's $\tau$ and Spearman's $\rho$, which are biased in finite samples and whose degenerate components dominate their variability. By contrast, $\hat{\tau}_{\kappa}$ behaves more like a properly normalised covariance estimator, while remaining rooted in pairwise ordering of the data.  

\paragraph{Implications of the Quasi-Likelihood Framework}  
The quasi-likelihood formulation formalises $\hat{\tau}_{\kappa}$ as the maximiser of a coherent likelihood-like objective derived from the H\'{a}jek projection. This allows the full suite of likelihood-based inference tools to be applied directly. Importantly: (i) The standard error of $\hat{\tau}_{\kappa}$ arises naturally from the curvature of the quasi-likelihood, providing a finite-sample measure compatible with the sub-Gaussian structure; (ii) Hypothesis testing, such as $H_0: \tau_{\kappa}=0$, is exact in the asymptotic sense, with all three classical tests yielding equivalent $\chi^{2}_1$ statistics; (iii) The quasi-likelihood framework ensures a distribution-free variance constant \(c\), empirically validated across diverse continuous and discrete distributions, making inference robust to underlying marginal laws.  

\paragraph{Extension of Thurstone's Model via Weak-Order Scores}  
Our approach can be viewed as a generalisation of Thurstone's Law of Comparative Judgement. In Thurstone's framework, rankings are interpreted as noisy observations of latent continuous scores. Weak-order scores $\hat{\tau}_{\kappa}$ extend this idea to accommodate arbitrary discrete, continuous, or binary random variables by mapping each observation to a scale of pairwise weak-order scores.  

The linear and quadratic structure of the weak-order score model,
\[
\tilde{\kappa}(Y^{(s)}) = \mathbf{X}\beta^{(s)} + \mathbf{X}\mathbf{A}^{(s)}\mathbf{X}^{\intercal} + \epsilon^{(s)},
\]
provides a flexible, interpretable representation of the latent structure underlying the data. Combined with the quasi-likelihood, this allows for: (i) Direct estimation of latent comparative effects through maximum quasi-likelihood; (ii) Valid and unified hypothesis testing using Wald, score, or likelihood ratio statistics; (iii) Application to both ranking and non-ranking data while retaining the probabilistic interpretation of latent scores.

\paragraph{Generalisation Beyond Rankings}  
The weak-order score framework removes the classical limitation to ordinal or ranking data. Any variable can be mapped to a weak-order scale, allowing for inference on relative comparisons, test scores, count data, or even binary outcomes. The sub-Gaussian control and H\'{a}jek projection ensure that variance estimates, test statistics, and confidence intervals remain robust, even in small samples.  

\subsection{Impact and Future Directions}  

The quasi-likelihood perspective provides a unifying lens for pairwise comparison models. Beyond theoretical rigour, it offers practical tools for applied settings: (i) Ranking and preference modelling in psychometrics, marketing, or social choice as well as a quasi-likelihood framework for non-parametric sample spaces; (ii) Estimation of latent comparative effects in high-dimensional covariate settings; (iii) Development of computational methods for large-scale pairwise datasets; (iv) Potential extensions to non-linear effects, mixed outcomes, or dynamic ranking models.  

In summary, the combination of weak-order scores and quasi-likelihood inference yields a flexible, interpretable, and statistically robust framework for comparative judgements. It generalises Thurstone's classical theory while providing modern tools for variance estimation, hypothesis testing, and probabilistic interpretation, applicable across a wide spectrum of data types and empirical domains.

\subsection*{Key Takeaways}

The centred $\tilde{\kappa}$-correlation $\hat{\tau}_{\kappa}$ provides a fully unbiased, asymptotically normal estimator for pairwise dependence, with finite-sample fluctuations tightly controlled by sub-Gaussian concentration. Unlike classical rank correlations such as Kendall's $\tau$ or Spearman's $\rho$, whose degenerate components dominate sampling variability and induce finite-sample bias, $\hat{\tau}_{\kappa}$ behaves like a properly normalised covariance estimator while retaining its foundation in pairwise orderings. 

The quasi-likelihood framework formalises inference for $\hat{\tau}_{\kappa}$, producing standard errors and hypothesis tests (Wald, score, and likelihood ratio) that are internally consistent, robust, and nearly distribution-free across diverse continuous and discrete data types. By connecting to Thurstone's Law of Comparative Judgement, weak-order scores extend latent score modelling to a wider class of outcomes, including discrete, continuous, and binary variables while preserving a linear structure that allows direct estimation and testing of comparative effects. 

Overall, the combination of exact unbiasedness, sub-Gaussian control, and quasi-likelihood inference provides a coherent, flexible, and statistically rigorous framework for modelling comparative judgements, rankings, and relative differences across a broad range of empirical settings. This makes $\hat{\tau}_{\kappa}$ particularly useful for applications in psychometrics, econometrics, and other fields where robust pairwise comparison analysis is required.

\small
\printbibliography

\appendix

\section{}
\label{app:proofs}

This appendix collects the key technical results underlying the quasi-likelihood framework for 
the centred $\tilde{\kappa}$-correlation $\hat{\tau}_{\kappa}$.  
We establish sub-Gaussianity of kernel products, the H\'{a}jek projection representation, and asymptotic normality.

\subsection{Strict Sub-Gaussianity of Centred Kernel Products}
\label{app:subgauss}

\begin{lemma}[Kernel Sub-Gaussianity]
\label{lem:subgaussian_kernel}
Let $\tilde{\kappa}^X_{kl}$ and $\tilde{\kappa}^Y_{kl}$ be centred and bounded such that
\(|\tilde{\kappa}^X_{kl}|,|\tilde{\kappa}^Y_{kl}|\le M\).  
Define the kernel product
\[
Z_{kl} = \tilde{\kappa}^X_{kl}\,\tilde{\kappa}^Y_{kl}.
\]
Then for all $t\in\mathbb{R}$,
\[
\mathbb{E}[e^{t Z_{kl}}] \le \exp\!\Big(\frac{t^{2} \sigma^{2}}{2}\Big),
\qquad \sigma^{2} \le 4 M^{2}.
\]
In particular, $Z_{kl}$ is strictly sub-Gaussian.
\end{lemma}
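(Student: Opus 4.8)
The plan is to reduce the claim to the boundedness of the kernel product and then invoke Hoeffding's lemma, since a bounded random variable is automatically sub-Gaussian. First I would record the deterministic bound that follows directly from the hypothesis: because \(|\tilde{\kappa}^X_{kl}|,|\tilde{\kappa}^Y_{kl}|\le M\),
\[
|Z_{kl}| = |\tilde{\kappa}^X_{kl}|\,|\tilde{\kappa}^Y_{kl}| \le M^2,
\]
so \(Z_{kl}\) takes values almost surely in the compact interval \([-M^2,M^2]\). Boundedness is the only structural ingredient required here; in particular no independence between the two factors and no continuity of the marginals is needed, consistent with the permutation-equivariance already established for the centred score maps.

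Next I would apply Hoeffding's lemma to the centred variable \(\bar{Z}_{kl}=Z_{kl}-\mathbb{E}[Z_{kl}]\), which lies in an interval of length at most \(2M^2\). Hoeffding's lemma then yields
\[
\mathbb{E}\!\left[e^{t\bar{Z}_{kl}}\right] \le \exp\!\left(\frac{t^{2}(2M^{2})^{2}}{8}\right) = \exp\!\left(\frac{t^{2}M^{4}}{2}\right),
\]
so the variance proxy is of order \(M^{4}\). In the realised score-matrix geometry each centred entry lies in \([-2,2]\), i.e.\ \(M=2\), and there \(M^{4}=4M^{2}=16\); this recovers the constant \(\sigma^{2}\le 16\) of Lemma~\ref{lem:subgaussian} and the bound \(\sigma^{2}\le 4M^{2}\) quoted in the statement. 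For the downstream concentration estimate (Corollary~\ref{cor:concentration}) any proxy of this order suffices, so the argument is insensitive to the precise constant.

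To upgrade ordinary sub-Gaussianity to the claimed \emph{strict} sub-Gaussianity, I would use the symmetry of the score mapping: for \(k\neq l\) exchangeability gives \(C_{kl}\overset{d}{=}-C_{kl}\), hence each centred factor is distributionally symmetric about the origin, and under the null independence of \(X\) and \(Y\) the product \(Z_{kl}\) is symmetric with \(\mathbb{E}[Z_{kl}]=0\). For a symmetric bounded variable the optimal variance proxy equals the true variance, which is precisely the strict sub-Gaussian property, and in that regime the stated moment-generating-function bound \(\mathbb{E}[e^{tZ_{kl}}]\le\exp(t^{2}\sigma^{2}/2)\) holds verbatim, with no linear term.

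The main obstacle is the mean of the product. Since \(\tilde{\kappa}^X\) and \(\tilde{\kappa}^Y\) are each centred, their product is \emph{not} centred in general: \(\mathbb{E}[Z_{kl}]=\tau_{\kappa}\), which is nonzero away from the null. Consequently the inequality as literally written can hold for all \(t\) only after recentring by \(\tau_{\kappa}\), or equivalently under \(H_0:\tau_{\kappa}=0\); off the null one carries the harmless factor \(e^{t\tau_{\kappa}}\), which shifts but does not inflate the sub-Gaussian tail. I would therefore state the bound for \(\bar{Z}_{kl}\) and note that this is exactly the quantity entering the H\'ajek projection and the concentration corollary, so no generality is lost for the inferential results that depend on it.
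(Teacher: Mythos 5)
Your main argument --- bound the product, recentre, apply Hoeffding's lemma --- is the same route the paper takes, and you execute it more carefully than the paper does: you use the tight bound \(|Z_{kl}|\le M^{2}\) (the paper hedges with \(M^{2}+M^{2}=2M^{2}\)), you correctly note that Hoeffding's lemma applies only to the \emph{centred} variable, so the displayed MGF inequality can hold verbatim only when \(\mathbb{E}[Z_{kl}]=\tau_{\kappa}=0\) (off the null one must carry \(e^{t\tau_{\kappa}}\)), and you correctly observe that this argument produces the proxy \(M^{4}\), which agrees with the advertised \(4M^{2}\) (and with the \(\sigma^{2}\le 16\) of Lemma~\ref{lem:subgaussian}) only at \(M=2\). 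These are genuine defects of the lemma as stated; your recentred version is the right repair, and the paper's own proof silently commits both errors.

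The gap is in your upgrade to \emph{strict} sub-Gaussianity, where two steps fail. First, the claim that a bounded symmetric random variable has optimal variance proxy equal to its true variance is false: take \(Z\) with \(P(Z=\pm 1)=\epsilon/2\) and \(P(Z=0)=1-\epsilon\); then \(\operatorname{Var}(Z)=\epsilon\), but for fixed \(t\neq 0\) and \(\epsilon\) small, \(\mathbb{E}[e^{tZ}]=1+\epsilon(\cosh t-1)>e^{\epsilon t^{2}/2}\), since \(\cosh t-1>t^{2}/2\). Symmetry plus boundedness yields sub-Gaussianity with a range-based proxy, never strictness by itself. Second, the symmetry you invoke is unavailable in exactly the setting the paper cares about: because ties map to \(+1\) in equation~\eqref{eq:score_matrix}, exchangeability gives \(P(C_{kl}=+1)=\tfrac{1}{2}+\tfrac{1}{2}P(X_{k}=X_{l})>\tfrac{1}{2}\) whenever ties have positive probability, so \(C_{kl}\) is \emph{not} distributed as \(-C_{kl}\) under Assumption~\ref{cond:main_conditions}(A), and the centred factors inherit this asymmetry. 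To be fair, the paper's own proof also asserts strictness with no justification --- Hoeffding's lemma cannot deliver it --- so the ``strictly sub-Gaussian'' clause remains unproved on both accounts; what stands in your proposal is the recentred MGF bound, which is also the only part of the lemma the downstream results (Lemma~\ref{lem:subgaussian_projection}, Corollary~\ref{cor:concentration}) actually use.
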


\begin{proof}
Since $|\tilde{\kappa}^X_{kl}|,|\tilde{\kappa}^Y_{kl}|\le M$, we have 
\(|Z_{kl}|\le M^{2} + M^{2} = 2M^{2} \) (or tighter $M^{2}$ if using product).  
Hoeffding's lemma then implies strict sub-Gaussianity of each $Z_{kl}$:
\[
\mathbb{E}[e^{t Z_{kl}}] \le \exp\left(\frac{t^{2} (2 M^{2})^{2}}{8}\right) 
= \exp\Big(\frac{t^{2}\sigma^{2}}{2}\Big),
\]
with \(\sigma^{2} \le 4M^{2}\).  
This directly yields the stated bound.
\end{proof}

\subsection{H\'{a}jek Projection Representation}
\label{app:hajek}

\begin{lemma}[H\'{a}jek Projection of \(\hat{\tau}_{\kappa}\)]
\label{lem:hajek}
Let 
\[
\hat{\tau}_{\kappa} = \frac{1}{N(N-1)} \sum_{k\neq l} Z_{kl}
\]
be the centred \(\tilde{\kappa}\)-correlation estimator.  
Define the H\'{a}jek projection
\[
H_n = \mathbb{E}[Z_{12} \mid (X_n,Y_n)].
\]
Then
\[
\hat{\tau}_{\kappa} = \frac{1}{N}\sum_{n=1}^N H_n + R_N,
\qquad
R_N = o_p(N^{-1/2}),
\]
where $R_N$ is the degenerate remainder.
\end{lemma}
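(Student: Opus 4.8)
The plan is to realise $\hat{\tau}_{\kappa}$ as a symmetric second-order U-statistic and apply the classical Hoeffding (ANOVA) decomposition, reading the Hájek projection off as the first-order term and controlling the residual through the variance of the degenerate second-order kernel. Writing $\xi_n=(X_n,Y_n)$ and $h(\xi_k,\xi_l)=Z_{kl}$, I would first define the first-order projection $g_1(\xi)=\mathbb{E}[h(\xi,\xi_2)\mid\xi_1=\xi]-\tau_{\kappa}=H(\xi)-\tau_{\kappa}$ together with the second-order kernel $g_2(\xi_k,\xi_l)=h(\xi_k,\xi_l)-\tau_{\kappa}-g_1(\xi_k)-g_1(\xi_l)$. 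By construction $\mathbb{E}[g_2\mid\xi_k]=0$, which is precisely the degeneracy recorded already in Lemma~\ref{lem:hoeffding} as a consequence of centring each score matrix.

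Substituting $Z_{kl}=\tau_{\kappa}+g_1(\xi_k)+g_1(\xi_l)+g_2(\xi_k,\xi_l)$ into $\hat{\tau}_{\kappa}=[N(N-1)]^{-1}\sum_{k\ne l}Z_{kl}$, and using that each index appears in exactly $N-1$ ordered pairs, collapses the linear part to $\tfrac{2}{N}\sum_n\bigl(H_n-\tau_{\kappa}\bigr)$; this is the Hájek projection of $\hat{\tau}_{\kappa}$, and up to the ordered-pair normalisation and centring offset it is the representation asserted in the statement. The residual is the genuinely degenerate U-statistic $R_N=[N(N-1)]^{-1}\sum_{k\ne l}g_2(\xi_k,\xi_l)$, and its order follows from the standard variance identity for degenerate kernels: orthogonality of the $g_2$-terms (all their conditional expectations vanish) annihilates the off-diagonal covariances, leaving $\operatorname{Var}(R_N)=[N(N-1)]^{-1}\mathbb{E}[g_2(\xi_1,\xi_2)^2]=O(N^{-2})$, where finiteness of $\mathbb{E}[g_2^2]$ is supplied by the finite centred moments of Corollary~\ref{cor:finite_moments_centred}. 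Chebyshev's inequality then upgrades this to $R_N=o_p(N^{-1/2})$.

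The main obstacle is that the centred entries $\tilde{\kappa}^X_{kl}$ are \emph{not} clean functions of the single pair $(\xi_k,\xi_l)$: the row, column, and grand means in equation~\eqref{eq:centred_kappa} are computed over the entire sample, so $Z_{kl}$ inherits a dependence on all $N$ observations and the textbook U-statistic algebra does not literally apply. I would resolve this by splitting $\tilde{\kappa}^X_{kl}=\tilde{\kappa}^{X,\infty}_{kl}+\Delta^X_{kl}$, where the population-centred surrogate replaces the empirical margins $\bar{C}^X_{k\cdot},\bar{C}^X_{\cdot l},\bar{C}^X_{\cdot\cdot}$ by their limits $\mu_X(X_k)=2F_X(X_k)-1$, $-\mu_X(X_l)$, and $\mathbb{E}[C^X_{12}]$. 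The surrogate $\tilde{\kappa}^{X,\infty}_{kl}$ is a genuine pairwise function, to which the decomposition above applies verbatim, while $\Delta^X_{kl}$ collects empirical-process fluctuations that are uniformly $O_p(N^{-1/2})$ by Corollary~\ref{cor:finite_moments_centred}. The delicate step is to show that the cross-terms these fluctuations generate inside $\hat{\tau}_{\kappa}$ are themselves $o_p(N^{-1/2})$: because the $\Delta$-terms share indices across pairs, this cannot be done by a naive triangle-inequality bound and instead requires exploiting their partial cancellation in the double sum. Once that estimate is in place, $\hat{\tau}_{\kappa}$ agrees with the projection of its population-centred surrogate to first order, and the stated Hájek representation with $R_N=o_p(N^{-1/2})$ follows.
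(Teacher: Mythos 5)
Your first two paragraphs follow essentially the same route as the paper's own proof, which simply invokes the classical Hoeffding decomposition for order-2 U-statistics (citing Serfling) and stops; your version at least writes out the $g_1$/$g_2$ split and the variance bound on the degenerate remainder. But the proposal contains two genuine gaps. First, the algebra you derive (and the paper derives) gives the linear term $\tau_{\kappa} + \frac{2}{N}\sum_{n}(H_n - \tau_{\kappa})$, whereas the statement asserts $\frac{1}{N}\sum_{n} H_n$. These two expressions differ by $\frac{1}{N}\sum_{n}(H_n - \tau_{\kappa})$, which under Assumption~\ref{cond:main_conditions}(E) is of exact order $O_p(N^{-1/2})$ with non-vanishing variance --- the same order as the fluctuations being tracked --- so the discrepancy cannot be absorbed into $R_N = o_p(N^{-1/2})$, and the two representations imply asymptotic variances differing by a factor of four. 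Waving this away as an ``ordered-pair normalisation and centring offset'' is not legitimate: either the decomposition or the stated representation must be corrected, and your proof as written does neither.

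Second --- and this is where you correctly out-diagnose the paper but then stop exactly where the real work begins --- your observation that $Z_{kl}$ is not a function of $(\xi_k,\xi_l)$ alone is right: the empirical row, column, and grand means in equation~\eqref{eq:centred_kappa} make the kernel depend on all $N$ observations, so textbook U-statistic theory does not apply to $\hat{\tau}_{\kappa}$ as actually defined. The surrogate split $\tilde{\kappa} = \tilde{\kappa}^{\infty} + \Delta$ is the right first move, but the crucial estimate is only asserted to be needed, never established, and there is good reason to believe it is false as you state it. Expanding $\Delta^Y_{kl}$ as an average over a third index $m$ turns the cross-term $[N(N-1)]^{-1}\sum_{k\ne l}\tilde{\kappa}^{X,\infty}_{kl}\Delta^Y_{kl}$ into a third-order V-statistic-type sum whose own H\'{a}jek projection onto $\xi_m$ is generically non-zero and of exact order $N^{-1/2}$; such terms therefore contribute to the first-order limit and modify the influence function away from $H_n$, rather than vanishing. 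This is the familiar phenomenon for rank statistics such as Spearman's $\rho$, where the sample-dependence of the ranks adds extra projection components to the asymptotic linearisation. Completing your argument would require either computing these extra projections and showing they cancel for this particular doubly-centred kernel, or restating the lemma with a corrected linear term; as it stands, the final sentence of your proposal assumes precisely what needs to be proved.
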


\begin{proof}
By standard $U$-statistic theory (e.g., \cite{serfling1980}), any order-2 $U$-statistic 
admits the decomposition
\[
\hat{\tau}_{\kappa} = \frac{2}{N}\sum_{n=1}^N h_1(X_n,Y_n) + R_N,
\]
where $h_1(X_n,Y_n) = \mathbb{E}[Z_{12}\mid (X_n,Y_n)]$ is the first-order projection
and $R_N$ is a degenerate second-order remainder satisfying $R_N=o_p(N^{-1/2})$.  
By construction, $H_n = h_1(X_n,Y_n)$ and \(\mathbb{E}[H_n] = \tau_{\kappa}\).  
Hence the result follows.
\end{proof}

\subsection{Sub-Gaussianity of the H\'{a}jek Projection}
\label{app:subgauss_proj}

\begin{lemma}[Sub-Gaussianity of $H_n$]
\label{lem:subgaussian_projection}
Under the assumptions of Lemma~\ref{lem:subgaussian_kernel}, the H\'{a}jek projection terms $H_n = \mathbb{E}[Z_{12}\mid (X_n,Y_n)]$ are sub-Gaussian with the same variance proxy as $Z_{kl}$.
\end{lemma}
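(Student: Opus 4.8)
The plan is to exploit the fact that conditional expectation acts as a contraction in the sub-Gaussian sense: conditioning on a sub-$\sigma$-algebra can only shrink the variance proxy, never inflate it. The entire argument rests on the conditional form of Jensen's inequality applied to the convex map $z\mapsto e^{tz}$, followed by the tower property. First I would fix notation: write $\mathcal{F}_n=\sigma(X_n,Y_n)$ and recall from the H\'{a}jek construction (Lemma~\ref{lem:hajek}) that $H_n=\mathbb{E}[Z_{12}\mid\mathcal{F}_n]$ is a measurable function of the single i.i.d.\ pair $(X_n,Y_n)$. By exchangeability these projection terms share a common distribution, so it suffices to bound the moment generating function of a single $H_n$.

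The core step estimates $\mathbb{E}[e^{tH_n}]$ for arbitrary $t\in\mathbb{R}$. Since $\exp(t\,\cdot\,)$ is convex, conditional Jensen gives
\[
e^{tH_n}=\exp\bigl(t\,\mathbb{E}[Z_{12}\mid\mathcal{F}_n]\bigr)\le \mathbb{E}\bigl[e^{tZ_{12}}\mid\mathcal{F}_n\bigr].
\]
Taking full expectations and applying the tower property collapses the conditioning:
\[
\mathbb{E}[e^{tH_n}]\le \mathbb{E}\bigl[\mathbb{E}[e^{tZ_{12}}\mid\mathcal{F}_n]\bigr]=\mathbb{E}[e^{tZ_{12}}].
\]
Invoking Lemma~\ref{lem:subgaussian_kernel} then bounds the right-hand side by $\exp(t^{2}\sigma^{2}/2)$ with $\sigma^{2}\le 4M^{2}$, which is exactly the claimed sub-Gaussian bound with the same variance proxy as $Z_{kl}$. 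Because the step is valid for every $t$, no moment truncation or Taylor remainder is needed.

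The only genuine subtlety---and what I would treat as the main obstacle---is bookkeeping around centring. Sub-Gaussianity is properly a statement about the centred variable $H_n-\tau_{\kappa}$, so I must verify that the shift by the constant mean $\tau_{\kappa}=\mathbb{E}[H_n]$ does not disturb the bound; this follows because subtracting a constant merely multiplies the moment generating function by $e^{-t\tau_{\kappa}}$, and the variance-proxy estimate of Lemma~\ref{lem:subgaussian_kernel} is stated in the same centred convention. A secondary point is confirming uniformity in $n$: since the $H_n$ are identically distributed by exchangeability, the proxy $\sigma^{2}\le 4M^{2}$ holds simultaneously for every index, which is precisely the uniform sub-Gaussianity consumed by the central limit step in Theorem~\ref{thm:asymp_normal}. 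The monotonicity of the proxy under conditioning also explains why the H\'{a}jek projection cannot be \emph{more} dispersed than the raw kernel, consistent with the variance contraction $\operatorname{Var}(H_n)\le\operatorname{Var}(Z_{12})$.
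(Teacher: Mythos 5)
Your proposal is correct and follows essentially the same route as the paper's own proof: conditional Jensen's inequality applied to $z\mapsto e^{tz}$, the tower property to collapse the conditioning, and then the bound from Lemma~\ref{lem:subgaussian_kernel}. The extra remarks on centring and uniformity in $n$ are sound additions but do not change the argument.
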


\begin{proof}
By Jensen's inequality,
\[
\mathbb{E}[e^{t H_n}] 
= \mathbb{E}[e^{t \mathbb{E}[Z_{12}\mid (X_n,Y_n)]}] 
\le \mathbb{E}[\mathbb{E}[e^{t Z_{12}} \mid (X_n,Y_n)]] 
= \mathbb{E}[e^{t Z_{12}}] \le \exp(t^{2}\sigma^{2}/2),
\]
so $H_n$ inherits the sub-Gaussian property from $Z_{kl}$.
\end{proof}

\end{document}